\documentclass[runningheads,orivec]{llncs}
\usepackage[T1]{fontenc}

\usepackage{graphicx}
\usepackage[normalem]{ulem}
\usepackage{amsmath}
\usepackage{amssymb}
\usepackage{mdframed}
\usepackage{hyperref}
\usepackage{tabularx}
\usepackage{nicefrac}
\usepackage{marginnote}
\usepackage{marginnote}
\usepackage{xcolor}
\usepackage{thm-restate}
\usepackage[title]{appendix}
\usepackage[most]{tcolorbox}

\usepackage{color}

\usepackage{listings}

\newcommand{\CodeSymbol}[1]{\textcolor{orange}{#1}}
\lstdefinelanguage{ocaml}{
  literate={=}{{\CodeSymbol{=}}}1
           {->}{{\CodeSymbol{->}}}1
           {<-}{{\CodeSymbol{<-}}}1,
  keywords={type, mutable, and, list, of, bool, option, ref, let, then, if, function, string},
  keywordstyle=\color{blue}\bfseries,
  keywords=[2]{Type},
  keywordstyle=[2]\color{blue}\bfseries,
  identifierstyle=\color{black},
  sensitive=false,
  mathescape=true,
  comment=[l]{//},
  morecomment=[n]{(*}{*)},
  commentstyle=\color{purple}\ttfamily,
  stringstyle=\color{red}\ttfamily,
  morestring=[b]',
  morestring=[b]"
}

\newcommand{\refitem}[2]{\autoref{#1}.\ref{#2}}

\makeatletter
\providecommand{\leftsquigarrow}{%
  \mathrel{\mathpalette\reflect@squig\relax}%
}
\newcommand{\reflect@squig}[2]{%
  \reflectbox{$\m@th#1\rightsquigarrow$}%
}
\makeatother

\newcommand{\lrtrans}{%
  \mathrel{%
    \vcenter{\offinterlineskip
      \ialign{##\cr$\leftsquigarrow$\cr\noalign{\kern0.5pt}$\rightsquigarrow$\cr}%
    }%
  }%
}

\newcolumntype{C}{>{\centering\arraybackslash}X}

\newcounter{notecounter}[section]

\newcounter{notecounternp}[section]

\newcommand{\tgets}{\mkern-4mu\gets\mkern-4mu}
\newcommand{\rback}[1]{{#1}_\downarrow}
\newcommand{\crback}[1]{\underline{{#1}}_\downarrow}

\newcommand{\mone}{\to_{m_1}}
\newcommand{\mtwo}{\to_{m_2}}
\newcommand{\crumb}[1]{\underline{{#1}}}
\newcommand{\es}[2]{\left[ {#1} \tgets {#2} \right]}
\newcommand{\ess}[1]{\left[ {*} \tgets {#1} \right]}
\newcommand{\sub}[2]{\{ {#1} \tgets {#2} \}}
\newcommand{\plug}[1]{\langle {#1} \rangle}
\newcommand{\size}[1]{\left|{#1}\right|}
\newcommand{\len}[1]{\left|{#1}\right|_{len}}
\def\hole{\plug{\cdot}}
\DeclareMathOperator{\fv}{fv}
\DeclareMathOperator{\bv}{bv}
\DeclareMathOperator{\dom}{dom}

\def\bvr{$\to_{\beta_v}$}
\newcommand\crumbling{finest crumbling}

\newcommand{\mmoner}{\rightsquigarrow_{m_1}}
\newcommand{\mmonel}{ \leadsto_{m^b_1} }
\newcommand{\mmonelr}{ \overset{m_1^b}{\underset{m_1}{\lrtrans}} }

\newcommand{\mmtwolr}{ \overset{m_2^b}{\underset{m_2}{\lrtrans}} }
\newcommand{\mmtwor}{\rightsquigarrow_{m_2}}
\newcommand{\mmtwol}{\rightsquigarrow_{m_2^b}}

\newcommand{\sealr}{ \overset{sea^b}{\underset{sea}{\lrtrans}} }
\newcommand{\sear}{\rightsquigarrow_{sea}}
\newcommand{\seal}{\rightsquigarrow_{sea^b}}

\newcommand\rcr{\leadsto_{f}}

\newcommand\rcrinv{\leadsto_{b}}

\newcommand\rcrfull{\leadsto_{rCr}}

\newcommand{\rbackm}[1]{{#1}_\Downarrow}
\newcommand{\initm}[1]{\iota\left({#1}\right)}
\newcommand\hist{\mathcal{H}}

\def\hsep{\vspace{1ex} \hrule \vspace{1ex}}

\begin{document}

\title{A Reversible Crumbling Abstract Machine for Plotkin's Call-by-Value\thanks{Partially funded by the INdAM/GNCS project MARQ (CUP\_E53C24001950001)}}

\author{
	Nicol\`o Pizzo\inst{1}\orcidID{0009-0003-4658-0178} \and
	Claudio Sacerdoti Coen\inst{1}\orcidID{0000-0002-4360-6016}
}

\institute{Alma Mater Studiorum - Università di Bologna}

\maketitle     

\begin{abstract}
	Landauer's embeddings enable the reversibility of computations for non-reversible programming languages, augmenting each intermediate state with enough data to reconstruct the previous state. An interesting research question is therefore to try to reduce the space overhead required.
In this work we propose a Landauer's embedding for Plotkin's call-by-value calculus (CbV). In order to control the computational complexity of CbV and turn the number of $\beta$-steps into a cost model, CbV is typically implemented via reduction machines.
We show that one machine, that has not received much attention, exhibits a particularly compact Landauer's embedding, requiring only constant space overhead for each step.

\keywords{ lambda-calculus \and call-by-value \and abstract machines \and reversible computation.}

\end{abstract}

\section{Introduction}
The goal of this paper is to equip Plotkin's weak Call-by-Value \cite{plotkinCallbynameCallbyvalueLcalculus1975} calculus with a Landauer's embedding~\cite{landauerIrreversibilityHeatGeneration1961b}. Doing so in a naive way would result in an expensive cost, at each step linear in the size of the term to be reduced, as for each reduction step we need to remember the position of the redex and enough information to recover it from the reduct. For example, to recover $(\lambda x.y)M$ from its reduct $y$ one needs to remember all of $M$.
Moreover, it is well-known~\cite{accattoliExponentialsSubstitutionsCost2023} that under weak CbV there are families of $\lambda$-terms that incur in the phenomenon of size-explosion, where intermediate terms may grow exponentially in the number of steps with respect to the initial term. Therefore nobody implements the $\lambda$-calculus naively. Consider for example the term $(\lambda x. x x \ldots x) v$ that reduces to $v v \ldots v $: the cost of this step entails the cost of copying $v$ as many times as the number of $x$, and when the size of the argument $v$ is roughly equal to the size of the abstraction, the total cost is quadratic in the size of the term (and the Landauer's embedding would require a linear amount of space to recover the redex).

All efficient implementations of $\lambda$-calculi avoid the size explosion by replacing immediate substitutions with some form of
explicit or delayed substitution, eventually collected in larger data structures called environments. With explicit substitutions
$(\lambda x. x x \ldots x) v$ reduces in constant time to $(x x \ldots x)\es{x}{v}$ where $\es{x}{v}$ is sometimes written as
\verb|let x = v in t|. It is also obvious that a Landauer's embedding now only needs to record the position of the reduct to fully restore
the redex. However, remembering the position of a redex still needs a linear amount of information on the size of the term and one
can wonder if it is possible to use only a constant size. Moreover some explicitly substituted occurrences need to trigger the substitution
anyway to advance in the computation and thus more ingredients are required.

\paragraph{Abstract Machines.}
Abstract reduction machines provide for efficient implementations of $\lambda$-calculi where all details are taken care. In particular an
abstract machine uses explicit substitutions (ES), controls their firing and implements steps to search for the next redex. The most efficient
machines for CbV have a complexity that is only bi-linear in the number of reduction steps of the calculus and in the size of the initial
term to be evaluated. Among them there is a family of machines that are called Crumbling Abstract Machines (CAM)~\cite{accattoliCrumblingAbstractMachines2019}.

In this work we show a new variant of CAM whose complexity is indeed bi-linear and that admits a particularly compact Landauer's embedding,
requiring just additional O(1) space for each machine step. We call the resulting machine the Reversible Crumbling Abstract Machine (RCAM).

\paragraph{Crumbling.}
Equipping $\lambda$-calculus with sharing enables a representation of terms where a sharing point can be placed wherever the constructor of a term occurs. This is the same principle used for A-normal forms~\cite{flanaganEssenceCompilingContinuations2004}. For example, the term $(yx)(xy)$ can be represented as $(z_1z_2)\es{z_1}{yx}\es{z_2}{xy}$.
The crumbling technique \cite{accattoliCrumblingAbstractMachines2019} applies this concept hereditarily to completely avoid the immediate nesting of applications: for instance $(x_1(x_2\lambda x_3.x_3))x_4$ is written as $(y_1x_4)\es{y_1}{x_1y_2}\es{y_2}{x_2\lambda x_3.x_3}$. The various variants of crumbling differ in what is still allowed in the terms to be substituted. For example, in some variants $\es{y_2}{x_2\lambda x_3.x_3}$ is not allowed and must become $\es{y_2}{x_2y_3}\es{y_3}{\lambda x_3.x_3}$. Not every variant has been exploited in the literature so far: they are introduced and studied when interesting properties of them are noticed, like in~\cite{accattoliPositiveSharingAbstract2025} where one variant is noticed to be related to Miller's positive $\lambda$-calculus \cite{millerPositivePerspectiveTerm2023}.

This work builds upon the concept of crumbling abstract machines presented in \cite{accattoliCrumblingAbstractMachines2019}, and aims at showing a new variant of the crumbling for which the Landauer's embedding becomes trivial and economic. We dub the variant the \emph{\crumbling\ }because it is the one that inserts the highest number of explicit substitutions by only allowing as terms to be substituted either
$\lambda$-abstractions or applications of a variable to another variable. In particular, by ruling out also the case of a variable substituted by another variable, it fully embraces the notion of practical variables~\cite{accattoliValueVariables2014} by ruling out variables from the
set of values where they semantically belong, leading to speed ups in implementations of reduction machines.

This work shows that the \crumbling\ is also an interesting variant, as it enables to achieve a Landauer's embedding with a reasonable spatial overhead.

\paragraph{Contributions.}
The contributions of this work consist in presenting an economic Landauer's embedding for Plotkin's closed CbV by means of a Landauer's
embedding for a new reduction machine for the \crumbling, that we call \emph{Reversible Crumbling Abstract Machine (RCAM)}. In addition to
showing the reversibility of the machine, we establish the usual properties of crumbling machines, i.e. that the machine implements
correctly the calculus and that the computational complexity of a computation made of forward steps only is bi-linear in the number of
reduction steps of the calculus and the size of the initial term to be reduced.

\definecolor{green}{RGB}{0, 153, 51}

\section{The calculus} \label{sec:calculus}
In this section we first introduce Plotkin's (closed) CbV calculus in its right-to-left variant, describing the basic notions employed by the calculus. We then move on to present the \crumbling\ -- a variant of the crumbling technique -- that is part of the focus of this work. 

\subsection{Plotkin's call-by-value calculus} \label{sec:plotkin-calculus}
\begin{figure}[!t]
	\begin{mdframed}
		\centering
		\begin{tabularx}{\textwidth}{CC}
			\textsc{Terms}
      & \textsc{Values} \\

			$u, t ::= v \mid ut$
      & $v := x \mid \lambda x. t$ 
		\end{tabularx}
    \begin{tabularx}{\textwidth}{CC}
      \textsc{Right v-context}
      & \textsc{Reduction Rule} \\
			
      $R ::= \hole \mid tR \mid Rv$
      & $R \plug{(\lambda x. t) v} \to_{\beta_v} R\plug{t \sub{x}{v}}$
    \end{tabularx}
	\end{mdframed}

	\caption{Plotkin's Closed CbV calculus}
	\label{fig:plot-calculus}
\end{figure}

In \emph{Plotkin's CbV ($\lambda_{Plot}$)} terms are defined mutually with values: a term is either a value, or an application of terms; values are either variables or $\lambda$-abstractions. In this work we specifically address Plotkin's closed (weak) CbV.
In this paper we mainly use the notion of (weak) \textit{right v-context} (see \autoref{fig:plot-calculus}) where holes are outside abstractions and on the right of the hole may only appear values. 

\paragraph{Evaluating Terms.}
Terms are evaluated in weak (i.e. outside abstractions) Call-by-Value: a $\to_{\beta_v}$ reduction step is triggered by plugging in a context the application of a value to a $\lambda$-abstraction, called a \emph{redex}.
The reduction step then results in plugging the body of the abstraction where we {\em immediately substitute} the bound variable with the applied value. The particular shape of right v-contexts forces the rightmost redex to be the first evaluated. To change the evaluation order
one would just pick left v-contexts instead and, mutatis mutandis, all the rest of the paper would stand.

An important property of Plotkin's closed CbV is \textit{harmony} that at once characterizes normal forms and rules out
stuck computations.
\begin{restatable}[Plotkin's CbV harmony]{lemma}{plotkinharmony}
	Let $t$ be a closed term. Then $t$ it is $\beta_v$-normal if and only if $t$ is a value.
\end{restatable}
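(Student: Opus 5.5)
We prove the two directions separately. Only the direction ``normal implies value'' needs closedness.

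For the easy direction, assume $t$ is a value. We show that no right v-context $R$ and redex $(\lambda x.u)v$ satisfy $t = R\plug{(\lambda x.u)v}$. By case analysis on $R$: if $R = \hole$, then $t$ is an application, which contradicts $t$ being a value. If $R = sR'$ or $R = R'v'$, then $t$ is again an application. In either case $t$ is neither a variable nor an abstraction. Since right v-contexts never enter abstractions, this exhausts all cases. Hence $t$ is $\beta_v$-normal.

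For the converse, we prove the contrapositive by structural induction on $t$: every closed term that is not a value has a $\beta_v$-redex in a right v-context. Closedness is preserved by taking subterms outside binders, so the induction hypothesis applies to the immediate subterms of an application. Variables and abstractions are values, so the only case is $t = us$ with $u,s$ closed. We split on $s$.
\begin{itemize}
\item If $s$ is not a value, the induction hypothesis gives $s = R\plug{r}$ with $r$ a redex. Then $t = (uR)\plug{r}$, and $uR$ is a right v-context by the grammar $R ::= tR$.
\item If $s$ is a value and $u$ is not a value, the induction hypothesis gives $u = R\plug{r}$. Then $t = (Rs)\plug{r}$, and $Rs$ is a right v-context because $s$ is a value.
\item If both $u$ and $s$ are values, closedness rules out $u$ being a variable, so $u = \lambda x.u'$. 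Then $t$ is itself a redex in the context $\hole$.
\end{itemize}

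The only delicate point is using closedness in the last case: an open term such as $y\,v$ would be a stuck normal non-value. A lesser point is checking that the constructed contexts respect the shape of right v-contexts. In particular, the left-context case must wait until $s$ is a value, which matches the right-to-left strategy.
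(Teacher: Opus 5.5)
The paper gives no proof of this lemma: it is stated as a standard property of Plotkin's closed CbV, is not restated in the appendix, and is only invoked in the Halt lemma for the crumbled calculus. Your proof is correct and complete, and it is the standard argument. The value-to-normal direction is a case analysis on $R$ and, as you note, needs no closedness. The converse is a structural induction whose case split ($s$ not a value; then $u$ not a value with $s$ a value; then both values) follows exactly the grammar $R ::= \hole \mid tR \mid Rv$, with closedness used only to rule out a variable in head position. That last step is the Plotkin-level counterpart of what the paper does for crumbles. In its proof that normal closed crumbles are v-crumbles, closedness (via the lemma giving $x,y \in \dom(E_2)$ for an entry $\es{z}{xy}$) forces $x$ to be bound to an abstraction in the v-environment, so a redex is always available.
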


\subsection{The finest crumbled calculus} \label{sec:crumbled-calculus}
\begin{figure}[!t]
	\begin{mdframed}
		\centering
		\begin{tabularx}{\linewidth}{CC}
			\textsc{Bites}
      & \textsc{Crumbled Values} \\
			$b ::= xy \mid v$ 
      & $v ::= \lambda x. \ess{y} \mid \lambda x. E$ \\ [1ex]

      \textsc{Environments}
      & \textsc{Environment Context} \\
      $E ::= \epsilon \mid E\es{x}{b}$
      & $C ::= E\es{x}{\hole}$
		\end{tabularx}

		\hsep
		\textsc{Reduction Rules}
		\begin{align*}
			E \es{z}{xy} E_v & \mone E \es{z}{b'}E'E_v \tag {*}    \\
			E \es{z}{xy} E_v & \mtwo E \es{z}{E_v(y'_1)}E_v \tag{\#} \\ 
		\end{align*}
    \vspace{-8ex}
    \[ \to_{Cr} = \to_{m_1} \cup \to_{m_2} \]
		\vspace{-3ex}
		\begin{alignat*}{5}
                  & (*)\   &  & E_v(x)^\alpha &&= \lambda x_1. \ess b E \quad &  & (\ess{b} E) \sub{x_1}{y} &&= \ess {b'} E' \\
                  & (\#)\  &  & E_v(x) &&= \lambda x_1. \ess{y_1} \qquad         &  & \ess{y_1} \sub{x_1}{y} &&= \ess {y'_1}
		\end{alignat*}
    \

	\end{mdframed}
	\caption{Finest crumbled calculus}
	\label{fig:crumbled-calculus}
\end{figure}

The idea behind crumbling consists in avoiding the nesting of applications, so that we can write terms such as $(tu)s$ and $t(us)$ as $(xs)\es{x}{tu}$ and $(tx)\es{x}{us}$ by using explicit substitutions: this is the most common way for representing the crumbling, and represents the most idiomatic way for representing environment entries.
Many forms of crumbling have been presented, from more coarse 
in \cite{accattoliCrumblingAbstractMachines2019}, to finer ones in \cite{accattoliPositiveSharingAbstract2025}.
The form we present -- that we call {\em \crumbling\ }-- is an even finer form of crumbling where an explicit substitution may only appear either as an application of variables or as a $\lambda$-abstraction.

We add to this representation another ingredient: the concept of \textit{returning a value} by explicitly assigning it to the special variable $*$ via an explicit substitution. This is reminiscent of Pascal where the return value is the one assigned to the function name in the body of a function. Thanks to $*$ we can get rid of terms altogether and just work with \textit{environments} that are sequences of explicit substitutions. The whole term to be reduced is thus turned into an environment and the same applies to the bodies of abstractions.
We present the concept with the following example.
\begin{example}
  Consider the closed term $t := \lambda x. \lambda y. (xx) (yy)$: by using the fresh variables $z_1, z_2$, the finest crumble representation of the term is
  \[ \ess{\lambda x. \ess{\lambda y. \ess{z_1 z_2} \es{z_1}{x x} \es{z_2}{y y}}} \]
\end{example}
Note that assignments to $*$ only make sense in the leftmost position of an environment.

By the reduction rules defined in \autoref{fig:crumbled-calculus}, the crumbling is stable, i.e. a crumble reduces to another crumble.
The only tricky point will be given by representing and evaluating the identity and constant functions, i.e. $\lambda x. x,\ \lambda x. y$ as
they can only be represented assigning a variable to $*$, which is only allowed in this ad-hoc case since
a variable is neither an abstraction nor an application of a variable to a variable.

The original motivation for crumbling is that it trivializes the search for the next redex, as we visit the entries of the environment in a right-to-left order until we find the redex, independently by the particular strategy (right-to-left vs left-to-right) that we want to capture. 
Consider the following Plotkin term evaluated in a right-to-left order, where $v_1$, $v_2$ and $v_3$ are some values. The colour $\textcolor{blue}{blue}$ indicates the term that is already evaluated, the colour $\textcolor{red}{red}$ indicates the part of the term that is yet to be evaluated, and the colour $\textcolor{green}{green}$ indicates the next term to be evaluated.
\[ 
  \textcolor{red}{(II)}
  \fcolorbox{green}{white} {$\textcolor{blue}{(v_1 v_2)}$} 
  \textcolor{blue}{v_3} 
\]
The right-to-left crumbled representation of the term would be the following 
\[ 
   \textcolor{red}{\ess{z_1 z_2}\!\es{z_1}{\crumb{I}}\!\es{z_2}{z_3 z_4}\!\es{z_3}{\crumb{I}}\!\es{z_4}{z_5 z_6}}
   \textcolor{green}{\es{z_5}{z_7 z_8}} 
   \textcolor{blue}{\es{z_7}{\crumb{v_1}}\!\es{z_8}{\crumb{v_2}}\!\es{z_6}{\crumb{v_3}}}
\]
On the other hand, the same Plotkin term evaluated with a left-to-right strategy would be "coloured" as follows
\[ 
  \fcolorbox{green}{white} {$\textcolor{blue}{(II)}$} 
  \textcolor{red}{(v_1 v_2) v_3} 
\]
Whereas its left-to-right crumbled representation would be
\[ 
  \textcolor{red}{ \ess{z_1 z_2}\!\es{z_2}{z_3 z_4}\!\es{z_4}{\crumb{v_3}}\!\es{z_3}{z_5 z_6}\!\es{z_6}{\crumb{v_2}}\!\es{z_5}{\crumb{v_1}} }
  \textcolor{green}{ \es{z_1}{z_7 z_8} }
  \textcolor{blue}{\es{z_8}{\crumb{I}}\!\es{z_7}{\crumb{I}} }
\]
It is evident then that the strategy adopted to search the next $\beta$-redex is induced by the crumbling process, as it is embedded in the ordering of the concatenation of the environment.
The net effect is that reduction machines for crumbled calculi, that we will introduce later, do not need applicative stacks nor dumps or other additional data structures that are required to implement the search for the next redex in the scenario of non crumbled calculi.

\paragraph{Syntax: crumbled forms and environment contexts.}
Terms from $\lambda_{Plot}$ are replaced by crumbles, which take the form of an environment.
An environment is defined as a sequence of explicit substitution of bites for variables; a bite is either a crumbled value or an application of two variables. 
\begin{definition}[Crumble]
  A crumble $E$ is an environment where the leftmost explicit substitution is in the form $\ess{b}$, i.e. $E = \ess{b}E'$
\end{definition}

We also introduce \textit{environment contexts} for which the plugging operation is defined as $C\plug{b} = (E \es{z}{\hole})\plug{b} = E \es{z}{b}$.
A crumbled value is either the crumbling of an identity or constant function, or the crumbling of an abstraction whose body is a generic term. 
Note that this distinction is needed for ensuring that every term has a corresponding crumble.
For example, $\lambda x. y$ has crumbled form $\ess{\lambda x. \ess{y}}$: obviously, this crumble could not be represented with the rule we give for generic abstractions, so we need to explicitly define syntax and rules for these terms.

We postpone the discussion on the evaluation of crumbled forms to \autoref{sect:creval} after the presentation of the mutual translations between $\lambda_{Plot}$-terms
and their corresponding crumbled forms.

\subsection{The \crumbling\ translation and its inverse}
Let $\Lambda$ and $\crumb{\Lambda}$ denote respectively the set of all $\lambda_{Plot}$-terms and the set of all finest crumbled forms.
We define the crumbling translation $\crumb{\cdot} : \Lambda \to \crumb{\Lambda}$ in order to translate a term from the $\lambda_{Plot}$ calculus to its crumbled representation.
Together with this translation function we define the \textit{read-back} function $\rback{\cdot} : \crumb{\Lambda} \to \Lambda$ that is the inverse function of the crumbling, used to translate a crumble back to its $\lambda_{Plot}$ representation.
The formal definitions of the functions are summarized in \autoref{fig:crumbled-forms}. 
Note that the read-back of an environment context may not always be a context: as a simple example, let $C := \ess{\lambda x. \ess{x x} \es{x}{\hole}}$ then $\rback{C} = \lambda x. \hole \hole$ that is not a context. We will however provide some invariants so that all environment contexts we consider will read-back to Plotkin's right v-contexts.

\begin{figure}[!t]
	\begin{mdframed}
		\centering
		\textsc{Crumbling}
		\begin{gather*}
			\crumb{xy} = \ess{xy}
			\qquad \crumb{\lambda x. y} = \es * {\lambda x. \es * y}
			\qquad \crumb{\lambda x. t} = \ess{\lambda x. \crumb{t}} \\
		\end{gather*} \vspace{-4em}
    \begin{alignat*}{2}
      \crumb{uy} & = \es * {x y} \es{x}{b} E \quad
      && \text{ with $x$ fresh, } \crumb{u} = \ess{b}E \\
            \quad \crumb{xu'} & = \es * {xy} \es {y} {b'} E' \quad
      && \text{ with $y$ fresh,  } \crumb{u'} = \ess{b'}E' \\
            \quad \crumb{uu'} & = \ess{x y} \es{x }{b}E \es{y}{b'}E' \quad 
      && \text{ with $x$, $y$ fresh,  } \crumb{u} = \ess{b}E, \crumb{u'} = \ess{b'}E'
		\end{alignat*}
		\hrule
		\vspace{1ex}
		\textsc{Read-back}
		\begin{gather*}
			\rback{ (xy) } = xy
			\qquad \rback{(\lambda x. \es * y)} = \lambda x. y
			\qquad \rback{(\lambda x. E)} = \lambda x. \rback E \\
			\rback{ \es * b } = \rback b
			\qquad \rback{ (E \es x b) } = \rback E \sub{x}{\rback{b}} \\
			\rback{\ess{\hole}} = \hole
			\qquad \rback{(E\es{z}{\hole})} = \rback{E} \sub{z}{\hole}
		\end{gather*}
	\end{mdframed}
        \caption{Translation from $\lambda_{Prot}$-terms to crumbled forms and back}
	\label{fig:crumbled-forms}
\end{figure}

\begin{figure}[!t]
  \small
  \begin{mdframed}
    \centering
    \textsc{Size of Terms}     
    \begin{tabularx}{\textwidth}{CCC}
      $\size{x} = 1$
      & $\size{\lambda x. t} = \size{t} + 1$
      & $\size{uu'} = \size{u} + \size{u'} + 1$
    \end{tabularx}

    \vspace{1ex}
    \textsc{Size of bites}
    \begin{tabularx}{\textwidth}{CCC}
      $\size{\lambda x. \ess{y}} = 2$
      & $\size{\lambda x. E} = \size{E} + 1$
      & $\size{xy} = 2$
    \end{tabularx}

    \vspace{1ex}
    \begin{tabular}{l ll@{\quad\quad}lll}
    \textsc{Size of Environments}
      \qquad & $\size{\epsilon}$ & $= 0$
      \qquad & $\size{E \es{z}{b}}$ & $= \size{E}$ & $ + \size{b}$ \\

    \textsc{Length of Environments}
      \qquad & $\len{\epsilon}$ & $= 0$
      \qquad & $\len{E \es{z}{b}}$ & $= \len{E}$ & $ +\, 1$
    \end{tabular}

    \vspace{1ex}
    \hrule
    \vspace{.5ex}
    \centering
    \textsc{Domains}

    \begin{tabular}{ll}
      $\dom(\epsilon)$ 
        & $ = \emptyset$ \\
      $\dom(E \es{x}{b})$
        & $ = \dom(E) \cup \{x\}$
    \end{tabular}

    \textsc{Free Variables}

    \begin{tabular}{ll}
      $\fv(\epsilon)$
        & $ = \emptyset$ \\
      
      $\fv(E\es{z}{\lambda x. E'})$
        & $ = \fv(E)\!\setminus\!\{z\} \cup (\fv(E')\!\setminus\!\{x\})$ \\
      
      $\fv(E \es{z}{xy})\!$ 
        & $ =\!\fv(E)\!\setminus\!\{z\} \cup \{x,y\}$ \\

      $\fv(E\es{z}{\lambda x. \ess{y}}) $ 
        & $= \fv(E)\!\setminus\!\{z\} \cup (\{y\}\!\setminus\!\{x\})$
    \end{tabular}

    \textsc{Bound Variables}

    \begin{tabular}{ll}
      $\bv(\epsilon)$
        & $ = \emptyset $ \\

      $\bv(E\es{z}{\lambda x. E'})$ 
        & $ = \bv(E) \cup \{x\} \cup \bv(E')$ \\
      
      $\bv(E \es{z}{xy})$
        & $ = \bv(E)$ \\
      
      $\bv(E\es{z}{\lambda x. \ess{y}})$ 
        & $ = \bv(E) \cup \{x\}$
    \end{tabular}
  \end{mdframed}
  \caption{Measures for terms, bites and environments; bound and free variables, domains}
  \label{fig:size-and-length}
\end{figure}

\paragraph{Properties of crumbling.}
We now define some notions that characterize the properties of crumbling; we will also show that these properties become invariants of the reduction on crumbles.

In the top half of~\autoref{fig:size-and-length} we define the measures of size for terms and environments, together with the definition of length of an environment, that simply counts the number of entries it contains.

The definitions of free variables and bound variables are extended from $\lambda_{Plot}$ to environments with the auxiliary notion of \textit{domains}. The formal definitions are shown in the bottom half of~\autoref{fig:size-and-length}.

\begin{definition}[Well-named crumbles]A crumble $E$ is \emph{well-named} if all the variables appearing in $\dom(E)$ are pair-wise distinct and $\dom(E) \cap \bv(E) = \emptyset$ and if $E = E_1 \es{z}{b} E_2$ then $z \notin \fv(b) \cup \fv(E_2)$.
\end{definition}

\begin{restatable}[Properties of Crumbling]{lemma}{crumblingprop}
	For every closed term t:
  \label{lem:crumb-properties}
	\begin{enumerate}
		\item Freshness: $\crumb{t}$ is well-named.
		\item Closure: $\crumb{t}$ is closed.
		\item Bodies: each body in $\crumb{t}$ is the translation of a term.
    \item Size: the size $| \crumb{t} |$ is linear with respect to the size of the initial term $t$. \label{lem:crumb-size}
		\item Contextual decoding: if $\crumb{t} = E_1 \es{z}{xy} E_2$ then $\rback{(E_1 \es{z}{\hole})}$ is a right v-context.
	\end{enumerate}
\end{restatable}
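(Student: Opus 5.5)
The plan is a structural induction on $t$ that follows the clauses of the crumbling translation in \autoref{fig:crumbled-forms}. Since $t$ is closed and bodies of abstractions are open, I first strengthen the statement to arbitrary (possibly open) terms. The generalized claims are: $\crumb{t}$ is well-named, with every variable in $\dom(\crumb{t})\cup\bv(\crumb{t})$ fresh; $\fv(\crumb{t}) = \fv(t)$, which gives Closure as the special case $\fv(t)=\emptyset$; Bodies; $\size{\crumb{t}} \le c\cdot\size{t}$ for a fixed constant $c$; and Contextual decoding. For the induction I also record an auxiliary readback invariant, $\rback{\crumb{t}} = t$. I would prove it first by the same induction, since the substitution clause $\rback{(E\es{x}{b})} = \rback{E}\sub{x}{\rback b}$ applied to a freshly introduced $x$ exactly rebuilds the application.

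\textbf{Freshness, Closure, Bodies, Size.} These are routine once the fresh names are chosen globally distinct from everything else. Bodies is immediate, because the only abstraction clauses produce $\lambda x.\crumb{t'}$ or $\lambda x.\ess{y}$, and the latter is the translation of $y$ under the convention for the identity and constant cases. For Size, each clause adds at most one new entry of size $2$ (an $xy$ bite) on top of the recursive crumbles, and it reuses their first bite $b$ in place of $\ess{b}$. So $\size{\crumb{uu'}} \le \size{\crumb u}+\size{\crumb{u'}}+2$, the abstraction clause adds $1$, and variables in argument position cost nothing extra. This gives $\size{\crumb t}\le 2\size t$. The one subtle point for well-namedness is the condition ``$z\notin \fv(b)\cup\fv(E_2)$''. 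In $\es{x}{b}E\es{y}{b'}E'$, the variable $x$ is fresh and the variables defined in $E$ occur only in $b$ and $E$ by the induction hypothesis, so no name defined on the left occurs to its right.

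\textbf{Contextual decoding.} I expect this to be the main obstacle. It requires a statement about every $xy$ entry, including the top one and the ones reached through concatenation, and it is false inside abstraction bodies. That is consistent with the statement, which only considers entries of the top-level environment, since environment entries inside a $\lambda$ are not decompositions $E_1\es{z}{xy}E_2$ of $\crumb t$ itself. I would prove the general claim that if $\crumb{t}=E_1\es{z}{xy}E_2$ then $\rback{(E_1\es{z}{\hole})}$ is a right v-context $R$ and $\rback{\crumb t} = R\plug{\rback{(\ess{xy}E_2)}}$ suitably substituted.

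The cases follow the clause $\crumb{uu'}=\ess{xy}\es{x}{b}E\es{y}{b'}E'$. If the entry is the head, the context is $\hole$. If it lies in $\es{y}{b'}E'$, the part belonging to $u'$, then by induction $u'$'s part reads back to a right v-context $R'$, and the whole reads back to $u R'$, which is of the form $tR$. If it lies in $\es{x}{b}E$, the part belonging to $u$, then the whole reads back to $R\,\rback{\crumb{u'}}$. Here I need that everything to the right, namely $\es{y}{b'}E'$, reads back to a value. This fails in general, so the statement as posed seems to need the right-to-left reading. The readback applies the rightmost substitution innermost, so $R\,u'$ with $u'$ unevaluated is still of shape $tR$ only when the hole is on the right. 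I would therefore check carefully, via the substitution order in $\rback{(E\sub{x}{\cdot})}$, that the hole ends up in a $tR$ or $Rv$ position. If the $u$-side case really produces $R\,u'$ with $u'$ not a value, I would restrict the claim to entries $xy$ whose right part $E_2$ contains only abstraction entries, which is the case the machine actually uses. Those entries read back to values, so the context is of shape $Rv$.
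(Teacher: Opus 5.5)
\textbf{Items 1--4.} Your treatment of Freshness, Closure, Bodies and Size follows the paper's structural induction. The paper also proves $\fv(\crumb t)=\fv(t)$ and $\size{\crumb t}\le 2\size t$. There are two small slips. First, your generalization to open terms must exclude bare variables, for which $\crumb{\cdot}$ is undefined. Second, $\bv(\crumb t)$ consists of the $\lambda$-binders of $t$, which are not fresh; only $\dom(\crumb t)$ is.

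\textbf{The gap: Contextual decoding.} Take the case where the entry $\es{z}{xy}$ lies in the $u$-part of $\crumb{uu'} = \ess{x'y'}\es{x'}{b}E\es{y'}{b'}E'$. You claim the context reads back to $R\,\rback{\crumb{u'}}$. You cannot show this is a right v-context, so you fall back to restricting the claim to decompositions whose $E_2$ contains only abstractions. That fallback proves a strictly weaker statement than item 5, which quantifies over every decomposition. It also rests on a misreading of the definitions. By definition $\rback{(E_1\es{z}{\hole})} = \rback{E_1}\sub{z}{\hole}$, which involves only the prefix $E_1$ and never the suffix $E_2$.

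\textbf{The missing observation.} When the hole is inside $\es{x'}{b}E$, the whole $u'$-part $\es{y'}{b'}E'$ belongs to $E_2$. So $y'$ is never substituted and survives as a free variable:
\[ \rback{(\ess{x'y'}\es{x'}{\hole})} = \hole\,y', \qquad \rback{(\ess{x'y'}\es{x'}{b}E_1\es{z}{\hole})} = (x'y')\sub{x'}{\rback{(\ess{b}E_1\es{z}{\hole})}} = R\,y', \]
where $R$ is a right v-context by the induction hypothesis on $u$. Variables are values in $\lambda_{Plot}$, so both contexts have the shape $Rv$ and are right v-contexts, whether or not $u'$ is a value. This is exactly the paper's argument in \autoref{lem:crumb-t-rvc}.

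The symmetric $u'$-side case gives $\crback{u}\,R$, which has shape $tR$ for any term. So neither your strengthened claim nor the read-back invariant is needed for item 5.

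The replacement of the unevaluated $u'$ by the variable $y'$ only matters later, in principal projection. There the context is instantiated by $\sigma_{E_v}$ for a v-environment $E_v$. That substitutes values for variables and hence preserves right v-contexts.
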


\begin{figure}[!t]
  \begin{mdframed}
    \centering
    \textsc{Example}
    \begin{align*}
      & \ess{z_1 z_2} \es{z_1}{\lambda x. \ess{x z_3} \es{z_3}{x x}} \es{z_2}{\lambda y. \ess{y}} \\
      \mone & \ess{z_2 z_4} \es{z_4}{z_2 z_2} \es{z_1}{\lambda x. \ess{x z_3} \es{z_3}{x x}} \es{z_2}{\lambda y. \ess{y}} \\
      \mtwo & \ess{z_2 z_4} \es{z_4}{\lambda y. \ess{y}} \es{z_1}{\lambda x. \ess{x z_3} \es{z_3}{x x}} \es{z_2}{\lambda y. \ess{y}} \\
      \mtwo & \ess{\lambda y. \ess{y}} \es{z_4}{\lambda y. \ess{y}} \es{z_1}{\lambda x. \ess{x z_3} \es{z_3}{x x}} \es{z_2}{\lambda y. \ess{y}}
    \end{align*}
    
  \end{mdframed}

  \caption{Example of evaluation of the crumbling of the term $(\lambda x. x (x x)) \lambda y. y$}
  \label{fig:example-cr-eval}
\end{figure}

\vspace{-0.25cm}

\subsection{Evaluating crumbles}\label{sect:creval}
The reduction rules for the finest crumbled calculus are also given in~\autoref{fig:crumbled-calculus}.
All environment marked $E_v$ in the paper are required to be \textit{v-environment}:
\begin{definition}[v-environment]An environment is a $v-environment$ if
each entry maps a variable to a value, i.e. is in the form $\es{x}{v}$.
\end{definition}
The use of $E_v$ in the rules forces evaluation to proceed in a right-to-left order and therefore the longest v-environment suffix of the environment represents the already fully-evaluated part.

By $E_v(x)^\alpha$ we denote the crumbled value obtained from $E_v(x)$ by $\alpha$-renaming the variables in the domain of $E_v(x)$ with fresh ones, so that the resulting concatenation of environments is well-named.

The reduction rules are triggered either if the abstraction is an identity/constant function (in which case $m_2$ is triggered) or if it is a more generic function (in which case $m_1$ is triggered).
The rule $m_1$ substitutes the $\beta$-redex with the bite (on which is performed the immediate substitution $\sub{x_1}{y}$) returned by the $\lambda$-abstraction and concatenates the tail of the environment contained in the body of the abstraction mapped to the variable $x$. 
Since the rule $m_2$ is triggered when the abstraction is an identity or constant function, it substitutes the $\beta$-redex with the value mapped to variable of the body on which is performed the immediate substitution $\sub{x_1}{y}$ and does not need to perform concatenation.
\autoref{fig:example-cr-eval} shows an example of how evaluation works.
\begin{definition}[Reachable Crumble]
  A crumble is reachable if it is obtained by a sequence of $\to_{Cr}$-steps starting from the translation $\crumb{t}$ of a closed term $t$.
\end{definition}
\begin{definition}[v-crumble]
  A v-crumble is a v-environment that is also a crumble, i.e. its in the form $\ess{v}E_v$.
\end{definition}
\begin{restatable}[Harmony for \crumbling]{lemma}{crumblingharmony}
	\label{lem:crumble-harmony}
  A reachable crumble $E$ is normal if and only if it is a v-crumble.
\end{restatable}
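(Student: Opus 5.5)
The plan is to prove the two directions separately. The right-to-left one is immediate; the left-to-right one needs a few invariants that reduction preserves on reachable crumbles.

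\emph{If $E$ is a v-crumble, then $E$ is normal.} Both $\mone$ and $\mtwo$ require an entry of shape $\es{z}{xy}$ somewhere in $E$. A v-environment has no such entry, since every bite in it is a crumbled value. Hence no rule applies. Note also that reduction never fires inside the body of an abstraction, because the rules act only on top-level entries of the environment.

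\emph{If $E$ is reachable and normal, then $E$ is a v-crumble.} We argue by contraposition. Suppose the reachable crumble $E$ is not a v-environment. Decompose it as $E = E_1\es{z}{b}E_v$, where $E_v$ is its longest v-environment suffix. Then $b$ is not a value, so $b = xy$. To show that $m_1$ or $m_2$ fires, we need:
\begin{enumerate}
\item $x \in \dom(E_v)$;
\item $E_v(x)$ is an abstraction, whose shape is $\lambda x_1.\ess{y_1}$ (so $m_2$ applies) or $\lambda x_1.\ess{b}E'$ (so $m_1$ applies).
\end{enumerate}
The second point follows from the grammar of crumbled values once the first holds. The $\alpha$-renaming and the substitution $\sub{x_1}{y}$ are always defined, so the only real content is that $x$, and likewise $y$, is bound by an entry to the right of $z$.

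To get this, we establish on reachable crumbles the invariant already true of translations by \autoref{lem:crumb-properties}:
\begin{itemize}
\item[(i)] \emph{closure with right-scoping}: whenever $E = E_1\es{z}{b}E_2$, every free variable of $b$ lies in $\dom(E_2)$;
\item[(ii)] well-namedness;
\item[(iii)] every body is the translation of a term, or at least a well-named right-scoped crumble.
\end{itemize}
For $\crumb{t}$ these follow by induction on $t$, inspecting the translation clauses: the fresh variables $x,y$ are introduced to the left of the environments $E,E'$ that define them, and $t$ is closed.

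Preservation under $\mone$ is the main obstacle. The step replaces $\es{z}{xy}$ by $\es{z}{b'}E'$, where $\ess{b'}E' = (\ess{b}E)\sub{x_1}{y}$ is a renamed copy of the body. By (iii) and right-scoping, the free variables of the body's entries lie either inside the body's own domain, which is now placed right after $z$, or are $x_1$, now replaced by $y \in \dom(E_v)$, or are free in the abstraction and hence in $\dom$ of the part of $E_v$ to the right of $x$'s entry. Freshness of the $\alpha$-renaming preserves (ii), and bodies of copied abstractions remain translations up to renaming, which gives (iii). The step $\mtwo$ is simpler: $y_1\sub{x_1}{y}$ is either $y$ or a free variable of the abstraction, and both are in $\dom(E_v)$. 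We then take $E_v(y'_1)$, which is a value by the v-environment hypothesis.

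With the invariant in hand, $x \in \dom(E_v)$, so $E_v(x)$ is a value. Values are abstractions only: variables are excluded as bites, which is the point of the \crumbling. So a rule fires, contradicting normality. Hence $E$ is a v-environment, and it is a crumble by reachability, since the rules keep the leftmost entry on $*$. Therefore $E$ is a v-crumble.
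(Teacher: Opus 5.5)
Your proposal is correct and follows essentially the paper's argument. The converse is immediate, and the forward direction combines the invariant that reachable crumbles are closed (Lemma~\ref{lem:crumb-reachable-closed}, proved by induction on steps) with the fact that in a closed crumble the variables of the rightmost $\es{z}{xy}$ are bound in the v-environment suffix (Lemmas~\ref{lem:lem1} and~\ref{lem:norm-crumb-implies-v}), so a rule fires. Your right-scoping invariant is just closure and Lemma~\ref{lem:lem1} packaged together, and your use of the bodies invariant to ensure $E_v(x)$ matches one of the two rule shapes spells out a point the paper leaves implicit.
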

The next Lemma shows that the properties of the \crumbling\ shown in \autoref{lem:crumb-properties} are preserved during the evaluation.

\begin{restatable}[Invariants of \crumbling]{lemma}{crumblinginvariants}
  \label{lem:crumb-invariants}
  For a reachable crumble $E$:
	\begin{enumerate}
    \item Freshness: $E$ is well-named. \label{lem:invariant-well-naming}
    \item Closure: $\fv{(E)} = \emptyset$. \label{lem:invariant-close}
		\item Bodies: each body occurring in $E$ is a subterm (up to renaming and substitution of variable to variable) of the initial crumble.
    \item Contextual decoding: if $E = E_1 \es{z}{xy} E_2$ then $\rback{(E_1 \es{z}{\hole})}$ is a right v-context. \label{lem:invariant-c-decoding}
	\end{enumerate}
\end{restatable}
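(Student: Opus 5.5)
The proof goes by induction on the length of the $\to_{Cr}$-sequence from $\crumb{t}$ to the reachable crumble $E$. The base case $E = \crumb{t}$ is exactly \autoref{lem:crumb-properties}: freshness, closure and contextual decoding are its items 1, 2 and 5, and the bodies invariant holds trivially. For the inductive step we assume the four invariants for $E$ and consider $E \to_{Cr} E''$. Every step has the shape $E = E_1\es{z}{xy}E_v \to E_1\es{z}{\cdot}\cdots E_v$, where $E_v$ is the longest v-environment suffix. We split into the two cases $\mone$ and $\mtwo$ and check each invariant separately.

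\emph{Freshness and bodies.} For $\mone$, the inserted material $\es{z}{b'}E'$ comes from $E_v(x)^\alpha$. The $\alpha$-renaming makes the new domain variables, and the bound variables inside $b'$ and $E'$, globally fresh. The substitution $\sub{x_1}{y}$ only replaces a bound variable by the variable $y$, and $y$ is not in $\dom(E')$ by freshness. The requirement that $z$ not occur in what follows it is preserved because the fresh names of $\dom(E')$ cannot occur in $E_v$. The condition on $E_1$ is unchanged, since the domain of the whole crumble only gains fresh names. For $\mtwo$, only the bite of $z$ changes, becoming a copy of a value; we argue that the needed renaming of bound variables is part of the copy, or that the copied value is an $\es{*}{y'_1}$-form with no domain entries. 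The bodies invariant is immediate in both cases: new bodies are bodies of $E_v(x)$ up to renaming and a variable-to-variable substitution, and these operations compose with the ones already allowed by the inductive hypothesis.

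\emph{Closure.} For $\mone$, $\fv(E_v(x)) = \emptyset$ relative to the environment, so $\fv(\ess{b}E)\subseteq\{x_1\}\cup\dom(E_v)$. After $\sub{x_1}{y}$, with $y\in\dom(E_v)$ by closure of $E$ and well-naming, every free variable of $\es{z}{b'}E'$ is bound by $E_v$, which stays to the right. For $\mtwo$, $y'_1$ is either $y$ or a variable bound in $E_v$.

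\emph{Contextual decoding.} This is the main obstacle. We must show that for every application entry $E''=E'_1\es{w}{x'y'}E'_2$ the read-back $\rback{(E'_1\es{w}{\hole})}$ is a right v-context. Application entries of $E''$ lie either in $E_1$ or in the freshly inserted $\es{z}{b'}E'$. For the first kind, we plan a lemma stating that read-back commutes with the step: $\rback{(E_1\es{z}{\hole})}$ is the same right v-context $R$ as before, because $E_1$ is untouched and only the filler changes. For the second kind, we decompose $\es{z}{b'}E' = \es{z}{b'}E'_a\es{w}{x'y'}E'_b$. The read-back of the context is then $R$ composed with the read-back of the body context $\ess{b'}E'_a\es{w}{\hole}$. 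The key facts are that the body is the crumbling of a term (by the bodies invariant, up to renaming) and that the step substitutes only values for variables. So item 5 of \autoref{lem:crumb-properties}, applied to that body, gives a right v-context $R'$ stable under value substitution, and $R\plug{R'}$ is a right v-context by closure of right v-contexts under composition. The delicate part is that $E'_b$ only contains entries to the right of $w$ inside the body, and these must read back to values or be themselves pending applications. This is exactly what right-to-left crumbling guarantees, and we will need to strengthen item 5 to hold for crumbles with the substitution $\sub{x_1}{y}$ applied.
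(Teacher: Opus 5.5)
Your skeleton matches the paper's proof. You use induction on the number of $\to_{Cr}$-steps, take the base case from \autoref{lem:crumb-properties}, split on $\mone$/$\mtwo$, and apply the induction hypothesis to application entries inside $E_1$. For entries inside the inserted $\es{z}{b'}E'$, you compose $\rback{(E_1\es{z}{\hole})}$ with the decoding of the body context.

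The step that does not go through as written is freshness for $\mtwo$. The rule stores $E_v(y'_1)$ in $z$ verbatim; only $\mone$ uses an $\alpha$-renamed copy, $E_v(x)^\alpha$. Moreover, $E_v(y'_1)$ is an arbitrary value of $E_v$, not an $\ess{y'_1}$-form; that shape belongs to the body of $E_v(x)$. So both of your alternatives are false.

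For instance, $\crumb{(\lambda x.x)(\lambda y.(yy)y)} = \ess{z_1z_2}\es{z_1}{\lambda x.\ess{x}}\es{z_2}{\lambda y.\ess{wy}\es{w}{yy}}$ makes a $\mtwo$ step, with $y'_1=z_2$, that binds $*$ to the same $\lambda y.\ess{wy}\es{w}{yy}$. The binder $y$ and the inner entry $w$ now each occur twice, so under a Barendregt-style reading of freshness item 1 would simply fail. It holds only because the paper's well-naming is weaker. There, $\dom$ collects top-level entries only, and $\bv$ ignores inner domains. The definition asks only for pairwise distinct $\dom(E)$, for $\dom(E)\cap\bv(E)=\emptyset$, and for each domain variable not to be free in its bite or to its right. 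The copy adds nothing to $\dom$ and adds to $\bv$ only names already in $\bv(E_v)$. So the real check is $z\notin\fv(E_v(y'_1))$. This follows from $z\notin\fv(E_v)\cup\dom(E_v)$, because $\fv(E_v(y'_1))\subseteq\fv(E_v)\cup\dom(E_v)$; this is exactly the paper's argument.

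In the contextual-decoding case, several of your worries are unnecessary. The invariant only reads back the prefix $E_1\es{z}{b'}E'_a\es{w}{\hole}$, so the entries of $E'_b$ never enter. Since $E_v$ is not part of that prefix, you also need no stability of $R'$ under value substitution. For entries inside $E_1$, the context is syntactically that of the source crumble, so no commutation lemma is needed there either.

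Nor must item 5 be strengthened. By \autoref{cor:body-is-a-crumble} and the commutation of crumbling with $\alpha$-renaming, $E_v(x)^\alpha=\lambda x_1.\crumb{t}$ for some $t$. By \autoref{lem:crumble-commute-free}, the inserted material is then $\ess{b'}E'=\crumb{t}\sub{x_1}{y}=\crumb{t\sub{x_1}{y}}$. So \autoref{lem:crumb-t-rvc} applies to it literally, and composition of right v-contexts (\autoref{lem:rvc-comp}) closes the case, exactly as in the paper.
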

 
The bodies invariant is often referred to as the \textit{subterm invariant} \cite{accattoliCrumblingAbstractMachines2019,accattoliPositiveSharingAbstract2025} and is mostly used for the complexity analysis, as it forces the size of the terms duplicated along an evaluation to be bound by the size of the initial term, since the size is not affected by $\alpha$-renaming and variable substitution. Moreover, nobody has ever discovered so far a machine that does not exhibit this invariant but still has a reasonable complexity.
In our work we will also exploit the subterm invariant and the contextual decoding invariant to prove that the transitions $\mone, \mtwo$ project on evaluation steps \bvr in $\lambda_{Plot}$.

In the next section we show that evaluating the crumbles faithfully implements the semantics of $\lambda_{Plot}$, i.e. the \crumbling\ \textit{implements} $\lambda_{Plot}$.

\subsection{The Implementation Theorem}
In \cite{accattoliImplementingOpenCallbyValue2017} it is proven that, given
\begin{itemize}
  \item a calculus to be implemented, whose transition relation, made deterministic if necessary via an evaluation strategy, is $\to$
  \item an abstract machine $M$ given by a set of states and a transition relation $\leadsto_M$ that splits into
	      \begin{itemize}
		      \item principal transitions $\leadsto_p$, corresponding to evaluation steps on the calculus;
		      \item overhead transitions $\leadsto_o$, that take care of looking for the next redex and have no computational
                        meaning in the calculus
	      \end{itemize}
  \item a decoding $\rback{\cdot}$ of states of $M$ into terms
\end{itemize}
$M$ correctly implements $\to$ via $\rback{\cdot}$ whenever $(M, \to, \rback{\cdot})$ forms an \emph{implementation system}, i.e. whenever the following conditions hold true:
\begin{enumerate}
  \item Initialization: there is an encoding $\crumb{\cdot}$ of terms such that $\crback{t} = t$.
  \item Principal projection: $s \leadsto_p s'$ implies $\rback{s} \leadsto \rback{s'}$.
  \item Overhead transparency: $s \leadsto_o s'$ implies $\rback{s} = \rback{s'}$.
	\item Determinism: $\leadsto_M$ is deterministic.
	\item Halt: M final states (to which no transition applies) decode to $\to$-normal terms.
\item Overhead termination: $\leadsto_o$ terminates.
\end{enumerate}

As shown in \cite{accattoliImplementingOpenCallbyValue2017}, a machine implementation is such that there is a perfect match between the number of steps of the strategy and the number of principal transitions of the execution.

\begin{theorem}[Machine Implementation]
  If a machine $M$, a strategy $\to$ on $\lambda$-terms and a decoding $\rback{\cdot}$ form an implementation system, then:
  \begin{enumerate}
    \item Executions to derivations: for any $M$-execution $\rho: \crumb{t} \leadsto_M^* s$ there is a $\to$-derivation $d : t \to^* \rback{s}$.
    \item Derivations to executions: for every $\to$-derivation $d: t \to^* u$ there is an $M$-execution $\rho : \crumb{t} \leadsto_M^* s$ such that $\rback{s} = u$.
    \item Principal Matching: in both previous points the number $\size{\rho}_p$ of principal transitions in $\rho$ is exactly the length $\size{d}$ of the derivation $d$, i.e. $\size{d} = \size{\rho}_p$.
  \end{enumerate}
\end{theorem}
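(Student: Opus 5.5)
The plan is to prove the three points together by induction on the length of the execution (for the first point) and of the derivation (for the second point). The principal-matching count will be tracked along the way. Throughout, the encoding $\crumb{\cdot}$ with $\crback{t}=t$ is the base case.

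\emph{Executions to derivations.} I induct on the number of transitions in $\rho:\crumb{t}\leadsto_M^* s$.
\begin{itemize}
\item If $\rho$ is empty, then $s=\crumb{t}$ and $\rback{s}=t$ by initialization. The empty derivation then works, with $0=0$ principal steps.
\item Otherwise write $\rho=\rho';(s'\leadsto_M s)$. By the induction hypothesis there is $d':t\to^*\rback{s'}$ with $\size{d'}=\size{\rho'}_p$.
\begin{itemize}
\item If the last step is principal, principal projection gives $\rback{s'}\to\rback{s}$, so I extend $d'$ by one step.
\item If it is an overhead step, overhead transparency gives $\rback{s'}=\rback{s}$, so I keep $d'$.
\end{itemize}
In both cases the counts still match.
\end{itemize}

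\emph{Derivations to executions.} I induct on $\size{d}$.
\begin{itemize}
\item For the empty derivation, take the empty execution.
\item Otherwise write $d=d';(u'\to u)$. By the induction hypothesis there is $\rho':\crumb{t}\leadsto_M^* s'$ with $\rback{s'}=u'$ and $\size{\rho'}_p=\size{d'}$. By overhead termination, I run overhead transitions from $s'$ until reaching a state $s''$ that is $\leadsto_o$-normal. By transparency, $\rback{s''}=u'$.
\begin{itemize}
\item If no principal transition applies to $s''$, then $s''$ is final. By halt, $u'$ is $\to$-normal, which contradicts $u'\to u$.
\item Hence some principal step $s''\leadsto_p s$ applies, and by principal projection $u'\to\rback{s}$.
\end{itemize}
The strategy $\to$ is deterministic (it is weak right-to-left CbV via right v-contexts, so the redex is unique), so $\rback{s}=u$. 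The new execution has exactly one more principal transition.
\end{itemize}

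The main obstacle is this second direction. There I must ensure the overhead phase cannot get stuck in a non-final state that still decodes to a reducible term. This is exactly where halt, overhead termination and determinism of the strategy combine. Determinism of $\leadsto_M$ is not strictly needed for existence, but I would note it guarantees the constructed execution is the unique one.
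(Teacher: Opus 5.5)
Your proof is correct and follows the standard argument of the cited Accattoli--Guerrieri work, to which the paper defers instead of giving its own proof: the first point goes by induction on the execution using principal projection and overhead transparency, the second by induction on the derivation using overhead termination, transparency, halt and determinism of $\to$, with the principal count tracked throughout. The one adjustment is that determinism of $\to$ should be invoked as a hypothesis of the framework rather than argued from right-to-left CbV. It is the paper's ``made deterministic if necessary via an evaluation strategy'', and the cited Determinism condition covers both $\to$ and $\leadsto_M$; the theorem is generic and the paper also instantiates it with $\to_{Cr}$.
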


The finest crumbled calculus is not yet an abstract machine, but the implementation theorem is general enough to apply nevertheless.
In particular we can apply the theorem where the states of the machine are the reachable crumbles, the principal transitions are $\mone, \mtwo$ and
the set of overhead transitions is empty, since the discovery of the redex is embedded in the principal transitions. 

\begin{restatable}[Implementation System]{theorem}{crumblingimplementation}
  The \crumbling, Plotkin's weak CbV evaluation $\to_{\beta_v}$ and the read-back $\rback{(\cdot)}$ form an implementation system.
\end{restatable}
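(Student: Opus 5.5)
We check the six conditions of an implementation system one at a time. The states are reachable crumbles, the principal transitions are $\mone$ and $\mtwo$, and there are no overhead transitions. Two conditions are then immediate: \emph{overhead transparency} and \emph{overhead termination} hold vacuously, since $\leadsto_o$ is empty.

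\textbf{Initialization.} We prove $\crback{t} = t$ by structural induction on $t$, following the clauses of \autoref{fig:crumbled-forms}.
\begin{itemize}
\item For the variable-application case $\crumb{xy}=\ess{xy}$, and for the identity/constant case $\lambda x.y$, the read-back is the identity.
\item For $\lambda x.t$ the claim follows from the induction hypothesis on $t$.
\item For the application cases, say $\crumb{uu'} = \ess{xy}\es{x}{b}E\es{y}{b'}E'$, we need an auxiliary fact: the read-back of a concatenation is an iterated substitution. That is, $\rback{(\ess{b_0}E_1E_2)} = \rback{(\ess{b_0}E_1)}\sub{\cdot}{\cdot}$, applied entry by entry.
\item Because of freshness (\autoref{lem:crumb-properties}), the entries of $E$ only bind variables fresh w.r.t.\ $E'$ and $y$, and conversely. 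So the substitutions coming from $E$ act only inside the copy of $b$ substituted for $x$, and those from $E'$ act only inside $b'$.
\item Hence the read-back is $(xy)\sub{x}{\rback{(\ess{b}E)}}\sub{y}{\rback{(\ess{b'}E')}} = uu'$ by the induction hypothesis.
\end{itemize}

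\textbf{Determinism.} Take a reachable crumble and look at its longest v-environment suffix $E_v$. A redex $\es{z}{xy}$ must sit immediately to the left of $E_v$. So there is at most one candidate position, namely the rightmost non-value entry, and it is necessarily an application $xy$.
\begin{itemize}
\item By closure and well-naming (\autoref{lem:crumb-invariants}), $x$ is bound to the right, i.e.\ in $E_v$, and is bound uniquely.
\item Therefore $E_v(x)$ is a determined value. Its shape, generic abstraction versus $\lambda x_1.\ess{y_1}$, selects exactly one of $\mone$ or $\mtwo$.
\item The result is unique up to the choice of fresh names in $E_v(x)^\alpha$, i.e.\ up to $\alpha$-equivalence.
\end{itemize}

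\textbf{Halt.} By \autoref{lem:crumble-harmony}, a final reachable crumble is a v-crumble $\ess{v}E_v$. Its read-back is $\rback{v}$ with substitutions of values for variables, which is a value. By Plotkin harmony, a value is $\beta_v$-normal. (Alternatively: it is a value, hence normal directly.)

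\textbf{Principal projection (main obstacle).} Consider a step $E\es{z}{xy}E_v \mone E\es{z}{b'}E'E_v$, and set $C := E\es{z}{\hole}$.
\begin{itemize}
\item By contextual decoding (\autoref{lem:crumb-invariants}), $\rback{C}$ is a right v-context $R$. We then want a decomposition lemma: $\rback{(C\plug{b}E_v)} = R'\plug{\rback{(\ess{b}E_v)}}$, where the substitutions of $E_v$ also act on $R$.
\item This works because $E_v$ only substitutes values for variables, and right v-contexts are stable under value substitution. The resulting context $R'$ is again a right v-context, and we apply it on both sides of the step.
\item Under $E_v$, the redex $xy$ reads back to $(\lambda x_1.\rback{(\ess{b}E)})\,w$, where $w = \rback{(E_v(y))}$ suitably substituted, and $w$ is a value by closure. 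Its reduct is $\rback{(\ess{b}E)}\sub{x_1}{w}$.
\item On the other side, the reduct reads back to $\rback{(\ess{b}E)\sub{x_1}{y}}$ with $E'$ inlined and then $y$ substituted by $w$. By well-naming (the $\alpha$-renamed $E'$ has fresh domain), this is the same term. The key fact is a substitution-composition lemma: $\rback{(E\sub{x_1}{y})}\sub{y}{w} = \rback{E}\sub{x_1}{w}$, which holds when $y$ is not free in $\rback E$, and $y$ is outside $E$'s scope by freshness.
\item The $\mtwo$ case is analogous and simpler, with $\ess{y_1}$ in place of $\ess{b}E$.
\end{itemize}
We expect the main effort to lie in stating and proving this composition/decomposition lemma for read-back with enough variable hygiene. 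We would prove it by induction on the length of $E_v$, using the invariants of \autoref{lem:crumb-invariants}.
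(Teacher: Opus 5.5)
Your structure matches the paper's proof. Overhead transparency and termination are vacuous. Initialization is by induction with a disjointness lemma for concatenated environments. Determinism comes from the mutually exclusive shapes of $E_v(x)$; your version, arguing uniqueness of the redex position via the longest v-suffix and well-naming, is more complete than the paper's. Halt goes via harmony. Principal projection goes through the decomposition $\rback{(E_1\es{z}{xy}E_v)} = \rback{(E_1\es{z}{\hole})}\sigma_{E_v}\plug{\sigma_{E_v}(x)\,\sigma_{E_v}(y)}$ plus stability of right v-contexts under value substitution. The paper instead detours through ``every body is the crumbling of a term'' and Initialization; your direct read-back of $E_v(x)$ is a harmless simplification. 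The one step that fails as written is the side condition of your key fact. The paper's own proof has the same slip when it asserts $y\notin\fv(t)$.

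It is not true that $y$ lies outside the scope of the body by freshness. $\alpha$-renaming only refreshes names bound inside $E_v(x)$, while earlier $\mone$ steps substitute environment variables into copied bodies. For instance, $\crumb{(\lambda x_1.(\lambda w.x_1w)x_1)(\lambda q.q)} = \ess{z_1z_2}\es{z_1}{\lambda x_1.\ess{a\,x_1}\es{a}{\lambda w.\ess{x_1w}}}\es{z_2}{\lambda q.\ess{q}}$ makes one $\mone$ step to $\ess{a'z_2}\es{a'}{\lambda w'.\ess{z_2w'}}\es{z_1}{\cdots}\es{z_2}{\lambda q.\ess{q}}$. The next $\mone$ redex $a'z_2$ has $y=z_2$ free in the body of $E_v(a')$. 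There $\rback{(E\sub{x_1}{y})}\sub{y}{w}$ is $ww$ while $\rback{E}\sub{x_1}{w}$ is $z_2w$, so your identity fails.

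The repair is to keep $\sigma_{E_v}$ on both sides instead of isolating $\sub{y}{w}$. The left reduct is really $(\rback{(\ess{b}E)}\,\sigma_{E_v})\sub{x_1}{\sigma_{E_v}(y)}$. This equals $\rback{((\ess{b}E)\sub{x_1}{y})}\,\sigma_{E_v}$ with no hypothesis on $y$: the values in the image of $\sigma_{E_v}$ are closed and $x_1\notin\dom(E_v)$, so $\sigma_{E_v}$ itself already replaces every occurrence of $y$ by $\sigma_{E_v}(y)=w$.

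The same remark covers the $\mtwo$ case, where $y_1$ may coincide with $y$. There the reduct entry is $E_v(y_1')$ rather than $y_1'$, so it must also be matched via $\sigma_{E_v}(y_1') = \rback{E_v(y_1')}\,\sigma_{E_v}$.
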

\begin{proof}
  We present just a rough sketch of the proof. For the full proof details see Appendix~\ref{app:proofs-crumbled-implementation}.
  \begin{enumerate}
    \item Initialization: by induction on the shape of $t$. It exploits the well-naming property (see \refitem{lem:crumb-invariants}{lem:invariant-well-naming}).
    \item Principal projection: by the contextual decoding property (see \refitem{lem:crumb-invariants}{lem:invariant-c-decoding}) the read-back of the left hand side of each principal rule is a redex, i.e. the application of a value to an abstraction in a right v-context.
    \item Determinism: to show determinism it suffices to show that the two multiplicative steps cannot be triggered simultaneously, but their side conditions are mutually exclusive.
    \item Halt: by closure property (see \refitem{lem:crumb-invariants}{lem:invariant-close}), if the crumble is normal then it is a v-crumble that reads-back to a value that is normal in Plotkin's CbV.
  \qed
  \end{enumerate}
\end{proof}

\section{The Reversible Crumbling Abstract Machine} \label{sec:rcam}
The finest crumbled calculus shown in the previous section solves the size-explosion issue with the use of explicit substitutions; however the search of the redex is embedded in the reduction rules, so the cost of each transition is not constant yet. 

We present the Reversible Crumbling Abstract Machine (RCAM) that enacts the Landauer's embedding through a history stack, and that presents an explicit transition for searching the next redex.
We then show that the complexity of the RCAM is bi-linear in the number of $\beta_v$-steps and the size of the initial term, that each transition of the machine is reversible and deterministic, and that the amount of additional space added to the history stack entry by each reduction rule is constant.

\subsection{The Machine} \label{sec:rcam-machine}
The Reversible Crumbling Abstract Machine (RCAM) is easily derived from the \crumbling\ calculus by enriching crumbles with an explicit pointer in the machine states that is used to visit the environment entries from right to left until a redex is found. Moreover a history stack is coupled with the crumble under evaluation to allow the reversibility of the transitions. Since the pointer already keeps track of where searching arrived, the entries in the history stack can be very concise: either the information that the pointer was moved, or the bite (of the form $xy$) that triggered the reduction step. The machine is formally defined in \autoref{fig:abstract_machine}.
The state of the machine is represented by three components: the \emph{active environment}, the \emph{evaluated environment} and the \emph{history stack}. The splitting of the environment into active and evaluated corresponds nicely on paper to having a pointer into an environment. Indeed the read-back from machine states to crumbles just concatenates the two, dropping the history. In realistic implementations the pair of active and evaluated environments form a zipper~\cite{zipper} where each cell (explicit substitution) points to the next one that is on the left for the active part and on the right on the evaluated part.
\begin{figure}[!t]
	\begin{mdframed}
		\centering
		\begin{tabularx}{\linewidth}{CC}
			\textsc{History Stack}
      & \textsc{State} \\
			$\mathcal H ::= \epsilon \mid \langle\rangle : \mathcal H \mid \langle x, y\rangle : \mathcal H$ 
      & $S ::= (E, E_v, \mathcal H)$ \\ [1ex]
			
      \textsc{Initialization}
      & \textsc{Read-back} \\
      $\iota(E) = (E, \epsilon, \epsilon)$
      & $\rbackm{(E, E_v, \hist)} = E E_v$
		\end{tabularx}
		\vspace{0.25em}
		\hrule
		\vspace{0.25em}
		\textsc{Transitions}
		{\small
      \renewcommand{\arraystretch}{2.5}
			\begin{tabular}{|c|c|c||c||c|c|c|c}
				\textsc{Act. Env}  
        & \textsc{Ev. Env}  
        & \textsc{Hist}
				& 
        & \textsc{Act. Env}    
        & \textsc{Ev. Env}     
        & \textsc{Hist}         
        & \\ \hline
				
        $E_1 \es{x}{v} $   
          & $ E_v$              
          & $H$                  
          & $\sealr$ 
          & $E_1$                 
          & $\es{x}{v}\!:\!E_v$ 
          & $\langle\rangle\!:\!H$                 
          & \\
				
        $E_1 \es{z}{xy}$ 
          & $E_v$ 
          & $H$ 
          & $\mmonelr$
          & $E_1 \es{z}{b'} E'$
          & $E_v$
          & $\langle x, y\rangle\!:\!H $ 
          & (*)  \\ [1ex]
				
        $E_1 \es{z}{xy}$
          & $E_v$ 
          & $H$ 
          & $\mmtwolr$ 
          & $E_1$
          & $\es z {E_v(y_1')}E_v$
          & $\langle x, y\rangle\!:\!H$
          & (\#)
			\end{tabular}}

		\begin{alignat*}{4}
                  & (*)\   &  & E_v(x)^\alpha &&= \lambda x_1. \es * b E \quad (\es * b E) \sub{x_1}{y} &&= \es * {b'} E' \\
                  & (\#)\  &  & E_v(x) &&= \lambda x_1. \ess{y_1} \quad \ess{y_1} \sub{x_1}{y} &&= \ess{y_1'}
		\end{alignat*}
    \begin{tabularx}{\textwidth}{CC}
      $\rcr =\ \mmoner \cup \mmtwor \cup \sear $
      & $\rcrinv =\ \mmonel \cup \mmtwol \cup \seal$ \\
    \end{tabularx}
      $\rcrfull = \rcr \cup \rcrinv$
	\end{mdframed}
  \caption{Reversible Crumbling Abstract Machine}
	\label{fig:abstract_machine}
\end{figure}

\paragraph{The transitions.} \label{par:rcam-transitions}
It is easy to see that the forward transitions $\mmoner, \mmtwor$ are exactly the $\mone, \mtwo$ transitions of the \crumbling\ calculus once read-back is applied and that
the $\sear$ is the identity up to read-back. In \autoref{sec:rcam-reversibility} we will deeply discuss the reversibility of each transition.
\begin{definition}[Reachable State]
  A state is reachable if it is obtained by a sequence of evaluation steps in $\rcrfull$ starting from the initialization $\initm{E}$ of a closed, well-named crumble $E$.
\end{definition}
The harmony property extends also to the RCAM. Since all machine rules require the evaluated environment to be a v-environment, proving harmony entails proving that in every reachable state the evaluated environment is always a v-environment. 
Therefore a concrete implementation of the machine can avoid to check whether the environment is indeed a v-environment.

\begin{restatable}[Harmony for the RCAM]{lemma}{harmonyrcam}
	\label{lem:harmony-rcam}
  A reachable by forward-only transitions state $S$ is $\rcr$-normal if and only if it is in the form $S = (\epsilon, E_v, \hist)$.
\end{restatable}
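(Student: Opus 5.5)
The argument rests on an auxiliary invariant, proved first: in every state $S=(E,E_v,\hist)$ reachable by forward-only transitions, the evaluated component $E_v$ is a v-environment, and the concatenation $\rbackm{S}=EE_v$ is a reachable crumble in the sense of the finest crumbled calculus. The proof is by induction on the length of the forward execution from $\initm{E_0}$, where $E_0=\crumb{t}$ for a closed term $t$. (If the reachability definition allows arbitrary closed well-named crumbles, the invariant should be phrased using closure and well-naming instead; the argument is the same.)

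\textbf{Base case.} The evaluated part is $\epsilon$, which is trivially a v-environment.

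\textbf{Inductive step.} We inspect the three forward rules.
\begin{itemize}
\item $\sear$ prepends $\es{x}{v}$ to $E_v$, and $v$ is a value.
\item $\mmoner$ leaves $E_v$ untouched.
\item $\mmtwor$ prepends $\es{z}{E_v(y_1')}$. Here $E_v(y_1')$ is a value, because by closure of $EE_v$ (\refitem{lem:crumb-invariants}{lem:invariant-close}) and well-naming the variable $y_1'$ is bound in $E_v$, so looking it up yields a crumbled value.
\end{itemize}
Moreover, $\rbackm{\cdot}$ maps $\mmoner,\mmtwor$ to $\mone,\mtwo$ and $\sear$ to the identity, so $\rbackm{S}$ stays a reachable crumble. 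This second half is what licenses the use of \autoref{lem:crumb-invariants} in the $\mmtwor$ case.

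\textbf{Direction ($\Leftarrow$).} If the active environment is $\epsilon$, no rule applies, since every forward rule requires a nonempty active environment whose last entry is either $\es{x}{v}$ or $\es{z}{xy}$.

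\textbf{Direction ($\Rightarrow$).} Suppose $E\neq\epsilon$, so $E=E_1\es{z}{b}$. We show that $S$ is not normal. If $b$ is a value, $\sear$ applies. Otherwise $b=xy$, and by the invariant $E_v$ is a v-environment. By well-naming and closure of the reachable crumble $E_1\es{z}{xy}E_v$, together with the contextual decoding invariant, the variable $x$ must be defined in $E_v$. Since $E_v$ is a v-environment, $E_v(x)$ is a crumbled value, that is, an abstraction of the form $\lambda x_1.\ess{y_1}$ or $\lambda x_1.E'$. In the first case $\mmtwor$ fires; in the second, after $\alpha$-renaming, $\mmoner$ fires. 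The case split is exhaustive because these are the only two shapes in the grammar of crumbled values.

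\textbf{Main obstacle.} The delicate step is showing that $x$ (and, in the $\mmtwor$ case, $y$) is bound in $E_v$ rather than in $E_1$. Well-naming gives $z\notin\fv(E_2)$ for suffixes, which orders each variable's definition after its uses. Hence the variables occurring in $\es{z}{xy}$ are defined to its right, i.e.\ in $E_v$, and closure guarantees that they are defined somewhere. I would first try to derive this ordering directly from the well-naming clause: $x\in\fv$ of the suffix starting at $\es{z}{xy}$, and the whole crumble is closed, so the binder of $x$ must lie to the right, hence in $E_v$. Note that $x$ cannot be $\lambda$-bound, since it occurs outside any abstraction.
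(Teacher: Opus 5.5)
Your proof is correct and follows the paper's approach. The paper also relies on the RCAM invariants Closure and Rightmost (the evaluated environment is a v-environment) and splits on the last active entry: $\sear$ fires on a value, and $\mmoner$/$\mmtwor$ fire on $xy$ because $x,y \in \dom(E_v)$. The step you flag as delicate is exactly \autoref{lem:lem1}, which follows from closure and the right-to-left binding in the definition of $\fv$ alone, so well-naming and contextual decoding are not needed; also, the applicability of $\mmoner$ tacitly uses that abstraction bodies are crumbles (\autoref{cor:body-is-a-crumble}), which both you and the paper leave implicit.
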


\begin{restatable}[Invariants for the RCAM]{lemma}{invariantsrcam}
  \label{lem:rcam-invariants}
  Let $S$ be a reachable by forward-only transitions state of the RCAM:
  \begin{enumerate}
    \item Freshness: $S$ is well-named.
    \item Closure: $S$ is closed.
    \item Rightmost: $S = (E, E_v, \hist)$ for some environment $E$, v-environment $E_v$ and history stack $\hist$.
  \end{enumerate}
\end{restatable}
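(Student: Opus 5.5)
We prove the three items together by induction on the length of the forward-only execution $\initm{E} \rcr^* S$, where $E$ is a closed, well-named crumble. Well-naming and closure of a state $(E_1, E_v, \hist)$ mean well-naming and closure of its read-back $\rbackm{S} = E_1 E_v$; the history stack does not contribute. In the base case $\initm{E} = (E, \epsilon, \epsilon)$, so $\rbackm{\initm{E}} = E$. It is well-named and closed by hypothesis, and the evaluated environment $\epsilon$ is trivially a v-environment.

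For the inductive step we assume the claim for $S = (E_1, E_v, \hist)$ and case on the forward transition taken.

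\emph{Case $\sear$.} Here $\rbackm{\cdot}$ is unchanged, because $E_1\es{x}{v}$ followed by $E_v$ is the same environment as $E_1$ followed by $\es{x}{v}E_v$. Freshness and closure therefore carry over verbatim. The new evaluated environment $\es{x}{v}E_v$ is a v-environment, since $v$ is a value and $E_v$ is one by the induction hypothesis.

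\emph{Cases $\mmoner$ and $\mmtwor$.} By the remark following \autoref{fig:abstract_machine}, the read-back of these steps is exactly a $\mone$ (resp. $\mtwo$) step of the \crumbling\ calculus on $\rbackm{S}$. The evaluated environment is left unchanged by $\mmoner$, so it remains a v-environment. For $\mmtwor$ it becomes $\es{z}{E_v(y_1')}E_v$. This is a v-environment because $E_v(y'_1)$ is a value, being an entry of the v-environment $E_v$.

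Here a subtlety must be checked: $y_1'$ must belong to $\dom(E_v)$ rather than to $\dom(E_1)$ or be a bound variable. By closure, $y_1' \in \dom(E_1 E_v)$ to the right of $z$. Well-naming forces the variable to be defined to the right of its use site, i.e. in $E_v$. If $y'_1 = y$ the same argument applies to $y$. For freshness and closure we cannot directly invoke \autoref{lem:crumb-invariants}, because that lemma is stated for crumbles reachable from a translation $\crumb{t}$, whereas here $E$ is an arbitrary closed well-named crumble.

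The main obstacle is therefore to re-prove preservation of well-naming and closure under $\mone$/$\mtwo$ directly on $E_1\es{z}{xy}E_v$. For $\mone$ the key facts are the following.
\begin{itemize}
\item The $\alpha$-renaming $E_v(x)^\alpha$ makes every variable of $\dom(E')$ and every bound variable fresh, so pairwise distinctness and disjointness from $\bv$ hold.
\item The substitution $\sub{x_1}{y}$ replaces the only free variable not already bound by the body ($x_1$) by $y$. Since $y$ is defined in $E_v$, to the right of the insertion point, every new occurrence is closed by an entry on its right. The remaining free variables of the body were free in $E_v(x)$, hence also defined in $E_v$.
\item The new entries $\es{z}{b'}E'$ sit between $E_1$ and $E_v$. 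The condition $z\notin \fv(b)\cup\fv(E_2)$ for each entry therefore follows from freshness of the renamed names together with the induction hypothesis for the old entries.
\end{itemize}
The $\mtwo$ case is simpler: the bite $xy$ is replaced by a copy of a value from $E_v$ with no new domain entries. The only delicate point is that copying $E_v(y_1')$ duplicates its bound variables. This clashes with strict well-naming unless that copy is also $\alpha$-renamed, so I would either assume the copy is taken up to $\alpha$ (as the paper implicitly does), or state freshness up to $\alpha$-equivalence of the copied value.
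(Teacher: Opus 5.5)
Your proof is correct and proves the same invariants by the same underlying facts, but it takes a more self-contained route than the paper. The paper's proof is two sentences. Freshness and closure are said to be ``inherited'', i.e.\ it relies on the read-back of a forward run being a $\to_{Cr}$-run (identity for $\sear$) together with the crumble-level invariants of \autoref{lem:crumb-invariants}. Rightmost is justified by saying that the pointer moves only under $\sear$, over an entry that is already a value.

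You instead do a direct induction on the forward execution and re-prove preservation rule by rule. This buys two things.
\begin{itemize}
\item As you note, \autoref{lem:crumb-invariants} is stated for crumbles reachable from some $\crumb{t}$, while RCAM reachability starts from an arbitrary closed, well-named crumble. The appendix preservation arguments do not really depend on the starting point, so your direct argument is what makes ``inherited'' rigorous.
\item $\mmtwor$ also moves the pointer: it pushes $\es{z}{E_v(y_1')}$ onto the evaluated environment, a case the paper's sentence skips. Your check that $y_1'\in\dom(E_v)$, so that $E_v(y_1')$ is a value, fills it.
\end{itemize}

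One correction: your closing worry about $\mmtwor$ and well-naming is unfounded. No $\alpha$-renaming of the copied value and no weakening of the statement ``up to $\alpha$'' is needed. The paper's definition of well-named only asks three things:
\begin{itemize}
\item the top-level $\dom$ is pairwise distinct;
\item $\dom$ is disjoint from the \emph{set} $\bv$;
\item the scoping condition holds.
\end{itemize}
Copying $E_v(y_1')$ adds no domain names. It adds nothing new to $\bv$, since its bound variables already occur in the crumble. For scoping, $z\notin\fv(E_v)$ and $z\notin\dom(E_v)$ together give $z\notin\fv(E_v(y_1'))$, and the same reasoning covers the entries of $E_1$.

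Duplicated $\lambda$-bound variables arise equally under $\mmoner$, whose $\alpha$-renaming touches only the domain variables of the body. So your claim that it makes ``every bound variable'' fresh is stronger than what the rule does, and also stronger than what the invariant requires.
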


\paragraph{The Implementation Theorem.}
We prove now that, when restricted to forward transitions only, the RCAM correctly implements
the finest crumbled calculus. We exploit once again the implementation theorem, this time also
considering overhead transitions.
\begin{restatable}[Implementation Theorem for the RCAM]{lemma}{rcamimplementation}
  \label{lem:impl-rcam}
  The RCAM, $\to_{Cr}$ evaluation and $\rbackm{(\cdot)}$ form an implementation system.
\end{restatable}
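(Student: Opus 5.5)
We verify the six conditions of an implementation system for the machine restricted to forward transitions $\rcr$. The calculus side is the \crumbling\ with $\to_{Cr}$, and the decoding is $\rbackm{(\cdot)}$. The principal transitions are $\mmoner$ and $\mmtwor$, and the only overhead transition is $\sear$. The encoding of a crumble $E$ is $\initm{E} = (E,\epsilon,\epsilon)$. Throughout, states are reachable by forward-only transitions, so we may use \autoref{lem:rcam-invariants}, and in particular the \emph{rightmost} invariant that the evaluated component is always a v-environment.

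\textbf{Initialization and overhead transparency.} Initialization is immediate, since $\rbackm{\initm{E}} = E\epsilon = E$. For overhead transparency, a step $(E_1\es{x}{v}, E_v, \hist) \sear (E_1, \es{x}{v}E_v, \langle\rangle:\hist)$ decodes on both sides to the same environment $E_1\es{x}{v}E_v$, because only the position of the pointer moves.

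\textbf{Principal projection.} Consider a step $(E_1\es{z}{xy}, E_v, \hist) \mmoner (E_1\es{z}{b'}E', E_v, \langle x,y\rangle:\hist)$. By the rightmost invariant, $E_v$ is a v-environment, so the decoding $E_1\es{z}{xy}E_v$ matches the left-hand side of $\mone$ with the same side condition $(*)$. The decoded target $E_1\es{z}{b'}E'E_v$ is exactly the reduct of $\mone$. The case of $\mmtwor$ versus $\mtwo$ is identical under side condition $(\#)$. One point must be checked: the $\alpha$-renaming $E_v(x)^\alpha$ is the same operation in both systems, and the lookup $E_v(x)$ succeeds. This follows from closure together with well-naming, because $x$ must be bound somewhere, and as the side condition requires an abstraction, it must be bound in $E_v$.

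\textbf{Determinism.} Determinism is a case analysis on the rightmost entry of the active environment. If that entry is $\es{x}{v}$, only $\sear$ applies. If it is $\es{z}{xy}$, only $\mmoner$ or $\mmtwor$ can apply, and their side conditions are mutually exclusive: $E_v(x)$ is either of the form $\lambda x_1.\ess{y_1}$ or of the form $\lambda x_1.\ess{b}E$ with a bite $b$, and these are distinct syntactic shapes. If the active environment is empty, nothing applies.

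\textbf{Halt.} For halt we use \autoref{lem:harmony-rcam}. A $\rcr$-normal state has the form $(\epsilon, E_v, \hist)$, which decodes to $E_v$. Since the read-back of a reachable state is a reachable crumble (by principal projection and overhead transparency), $E_v$ is a v-crumble, hence $\to_{Cr}$-normal by \autoref{lem:crumble-harmony}.

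\textbf{Overhead termination.} For overhead termination we use the length of the active environment as a measure. Each $\sear$ step decreases it by one, so any sequence of $\sear$ steps is bounded by $\len{E}$.

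\textbf{Main obstacle.} The main obstacle is the halt condition. It requires that a normal machine state never gets stuck on an entry $\es{z}{xy}$ whose head is not an abstraction in $E_v$, because such a state would decode to a non-value. The plan is to rely on \autoref{lem:harmony-rcam}, whose proof would transfer \autoref{lem:crumble-harmony} through the decoding. The argument combines closure, well-naming, and contextual decoding: the head $x$ of a stuck application is bound in the already evaluated suffix, and that suffix is a v-environment, so $x$ must be bound to an abstraction.
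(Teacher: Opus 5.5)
Your proof is correct. For initialization, overhead transparency, principal projection and halt, it follows the paper, which simply observes that these conditions transfer from the corresponding properties of the finest crumbled calculus, with halt resting on \autoref{lem:harmony-rcam}. Your explicit case analysis for determinism is in fact slightly more complete. It also rules out overlaps between $\sear$ and the principal transitions, which the calculus-level determinism lemma cannot cover, since the calculus has no overhead steps.

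The real difference is overhead termination, which the paper singles out as the only non-trivial point. The paper discharges it through the global count of \autoref{lem:count-sea}. More precisely it needs \autoref{lem:lenrcam}, since the corollary is stated only for normalizing executions; that lemma gives the bound $\size{\rho}_{sea} \leq \len{E} + \size{\rho}_p \cdot L(E)$ on the $\sear$ steps of a whole execution. You instead use a local measure, the length of the active environment, which every $\sear$ step decreases by one. That is all the condition requires, because it concerns sequences of $\sear$ steps alone. It also makes the lemma independent of the complexity section. The paper's bound is stronger, since it also accounts for the growth of the active environment caused by principal steps, and that is exactly what the bi-linear cost analysis needs later.

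Three small remarks.
\begin{enumerate}
\item Your termination bound should refer to the length of the active environment at the start of the $\sear$-sequence. It cannot refer to the initial crumble, because principal steps can lengthen the active environment.
\item In your ``main obstacle'' paragraph, contextual decoding plays no role in this lemma. What actually excludes a stuck entry $\es{z}{xy}$ is the bodies invariant: every body is a crumble, so $E_v(x)$ matches $(*)$ or $(\#)$. Closure is then needed to guarantee $y_1' \in \dom(E_v)$.
\item In principal projection, checking that $E_v(x)$ is defined is redundant, because the side condition is already part of the fired transition.
\end{enumerate}
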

\begin{proof}
  The proof can be found in Appendix~\ref{app:proofs-rcam-rev}.
  The only non trivial part is overhead termination: we will show in Section~\ref{sec:rcam-complexity} that, when restricted to forward transitions only, the number of $\sear$ transitions is bi-linear in the number of principal transitions and the maximal length of an environment in the initial crumble, so the $\sear$ transition terminates. \qed
\end{proof}

\subsection{Reversing the transitions} \label{sec:rcam-reversibility}
The new (backward) transitions $\mmonel, \mmtwol, \seal$ are the reverse of $\mmoner$, $\mmtwor$, $\sear$ respectively, and they exploit the history stack.
From \autoref{fig:abstract_machine} we can see that the history stack will guide us in reversing each evaluation step, as we only have two scenarios:
\begin{itemize}
	\item If the top of the stack is $\langle\rangle$, then the evaluation step was an overhead search transition, so we just need to move the pointer to the next cell on the right.
        \item If the top of the stack is $\plug{x, y}$, since we have saved $x$ on the stack, we know exactly the value $E_v(x)$. From the shape of $E_v(x)$ we also understand if the step to be performed is $\mmonel$ or $\mmtwol$. For the case $\mmonel$ only, from $E_v(x)$ we know the length of $E'$, as $\len{E} = \len{E'}$, and we can remove $\len{E'}$ entries from the environment. Then in both cases we turn the substitution for $z$ into $\es{z}{xy}$ and, in case of $\mmtwol$, shift it back to the active environment.
\end{itemize}
\paragraph{Size of the history entries}
Note that the amount of information we save on the history stack is reasonably minimal, as we need to distinguish which transition led to the current state of the machine. Consider the following state of the machine where we have removed the history stack component
\[ (\epsilon, \ess{\lambda x. \ess{x}} \es{z_1}{\lambda x. \ess{x}} \es{z_2}{\lambda x. \ess{x}}) \]
We may reach this state in five different ways:
\begin{enumerate}
  \item With a transition $m_2$ reducing the redex $y_1 y_2$, with the 4 possible combinations of $y_1, y_2 \in \{z_1, z_2\}$ i.e. starting from one of the possible states $(\ess{y_1 y_2},  \es{z_1}{\lambda x. \ess{x}} \es{z_2}{\lambda x. \ess{x}})$ for $y_1,y_2 \in \{z_1,z_2\}$.
  \item With a transition $sea$ if the machine still needs to evaluate the last entry, i.e. starting from the state $(\ess{\lambda x. \ess{x}}, \es{z_1}{\lambda x. \ess{x}} \es{z_2}{\lambda x. \ess{x}})$.
\end{enumerate}
To distinguish the four scenarios described by the possible combinations of $z_1 z_2$, we need to save on the stack both variables in the bite associated to $*$; moreover, to distinguish the last scenario, we need to save something different on the stack: this is the role of $\langle\rangle$ in~\autoref{fig:abstract_machine}.
Note that the amount of information we store on the history stack is small enough to be acceptable: the concrete implementation shown in Appendix~\ref{app:implementation} represents an explicit substitution $\es{x}{b}$ with a record in the heap that holds $b$ together with a pointer to the previous substitution in the environment. In such an implementation the occurrences of the variable $x$ are just pointers to the memory address of the record, i.e. variables are pointers and thus can be stored in $O(1)$ space. Therefore each history stack entry uses size $O(1)$ as well holding at most two pointers (the two variables).

\paragraph{Properties of reversible transitions and computations}
We now identify some properties of the reverse transitions of RCAM. The terminology is taken
from~\cite{laneseAxiomaticApproachReversible2020,danosReversible2004} when applicable. Note, however, that since our
language is sequential most properties of~\cite{laneseAxiomaticApproachReversible2020} hold trivially
and therefore we skip them.

We define an \textit{execution} $\rho: S_0 \rcrfull^* S$ as a sequence of RCAM transitions that start in the initial state $S_0$ and end in $S$. 
A state $S$ is said to be reachable through an execution $\rho$ if $\rho: S_0 \rcrfull^* S$.

\begin{restatable}[Backward determinism]{lemma}{backwarddet}
  \label{lem:revbackward}
  The transitions $\rcrinv$ are deterministic.
\end{restatable}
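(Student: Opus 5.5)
The plan is to show that from any state $S = (E, E_v, \hist)$ at most one backward transition applies, and that the resulting state is uniquely determined. The argument is a case analysis on the top of the history stack $\hist$. By \autoref{fig:abstract_machine}, the three backward transitions $\seal$, $\mmonel$, $\mmtwol$ are enabled only when their target state (the left-hand side of the forward rule) has history $\hist'$ and the current history is $\langle\rangle : \hist'$ or $\langle x,y\rangle : \hist'$. So if $\hist = \epsilon$ no backward transition applies. If the top is $\langle\rangle$, only $\seal$ can apply. If the top is $\langle x, y\rangle$, only $\mmonel$ or $\mmtwol$ can apply.

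\textbf{Case $\langle\rangle$.} The rule $\seal$ requires the evaluated environment to have the form $\es{x}{v} : E'_v$. It then produces $(E\es{x}{v}, E'_v, \hist')$. The head of the evaluated environment and the tail of the stack are uniquely determined, so the result is unique.

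\textbf{Case $\langle x, y\rangle$: choosing between $\mmonel$ and $\mmtwol$.} In both rules the condition is read off a v-environment containing $x$. For $\mmtwol$ it is the evaluated environment $E_v$ of the current state, minus its head $\es{z}{\cdot}$. For $\mmonel$ it is the current evaluated environment itself, which is unchanged by $\mmoner$. The key observation is that the shape of the looked-up value $E_v(x)$ decides the rule. If it is $\lambda x_1.\ess{y_1}$, only $\mmtwol$ fits; otherwise it is $\lambda x_1.\ess{b}E$ and only $\mmonel$ fits. The two side conditions (*) and (\#) are mutually exclusive, exactly as in the determinism argument of the implementation system.

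The main obstacle is that the two rules consult different environments: $E_v$ versus the tail of $E_v$ after popping the head. I would resolve this on reachable states using the well-naming invariant (\autoref{lem:rcam-invariants}), extended to backward-reachable states by noting that a backward step returns to a previously visited state. By well-naming, $z \neq x$, since $z \notin \fv$ of the entries to its left. Hence looking up $x$ in $\es{z}{\cdot}E'_v$ or in $E'_v$ gives the same value, both lookups see the same abstraction, and the shape test is unambiguous.

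\textbf{Uniqueness of the predecessor once the rule is fixed.} For $\mmtwol$, pop $\es{z}{v}$ from the evaluated environment and push $\es{z}{xy}$ onto the active one; the result is unique. For $\mmonel$, $E_v(x)$ determines $\len{E}$, which equals $\len{E'}$ because renaming and variable-for-variable substitution preserve length. So the active environment $E_1\es{z}{b'}E'$ is split uniquely by dropping the last $\len{E'}$ entries and then replacing the bite of the last remaining entry with $xy$. Since every component of the predecessor is a function of $S$, $\rcrinv$ is deterministic.
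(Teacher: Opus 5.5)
Your proposal is correct and takes the same route as the paper: a case split on the top of the history stack, with the shape of $E_v(x)$ separating $\mmonel$ from $\mmtwol$. You also make explicit two points the paper leaves implicit, namely that the two rules look up $x$ in different evaluated environments and that the predecessor is uniquely reconstructed; one small slip is that $z \neq x$ follows from $z \notin \fv(xy)$ in the predecessor, or from pairwise distinctness of domain variables since $x \in \dom(E_v)$, and not from $z$ being absent from the entries to its left, since that is where $z$ is typically used.
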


\begin{restatable}[Loop]{corollary}{invertstate}
  \label{lem:invertstate}
  Let $S'$ be any state of the RCAM, such that $S \rcr S'$. Then the evaluation step can always be reverted, i.e. $S' \rcrinv S$ and vice-versa.
\end{restatable}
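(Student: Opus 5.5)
The plan is to argue the two directions separately, each by a case analysis on the forward (respectively backward) rule that fired, reading the transitions of \autoref{fig:abstract_machine} symmetrically. For the direction $S \rcr S'$ implies $S' \rcrinv S$, I take the three cases $\sear$, $\mmoner$, $\mmtwor$. In each case I check that the right-hand side of the forward rule matches the left-hand side of the corresponding backward rule, and that the backward rule rebuilds exactly the original state. Since the backward transitions are defined as the table read right to left, the only real work is to check that the side conditions and data the backward rule needs can be recovered from $S'$ alone.

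\textbf{Search.} After $\sear$ the state is $(E_1, \es{x}{v}\!:\!E_v, \langle\rangle\!:\!H)$. The top of the history is $\langle\rangle$, so $\seal$ applies. It moves $\es{x}{v}$ back into the active environment and pops $\langle\rangle$, which gives $S$.

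\textbf{$m_2$.} After $\mmtwor$ the state is $(E_1, \es{z}{E_v(y'_1)}E_v, \langle x,y\rangle\!:\!H)$. The evaluated environment is still headed by $z$ followed by the same $E_v$, so $E_v(x)=\lambda x_1.\ess{y_1}$ can be looked up again. By well-naming (\autoref{lem:rcam-invariants}), $x$ and $y$ are bound in $E_v$ and not in the shifted entry. The shape of $E_v(x)$ selects $\mmtwol$, which restores $\es{z}{xy}$ into the active environment and pops the history.

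\textbf{$m_1$.} This is the main obstacle. After $\mmoner$ the state is $(E_1\es{z}{b'}E', E_v, \langle x,y\rangle\!:\!H)$. I must show that $\mmonel$ removes exactly $E'$, and that the result is $E_1\es{z}{xy}$ again. The key facts are the following.
\begin{itemize}
\item $E_v$ is unchanged by the step, so $E_v(x)=\lambda x_1.\ess{b}E$ is still available. It is not an identity or constant body, which selects $\mmonel$ rather than $\mmtwol$.
\item $\alpha$-renaming and the variable substitution $\sub{x_1}{y}$ preserve length, so $\len{E'}=\len{E}$ and dropping that many rightmost entries of the active environment recovers $E_1\es{z}{b'}$.
\item Replacing $b'$ by $xy$ then gives back $E_1\es{z}{xy}$.
\end{itemize}
If the $\alpha$-renaming is treated up to $\alpha$-equivalence, the recovered state is equal to $S$ up to the choice of fresh names. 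I will note this, and otherwise argue it is literal equality, since the fresh names disappear with $E'$.

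For the converse, $S' \rcrinv S$ implies $S \rcr S'$, I do the same case split on the backward rule. Each backward rule pops a history entry that was pushed by the matching forward rule, so the side condition of the forward rule holds in $S$. For search, the entry is a value, hence not an application, and search applies. For $m_1$ and $m_2$, the head of the active environment is $\es{z}{xy}$ with the same $E_v(x)$. Executing the forward rule reproduces $S'$; for $m_1$ this relies on $E'$ being determined up to fresh names.

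Finally, I invoke \autoref{lem:revbackward} to conclude that the reversed step is the unique backward step. Determinism of the forward rules gives the same uniqueness in the other direction, so the loop $S \rcr S' \rcrinv S$ holds in both orders.
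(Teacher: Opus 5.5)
Your proposal is correct and is essentially the paper's own argument. The paper gives no separate proof: it treats the statement as a corollary of each backward rule being the corresponding row of \autoref{fig:abstract_machine} read right to left (together with \autoref{lem:revbackward}), and its informal justification in \autoref{sec:rcam-reversibility} is your case analysis on the history entry, with $E_v$ unchanged and $\len{E'}=\len{E}$ in the $m_1$ case.

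One small remark: your appeals to well-naming via \autoref{lem:rcam-invariants}, and to which forward rule ``pushed'' the popped history entry, are only justified for reachable states. They are also unnecessary, since the side conditions $(*)$ and $(\#)$ belong to both directions of each row, which already gives both implications for an arbitrary state.
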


\begin{restatable}{corollary}{idrev}
  \label{lem:id-rev}
  The sequence of a forward transition with a backward one represents the identity, i.e. if $\rho : S_0 \rcr S' \rcrinv S$ then $S_0 = S$.
\end{restatable}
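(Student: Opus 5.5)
The claim follows directly from the Loop property (\autoref{lem:invertstate}) and backward determinism (\autoref{lem:revbackward}). Fix $\rho : S_0 \rcr S' \rcrinv S$. By \autoref{lem:invertstate}, applied to the forward step $S_0 \rcr S'$, there is a backward step $S' \rcrinv S_0$. By hypothesis there is also a backward step $S' \rcrinv S$. \autoref{lem:revbackward} says $\rcrinv$ is deterministic, so from $S'$ there is at most one $\rcrinv$-successor. Hence $S = S_0$.

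To make the argument robust, I would also check by cases that the backward rule fired from $S'$ is the one matching the forward rule, since this is the content hidden in the two lemmas. The case split follows the top of the history stack of $S'$, which the forward step has just pushed.
\begin{itemize}
\item If the forward step was $\sear$, the top is $\langle\rangle$. The only applicable backward rule is $\seal$, which moves $\es{x}{v}$ back from the head of the evaluated environment to the active one and pops the stack, returning $S_0$ exactly.
\item If the forward step was $\mmoner$ or $\mmtwor$, the top is $\langle x,y\rangle$. The value $E_v(x)$ in the evaluated environment of $S'$ is unchanged: $\mmoner$ leaves $E_v$ untouched, and $\mmtwor$ only prepends an entry for $z$, which is different from $x$ by well-naming (\autoref{lem:rcam-invariants}). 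The shape of $E_v(x)$ (generic body versus $\ess{y_1}$) therefore selects the same rule in its backward form.
\end{itemize}

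The main obstacle is the $\mmonel$ case. There one must show that deleting the $\len{E'} = \len{E}$ entries to the right of $\es{z}{b'}$ and restoring $\es{z}{xy}$ recovers exactly $E_1\es{z}{xy}$. This requires that the length of the body of $E_v(x)$ equals the number of entries inserted by the forward step. That holds because $\alpha$-renaming and the substitution $\sub{x_1}{y}$ both preserve length, so the rightmost $\len{E}+1$ entries of the active environment of $S'$ are precisely $\es{z}{b'}E'$. For $\mmtwol$, the head $\es{z}{\cdot}$ of the evaluated environment is moved back as $\es{z}{xy}$ and the remaining evaluated environment is the original $E_v$. In every case the history stack is popped back to that of $S_0$, which completes the identification $S=S_0$.
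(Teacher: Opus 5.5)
Your proposal is correct and follows the paper's intended route. The paper lists this statement as a corollary of Loop (\autoref{lem:invertstate}) and backward determinism (\autoref{lem:revbackward}) without giving a separate proof, and your case check just unfolds what those two results contain. Your use of well-naming to get $z \neq x$, like backward determinism itself, tacitly assumes that $S_0$ is reachable, an assumption the paper also leaves implicit.
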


\begin{restatable}{lemma}{fwdonly}
  \label{lem:rev-fwd-only}
  Given an execution $\rho$ such that a state $S'$ is reachable from an initial state $S_0$ through $\rho$, i.e. $\rho: S_0 \rcrfull^* S'$, then $\exists \rho' : S_0 \rcr^* S'$. 
\end{restatable}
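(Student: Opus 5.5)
We argue by induction on the length of $\rho$, normalizing the execution into a forward-only one. The key idea is to cancel any backward step against the forward step that immediately precedes it. Corollary~\ref{lem:id-rev} says that a forward transition followed by a backward one is the identity. This suggests proving a slightly stronger statement: for every execution $\rho : S_0 \rcrfull^* S'$ there is a forward-only execution $\rho' : S_0 \rcr^* S'$ with $\size{\rho'} \le \size{\rho}$. Since $S_0 = \initm{E}$ has an empty history stack, no backward transition can fire from it. Every backward rule requires a nonempty stack, as its top entry is either $\langle\rangle$ or $\langle x, y\rangle$.

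The base case $\size{\rho} = 0$ is trivial. For the inductive step, write $\rho$ as $\rho_1 : S_0 \rcrfull^* S''$ followed by a single step $S'' \rcrfull S'$. By the induction hypothesis we replace $\rho_1$ with a forward-only $\rho_1' : S_0 \rcr^* S''$. Two cases then arise.

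\begin{enumerate}
\item If the last step is forward, $\rho_1'$ followed by that step is forward-only, and we are done.
\item If the last step is backward, $S'' \rcrinv S'$, then $S''$ has a nonempty history, so $\rho_1'$ is nonempty. Hence it ends with a forward step $S''' \rcr S''$, and $\rho_1' = \rho_2 \cdot (S''' \rcr S'')$ with $\rho_2$ forward-only.
\end{enumerate}

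In the second case we must show $S' = S'''$. Corollary~\ref{lem:invertstate} gives $S'' \rcrinv S'''$. Backward determinism (Lemma~\ref{lem:revbackward}) then forces $S' = S'''$; alternatively, this follows directly from Corollary~\ref{lem:id-rev}. So $\rho_2 : S_0 \rcr^* S'$ is the required forward-only execution, and it is strictly shorter than $\rho$, which keeps the length bound.

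The main point needing care is that the cancellation really pairs each backward step with the matching forward step. This is where backward determinism and the history stack are essential. The top of the stack in $S''$ was pushed by the last forward step, because all earlier backward steps have already been eliminated. We must also make sure Corollaries~\ref{lem:invertstate} and~\ref{lem:id-rev} apply to all states on the forward-only prefix. This holds because those states are reachable by forward-only transitions, so the invariants of Lemma~\ref{lem:rcam-invariants} (in particular that the evaluated environment is a v-environment and the state is well-named) are available. We need well-naming so that the $\alpha$-renaming in $\mmoner$ is undone by removing exactly $\len{E'}$ entries.
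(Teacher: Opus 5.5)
Your proof is correct and is essentially the paper's argument. Both rely on the initial state having an empty history, so any backward step must be preceded by a forward one, and on Corollary~\ref{lem:id-rev} to cancel that forward/backward pair; the only difference is bookkeeping. The paper inducts on the number of backward steps and cancels the first one, while you induct on the length of $\rho$, normalize the prefix, and cancel the last step, so your extra bound $\size{\rho'} \le \size{\rho}$ is harmless but not actually needed for your induction.
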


\begin{restatable}[Well-Foundedness]{theorem}{welfoundedness}
  There is no infinite reverse computation, i.e. we do not have sequences $(S_i)_{i\in\mathbb{N}}$ such that $S_{i+1} \rcr S_i$ for all $i \in \mathbb{N}$.
\end{restatable}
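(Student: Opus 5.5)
The plan is to use the length of the history stack as a measure. By inspecting \autoref{fig:abstract_machine}, every forward transition in $\rcr$ (namely $\sear$, $\mmoner$ and $\mmtwor$) pushes exactly one entry onto the history stack: $\langle\rangle$ for search, and $\langle x,y\rangle$ for the two multiplicative steps. No forward transition pops or otherwise modifies the existing stack. Hence, writing $\len{\hist}$ for the number of entries of a stack $\hist$, we get the following: if $(E,E_v,\hist)\rcr(E',E_v',\hist')$, then $\len{\hist'}=\len{\hist}+1$. This is a routine case analysis on the three rules, and it holds for arbitrary states, not only reachable ones.

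Now suppose there were an infinite sequence $(S_i)_{i\in\mathbb{N}}$ with $S_{i+1}\rcr S_i$ for every $i$. Write $S_i=(E_i,E_{v,i},\hist_i)$. The observation above gives $\len{\hist_i}=\len{\hist_{i+1}}+1$ for all $i$. By induction on $i$ this yields $\len{\hist_i}=\len{\hist_0}-i$. Since $\len{\hist_0}$ is a fixed natural number, taking $i=\len{\hist_0}+1$ gives a negative length, which is a contradiction. Equivalently, the map $i\mapsto\len{\hist_i}$ would be a strictly decreasing sequence of naturals, which cannot exist by well-foundedness of $\mathbb{N}$.

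I do not expect a real obstacle. The only point needing care is to check that the measure does not depend on reachability or on the invariants of \autoref{lem:rcam-invariants}. This holds, because each rule's right-hand side syntactically has one more stack cell than its left-hand side. The argument therefore applies to arbitrary states, as the statement requires. As a remark, by \autoref{lem:invertstate} the same argument shows that any backward computation from a state with history $\hist$ has at most $\len{\hist}$ steps and terminates exactly when the stack becomes empty, which is the initial state of \autoref{lem:rev-fwd-only}.
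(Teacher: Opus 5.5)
Your proof is correct and takes the paper's approach: every $\rcr$ transition pushes exactly one entry onto the history stack, so an infinite chain $S_{i+1} \rcr S_i$ would force the history of $S_0$ to be infinite. You just spell out the same measure argument explicitly, using the decreasing stack length in $\mathbb{N}$.
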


\begin{restatable}{theorem}{harmonyrev}
  A reachable state $S$ is $\rcrinv$-normal if and only if it is initial.
\end{restatable}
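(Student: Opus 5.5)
The plan is to establish the stronger invariant that, for every reachable state $S = (E, E_v, \hist)$, the history $\hist$ is empty if and only if $S$ is initial. Since every backward transition pops an entry off the history stack, the theorem then follows from this invariant together with a case analysis on the top of $\hist$.

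\emph{Initial implies $\rcrinv$-normal.} If $S = \initm{E}$, then $\hist = \epsilon$. Every backward transition $\seal$, $\mmonel$, $\mmtwol$ requires, by the right-to-left reading of the rows of \autoref{fig:abstract_machine}, a history of the form $\langle\rangle : H$ or $\langle x, y\rangle : H$. Hence no backward transition applies and $S$ is $\rcrinv$-normal.

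\emph{$\rcrinv$-normal implies initial.} Let $S$ be reachable and $\rcrinv$-normal. By \autoref{lem:rev-fwd-only} there is a forward-only execution $\rho' : S_0 \rcr^* S$ with $S_0 = \initm{E}$. If $\rho'$ is empty, then $S = S_0$ is initial and we are done. Otherwise the last step of $\rho'$ is some forward transition $S'' \rcr S$. By the Loop corollary (\autoref{lem:invertstate}) we then have $S \rcrinv S''$, which contradicts $\rcrinv$-normality. So $\rho'$ is empty and $S$ is initial.

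An alternative route, kept in reserve in case the Loop corollary is only stated for states obtained by a forward step, argues by induction along $\rho'$. Each forward step pushes exactly one entry onto the history, so $|\hist|$ equals the length of $\rho'$. A nonempty history then has top $\langle\rangle$ or $\langle x,y\rangle$ coming from the last forward step, and the corresponding backward rule is enabled. Concretely:
\begin{itemize}
\item If the top is $\langle\rangle$, the evaluated environment is nonempty, of the form $\es{x}{v}:E_v$, so $\seal$ applies.
\item If the top is $\langle x, y\rangle$, the side condition on $E_v(x)$ still holds, because $E_v$ is only extended on the left and remains well-named (\autoref{lem:rcam-invariants}). Thus $\mmonel$ or $\mmtwol$ applies.
\end{itemize}

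The main obstacle is this applicability of the backward rules in the second step. The Loop corollary packages it, but one must check that $E_v(x)$ is still defined and has the same shape as when the forward step fired. In particular, for $\mmonel$ the $\len{E'}$ entries that were appended must still sit at the right end of the active environment. This holds because the history records exactly the last step, so nothing has modified those entries since they were appended.
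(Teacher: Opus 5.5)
Your proof is correct and follows the paper's argument. The $(\Leftarrow)$ direction holds because an initial state has empty history and every backward rule needs a nonempty one; the $(\Rightarrow)$ direction uses \autoref{lem:rev-fwd-only} to obtain a forward-only execution and applies the Loop corollary (\autoref{lem:invertstate}) to its last step, exactly as the paper does, so the announced ``stronger invariant'' and the reserve route are not needed.
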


\subsection{Complexity Analysis} \label{sec:rcam-complexity}
We target now the cost analysis of the machine when restricted to forward rules only, with the aim of expressing the cost as a function of the number of principal transitions only. Of course we rule out backward rules otherwise the machine could perform and undo a search rule an unbounded number of time before doing the next principal rule.

Following \cite{accattoliCrumblingAbstractMachines2019}, the cost analysis starts by bounding the number of search transitions in an execution $\rho$. Then,
after discussing the cost of implementing single transitions, we will compose the two analyses to obtain the total cost, showing that the RCAM is bi-linear in the size of the initial crumble and the number of
principal steps. Finally we will show that the cost of each backward transition is identical to that of
its forward counterpart.

\paragraph{Number of search transitions.}
For the complexity analysis of the RCAM we will make use of the measure of size defined in \autoref{sec:crumbled-calculus}. We define the measure of length of a state of the RCAM as the length of its read-back, i.e. $\len{S} = \len{\rbackm{S}}$.
Since after a multiplicative step the measure increases by the length of the body we need to concatenate, we also define the function $L(E)$ that bounds the length of the bodies occurring in $E$.
\[ L(E) = \sup{\{|E_b|_{len} : E_b \text{ is a body appearing in } E}\} \]
The function can be extended to the states of the RCAM simply by using the read-back of the state.
\begin{restatable}{lemma}{lenrcam}
  \label{lem:lenrcam}
  Let $E$ be a well-named crumble and let $\rho : \initm{E} \rcr^* S$ an execution in the RCAM with $S = (E_1, E_v, \hist)$. Then $\len{E_1} \leq \len{E} + \size{\rho}_p \cdot L(E) - \size{\rho}_{sea}$.
\end{restatable}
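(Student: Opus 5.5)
We argue by induction on the length of $\rho$, strengthening nothing: the invariant is exactly the stated inequality $\len{E_1} \leq \len{E} + \size{\rho}_p \cdot L(E) - \size{\rho}_{sea}$, where $\size{\rho}_p$ counts the $\mmoner$ and $\mmtwor$ steps and $\size{\rho}_{sea}$ counts the $\sear$ steps. In the base case $\rho$ is empty, $S = \initm{E} = (E, \epsilon, \epsilon)$, so $E_1 = E$ and both counts are zero; the inequality holds with equality.

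For the inductive step we write $\rho = \rho'; (S' \rcr S)$ with $S' = (E_1', E_v', \hist')$, and we assume the bound for $\rho'$. We then split on the last transition.

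\emph{Case $\sear$.} The active environment loses exactly its rightmost entry, so $\len{E_1} = \len{E_1'} - 1$. Meanwhile $\size{\rho}_{sea}$ grows by one and $\size{\rho}_p$ is unchanged, so the bound is preserved.

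\emph{Case $\mmtwor$.} The active environment $E_1'' \es{z}{xy}$ becomes $E_1''$, since the entry for $z$ moves to the evaluated environment. Hence $\len{E_1} = \len{E_1'} - 1 \leq \len{E_1'} + L(E)$, and as $\size{\rho}_p$ grows by one the bound is preserved, with slack.

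\emph{Case $\mmoner$.} The active environment $E_1'' \es{z}{xy}$ becomes $E_1'' \es{z}{b'} E'$. Here $\ess{b'}E'$ is the renamed and substituted body of the abstraction $E_v(x)$, and renaming and variable substitution do not change length. Therefore $\len{E_1} = \len{E_1'} + \len{E'} = \len{E_1'} + \len{\ess{b}E} - 1 \leq \len{E_1'} + L(E)$, and the bound is again preserved.

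The main obstacle is the $\mmoner$ case: we must justify that the length of the body $\ess{b}E$ being concatenated is bounded by $L(E)$ of the \emph{initial} crumble, and not by $L$ of the current state. For this we invoke the bodies (subterm) invariant of \autoref{lem:crumb-invariants}, transferred to forward-reachable states of the RCAM through the read-back $\rbackm{\cdot}$ and the implementation result \autoref{lem:impl-rcam}. The read-back of $S'$ is a reachable crumble, hence every body occurring in it is a subterm of $E$ up to renaming and variable-to-variable substitution, and these operations preserve $\len{\cdot}$. One point needs a brief remark: the lemma starts from an arbitrary closed well-named crumble rather than from some $\crumb{t}$. However, the bodies invariant only uses the fact that bodies in a $\to_{Cr}$-step are copied, up to renaming, from existing ones, so it holds verbatim starting from any well-named crumble $E$.
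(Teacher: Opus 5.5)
Your proof is correct and follows the paper's own argument: induction on the length of $\rho$ with a case split on the last transition. There, $\sear$ and $\mmtwor$ remove one entry from the active environment, and $\mmoner$ adds $\len{E'} \leq L(E)$ entries. You also justify, via the bodies invariant (which, as you note, holds from any starting crumble), why the copied body is bounded by $L$ of the \emph{initial} crumble, which the paper leaves implicit.

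One small caution about that step. Cite only the fact that forward steps read back to $\to_{Cr}$-steps or to the identity, which is immediate from the shape of the rules. Do not cite the whole of \autoref{lem:impl-rcam}: its overhead-termination clause is proved via \autoref{lem:count-sea}, which in turn relies on this very lemma, so citing it would be circular.
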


\begin{restatable}[Number of Search transitions]{corollary}{numsea}\label{lem:count-sea}
  Let $t$ be a closed term. For a normalizing forward steps only $\rho$ in the RCAM starting from $\iota{(\crumb{t})}$, we have $|\rho|_{sea} \leq (|\rho|_p + 1) \cdot |t|$.
\end{restatable}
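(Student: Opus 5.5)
The plan is to instantiate \autoref{lem:lenrcam} with $E := \crumb{t}$ and the given normalizing execution $\rho$. By \autoref{lem:crumb-properties}, $\crumb{t}$ is a well-named and closed crumble, so the hypotheses of \autoref{lem:lenrcam} hold.

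Since $\rho$ is normalizing and forward-only, \autoref{lem:harmony-rcam} tells us that its final state has the form $S = (\epsilon, E_v, \hist)$. Hence $\len{E_1} = 0$, and \autoref{lem:lenrcam} gives
\[ 0 \leq \len{\crumb{t}} + \size{\rho}_p \cdot L(\crumb{t}) - \size{\rho}_{sea}, \]
that is,
\[ \size{\rho}_{sea} \leq \len{\crumb{t}} + \size{\rho}_p \cdot L(\crumb{t}). \]
It then suffices to show the two bounds $\len{\crumb{t}} \leq \size{t}$ and $L(\crumb{t}) \leq \size{t}$. These combine to give $\size{\rho}_{sea} \leq (\size{\rho}_p + 1)\cdot\size{t}$.

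For the bound on length, I would prove by induction on $t$ that $\len{\crumb{t}} \leq \size{t}$, following the clauses of the crumbling translation in \autoref{fig:crumbled-forms}.
\begin{itemize}
  \item For a variable application $xy$, the crumble has one entry while $\size{xy} = 3$.
  \item Abstractions yield a single entry, while $\size{\lambda x.t} \geq 2$.
  \item For $uu'$, the crumble has $1 + \len{\crumb{u}} + \len{\crumb{u'}}$ entries, since the head entries of $\crumb{u}$ and $\crumb{u'}$ are reused as $\es{x}{b}$ and $\es{y}{b'}$. By the induction hypothesis this is at most $1 + \size{u} + \size{u'} = \size{uu'}$.
  \item The mixed cases $uy$ and $xu'$ are analogous, with $\size{y} = 1$ absorbing nothing extra.
\end{itemize}
A bare variable $t = x$ is not closed. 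A closed $t$ is never a bare variable at top level, and inner variables only appear inside the cases above, so this case need not be handled.

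For the bound on $L$, every body occurring in $\crumb{t}$ is, by the bodies item of \autoref{lem:crumb-properties}, of the form $\crumb{s}$ for a subterm $s$ of $t$, or is $\ess{y}$ of length $1$. Hence $L(\crumb{t}) \leq \max_s \len{\crumb{s}} \leq \max_s \size{s} \leq \size{t}$, using the same length lemma, which holds for open subterms as well.

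The main obstacle is making sure the measure $L$ used in \autoref{lem:lenrcam} really refers to bodies of the \emph{initial} crumble. Bodies appearing later are only renamed copies, by the subterm invariant, so their lengths coincide with those of bodies of $\crumb{t}$. Beyond that, the only care needed is the bookkeeping of the induction for $\len{\crumb{\cdot}}$ in the mixed application cases.
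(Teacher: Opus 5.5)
Your proposal is correct and follows the paper's proof: instantiate \autoref{lem:lenrcam} with $E = \crumb{t}$, use the empty active environment of the normal final state, and then bound both $\len{\crumb{t}}$ and $L(\crumb{t})$ by $\size{t}$ via the structural induction of \autoref{lem:lenlen}. You also make explicit two points the paper leaves implicit: the length bound must be proved for all non-variable, possibly open, terms, and $L(\crumb{t}) \leq \size{t}$ follows by inspecting the bodies of $\crumb{t}$.
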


\paragraph{Cost of forward transitions.}
Since the read-back of a state of the machine returns a crumble, we will address the cost of transitions by talking in terms of crumbles and environments. Updating the history stack has constant cost and so it will not be part of the discussion. Computing $E_v(x)$ as well as appending environments can be implemented in $O(1)$ on a Random Access Machine as shown in~\cite{accattoliCrumblingAbstractMachines2019} and Appendix~\ref{app:implementation}.

The cost of each principal forward transition is bound by the size of the crumble in the abstraction that needs to by copied for $\alpha$-renaming it and applying the variable-to-variable substitution. By the bodies invariant, the abstraction is the $\alpha$-renaming of an abstraction present in the initial crumble. 
By \autoref{lem:crumb-properties}.\ref{lem:crumb-size}, the size of the initial crumble is linear in the size of the initial term, therefore the cost of a principal forward transition is linear in the size of the initial term. The cost of $\sear$ is clearly $O(1)$.

\paragraph{Cost of the execution.}
By combining the analysis from the previous paragraphs, we obtain the following theorem

\begin{theorem}[The RCAM is bi-linear]
  For any closed, term $t$ and any RCAM execution $\rho: \initm{\crumb{t}} \rcr^* S$, the cost of implementing $\rho$ on a RAM is $O((|\rho|_p + 1) \cdot |t|)$.
\end{theorem}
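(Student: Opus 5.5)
The plan is to split the total cost of $\rho$ by transition kind, writing $\size{\rho} = \size{\rho}_p + \size{\rho}_{sea}$ with $\size{\rho}_p$ the number of $\mmoner$ and $\mmtwor$ steps, and then to bound each kind separately. Initialization, i.e.\ computing $\crumb{t}$ and building $\initm{\crumb{t}}$, costs $O(\size{t})$: the translation is linear by \autoref{lem:crumb-properties}.\ref{lem:crumb-size}, and the initial state only adds two empty components. This accounts for the ``$+1$'' in the bound.

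\emph{Search transitions.} Each $\sear$ step moves one entry from the active to the evaluated environment and pushes $\langle\rangle$, so it costs $O(1)$ in the zipper implementation. Their number is bounded by \autoref{lem:count-sea}. That corollary is stated for normalizing executions, whereas here $\rho$ is arbitrary. So I would instead use \autoref{lem:lenrcam} directly: since $\len{E_1}\geq 0$, it gives $\size{\rho}_{sea} \leq \len{\crumb{t}} + \size{\rho}_p\cdot L(\crumb{t})$. Both $\len{\crumb{t}}$ and $L(\crumb{t})$ are at most $\size{\crumb{t}}$, since every entry has size at least $1$. By \autoref{lem:crumb-properties}.\ref{lem:crumb-size} this is $O(\size{t})$. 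Hence the search transitions contribute $O((\size{\rho}_p+1)\cdot\size{t})$.

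\emph{Principal transitions.} Each $\mmoner$ or $\mmtwor$ step does a constant amount of work plus one copying step:
\begin{itemize}
\item looking up $E_v(x)$ costs $O(1)$, since variables are pointers to their substitution record;
\item pushing $\langle x,y\rangle$ costs $O(1)$;
\item appending $E'$ to $E_1$ costs $O(1)$ with linked environments;
\item copying the body costs time proportional to its size, because it is $\alpha$-renamed with fresh names and has $\sub{x_1}{y}$ applied.
\end{itemize}
By the bodies invariant of \autoref{lem:crumb-invariants}, transported to the RCAM via the read-back $\rbackm{\cdot}$ and \autoref{lem:impl-rcam}, every body in a reachable state is a renaming of a body of $\crumb{t}$. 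Renaming and variable-for-variable substitution preserve size, so every copied body has size at most $\size{\crumb{t}} = O(\size{t})$. Each principal step therefore costs $O(\size{t})$, and all principal steps together cost $O(\size{\rho}_p\cdot\size{t})$.

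Summing the three contributions gives $O((\size{\rho}_p+1)\cdot\size{t})$. The main obstacle I expect is making the RAM-level claims rigorous: the copy must produce fresh names in time linear in the body size, which means allocating new records and keeping a temporary map from old to new pointers, e.g.\ by marking the original records during the traversal. A second point needing care is the transport of the bodies invariant from the calculus to forward-only RCAM executions. This should follow because every forward state reads back to a reachable crumble, since $\sear$ is transparent and the principal steps match $\to_{Cr}$.
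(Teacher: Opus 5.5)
Your proposal is correct and follows essentially the same route as the paper. You bound the number of $\sear$ steps via the length lemma, charge $O(1)$ per search step and $O(\size{t})$ per principal step (using the bodies invariant and the linear size of $\crumb{t}$), and sum. Applying \autoref{lem:lenrcam} directly with $\len{E_1}\geq 0$, instead of \autoref{lem:count-sea}, is a genuine tightening: the theorem concerns arbitrary forward executions, while that corollary is stated only for normalizing ones.
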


\begin{theorem}[Space overhead of Landauer's embedding]
  For any closed, term $t$ and any RCAM execution $\rho: \initm{\crumb{t}} \rcr^* S$, the maximum size of the history during the execution $\rho$ on a RAM is $O((|\rho|_p + 1) \cdot |t|)$.
\end{theorem}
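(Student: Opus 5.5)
The plan is to bound the history size by the total number of forward transitions in $\rho$, and then bound that number using the earlier counting results. The history stack only grows under forward transitions, since $\rho$ uses $\rcr$ steps only. Inspecting the transitions in \autoref{fig:abstract_machine}, every forward transition pushes exactly one entry: $\sear$ pushes $\langle\rangle$, while $\mmoner$ and $\mmtwor$ push $\langle x, y\rangle$.

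First, I will show that each history entry occupies $O(1)$ space on a RAM. In the implementation of Appendix~\ref{app:implementation}, variables are pointers to the records of their explicit substitutions, so an entry holds at most two pointers. This is exactly the argument given in the paragraph on the size of history entries. Hence, at any point of $\rho$, the size of the history is $O(k)$, where $k$ is the number of transitions performed so far. The maximum over the execution is therefore $O(\size{\rho}) = O(\size{\rho}_p + \size{\rho}_{sea})$.

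Second, I need $\size{\rho}_{sea} = O((\size{\rho}_p+1)\cdot\size{t})$. \autoref{lem:count-sea} gives this only for normalizing executions, whereas here $\rho$ is arbitrary. So I will instead invoke \autoref{lem:lenrcam} directly, starting from $E = \crumb{t}$. Since $\len{E_1} \geq 0$, rearranging gives
\[ \size{\rho}_{sea} \leq \len{\crumb{t}} + \size{\rho}_p \cdot L(\crumb{t}). \]
Both $\len{\crumb{t}}$ and $L(\crumb{t})$ are bounded by $\size{\crumb{t}}$, because each entry contributes at least $1$ to the size and bodies are sub-environments. By \autoref{lem:crumb-properties}.\ref{lem:crumb-size}, $\size{\crumb{t}}$ is $O(\size{t})$. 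Combining, $\size{\rho} \leq \size{\rho}_p + (\size{\rho}_p + 1)\cdot O(\size{t})$, which is $O((\size{\rho}_p+1)\cdot\size{t})$ since $\size{t}\geq 1$.

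The main obstacle is the second step: handling non-normalizing prefixes, and justifying that $L$ and $\len{\cdot}$ of the initial crumble are linear in $\size{t}$. Applying \autoref{lem:lenrcam} with the trivial lower bound $\len{E_1}\ge 0$ is the cleanest route and avoids needing normalization.
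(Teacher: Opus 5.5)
Your proposal is correct and is essentially the paper's argument: the history grows monotonically by one $O(1)$ entry per forward transition, so its maximum size is $O(\size{\rho}_p + \size{\rho}_{sea})$, and the number of search steps is then bounded through the length measure. Applying \autoref{lem:lenrcam} directly with $\len{E_1} \geq 0$ is a useful tightening, because the paper's one-line proof leaves the search-step count implicit, and \autoref{lem:count-sea} as stated only covers normalizing executions.
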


\begin{proof}
  The space overhead of the embedding is given by the history, that grows monotonically during any execution, adding an $O(1)$ entry for each principal and search step.
\end{proof}

\paragraph{Cost of backward transitions.}
The costs of the backward transitions are identical to their forward counterpart.
For $\seal$ we just need to shift a substitution in a Zipper: as for $\sear$, the operation costs $O(1)$.
The case of $\mmtwol$ is similar and costs $O(1)$ as well, exactly as $\mmtwor$.
For $\mmonel$, we need to remove the $|E'|_{len}$ substitutions that were concatenated during $\mmoner$: the cost for the single transition is then the same.
For $\mmonel$ we could actually achieve a better result if we saved on the history stack also the variable $z$, as this would make the cost of $\mmonel$ $O(1)$, leading however to a trade-off between space and time.

\section{Conclusions and future works}
To the best of our knowledge we have presented the first Landauer's embedding for Plotkin's (weak, closed) CbV calculus that requires
constant history overhead. The key idea was to made reversible a not yet explored variant of Crumbled Abstract Machine, a proven technology to implement call-by-value calculi with the best known asymptotic cost. We firmly believe that our approach scales seamlessly to other constructs like \texttt{if-then-else} or pattern matching over algebraic data types following the same pattern. For example, to accommodate boolean and \texttt{if-then-else} similarly to~\cite{accattoliCrumblingAbstractMachines2019}, one would add the following bites to the grammar: \texttt{true} and \texttt{false}, that would also be values, and \texttt{if x then y else z} where $x,y,z$ are all variables; the only new history stack entry would be $\langle x,y,z \rangle$ to undo \texttt{if-then-else} reduction.

We also believe that the technique would scale as well to open call-by-value~\cite{accattoliOpenCallbyValue2016} and to strong call-by-value~\cite{accattoliStrongCallbyValueMulti2023}. The latter could find applications in the implementation of interactive theorem provers where sometimes reduction is used during speculative proof-search, leading to the need for backtracking.

We conjecture the possibility to find cheap Landauer's embedding for some classes of reduction machines used to implement efficiently call-by-name and the much harder call-by-need, especially in the strong setting. However, the details could differ significantly from what is presented in this paper since the machines for call-by-need are more complicated.

Several reversible functional programming languages have been proposed in the literature, like~\cite{yokoyamaReversibleFunctionalLanguage2012}. Unlike the non reversible case, where Plotkin's CbV represents the core of the language, these languages are not extension of Plotkin's CbV. In particular they tend to exhibit a clear distinction between first-order data and reversible functions. Therefore we do not see any direct correlation between our contribution and that part of the literature.

An easy application of our work is to debugging. Indeed reversible debuggers like~\cite{ocamldebugger} have been available and appreciated since a long time. They are typically implemented using snapshots of the state that are taken regularly every $n$ computational steps. In order to undo a step, the computation goes back to the last snapshot and it is then replayed from there. If we consider a standard Crumbled Abstract Machine as an implementation of CbV, taking a snapshot requires recording all the active environment, that is subject to future in-place updates, and a pointer to the $v$-environment, that will not change. The size of the active environment, and thus of each snapshot, would be $O(|\rho|_p + 1)|t_0|$. Therefore the cumulative amount of space required by the snapshots would be quadratic in $|\rho|_p$ and linear in $|t_0|$, while our RCAM only requires bi-linear space.

A reference implementation in OCaml of the RCAM can be found at~\url{https://github.com/sacerdot/RCAM}. An explanation is given in~\autoref{app:implementation}. The implementation is meant for the reader to test the machine on some examples of her choice, to better understand the dynamics. Moreover it confirms that, even at low level, the complexity analysis of Subsection~\ref{sec:rcam-complexity} is correct.

\subsection{Related Works}
Other works have been presented on making abstract machines reversible in~\cite{klugeReversibleSEMCDMachine2000,huelsbergen1996logically}, with some space overhead.
As it is stated in~\cite{klugeReversibleSEMCDMachine2000}, the size of the overhead in~\cite{huelsbergen1996logically} is not minimal, and it is clearly not bi-linear.
Regarding the machine outlined in~\cite{klugeReversibleSEMCDMachine2000}, we believe that the size of the overhead should be linear in the number of steps, depending on the actual implementation model. However, no complexity analysis is presented to confirm this. In comparison to the RCAM, the SEMCD machine is significantly more complex, having 14 forward transitions an 8-components states, whereas the RCAM only has 3 transitions and 3-components states.

Another less related line of research is given by the Interaction Abstract Machine (IAM)~\cite{mackie1995geometry} and its derivation~\cite{accattoliMachineryInteraction2020} that are naturally reversible but implement a trade-off between space and time that is heavily unbalanaced versus space. Therefore, they don't attain the best bi-linear time complexity that our machine achieves.

\newpage
\bibliography{bibliography}
\bibliographystyle{splncs04}

\newpage
\appendix
\section{An implementation in OCaml of the RCAM}\label{app:implementation}

We provide at \url{https://github.com/sacerdot/RCAM}
a reference implementation in OCaml of the RCAM,
meant to guarantee the soundness of the assumptions that underly the complexity analyses.
The implementation also allows the reader to run the machine on examples of her choice, to better understand how the machine computes.

\begin{figure}[!t]
\includegraphics[width=\textwidth]{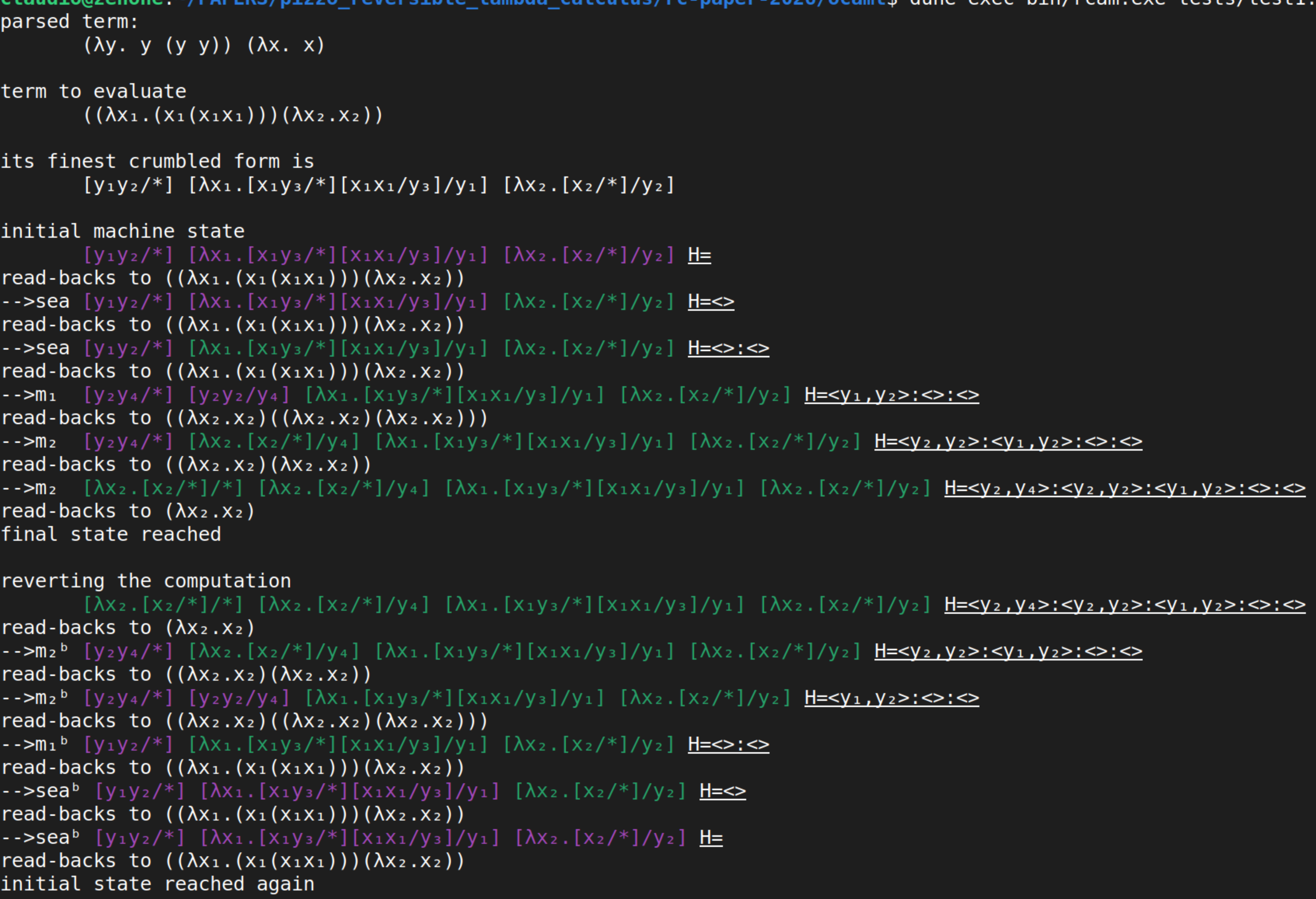}
\caption{An example of output}\label{img:run}
\end{figure}
The implementation takes as input a textual file whose content is a user provided term. The term is then crumbled and the crumble is shown to the user before starting a forward machine execution until the final state is reached. After each execution step both the machine state and its read-back are shown to the user. Once the final state is reached, the machine direction is reversed and the execution is shown until the initial state --- i.e. the final reverse state --- is reached again.  An example of output is shown in~\autoref{img:run}.

In the rest of the section we highlight a few significant details on the implementation.

\paragraph{Data structures.}
Bound variables (\lstinline{var}) are represented simply as heap-allocated cells: distinct cells in memory are distinct variables.
The \lstinline{mutable dummy : unit} code is an irrelevant trick to prevent the OCaml compiler to apply common subexpression elimination
and constant propagation, that would result in all variables being collapsed to a single cell in memory.
\begin{lstlisting}
(** Lambda bound variables **)
type var = { mutable dummy : unit }  (* only their memory address count;
                                        mutable is used to avoid common
                                        subexpression elimination *)
\end{lstlisting}

Plotkin's CbV terms are represented in the most obvious way using an Algebraic Data Type.
\begin{lstlisting}
(** Plotkin's terms **)
type term =
 | Var of var
 | Lam of var * term
 | App of term * term
\end{lstlisting}

Let's move now to crumbles. Explicit substitutions are implemented as \lstinline{node} that
are heap allocated cells that hold the bite to be substituted in the \lstinline{content} field.
In order to assemble explicit substitutions into environments, each cell also has a \lstinline{prev}
field to the previous node. The \lstinline{option} type is used to allow the first cell of the environment to
have no predecessor. The \lstinline{copying} field will be explained later when discussing how $\alpha$-renaming
is implemented. Since OCaml handles the \lstinline{node} type as boxed, occurrences of variables, i.e. OCaml values
of type \lstinline{node}, will be represented at runtime just as pointers into the heap and thus retrieving the
\lstinline{content} of a node is an $O(1)$ operation.
\begin{lstlisting}
(** Finest crumbled terms **)
type node =                       (* nodes are explicit substitutions *)
 { mutable content : bite         (* the substituted term *)
 ; mutable copying : node option  (* used only during copying to preserve sharing *)
 ; mutable prev : node_list       (* pointer to the next subst in the environment *)
 }
and node_list = node option       (* a list of nodes is an environment *)
\end{lstlisting}
Bites are implemented using an Algebraic Data Type that distinguish between the two shapes of abstractions used in the
paper. We introduce no explicit OCaml type for values. There are two additional changes w.r.t. the paper. The first one
is that we need to introduce a type of \lstinline{name} that collects together $\lambda$-bound variables and variables
bound by explicit substitutions, since the two are represented with different kinds of memory cells and since OCaml has
no notion of untyped pointer/reference. An implementation in C would not need this type. The second difference is that
an abstraction body $\es{*}{b}E$ is implemented as a \lstinline{crumbp} that is a pair made of $b$ and $E$ and, moreover,
the environment $E$ is represented by the \lstinline{env} datatype that holds both a pointer to the first and to the last
cell, unless the environment is empty. It would be possible to replace \lstinline{crumbp} with a \lstinline{node_list} to
remove the difference with the paper and the computational complexity would not change; however the code of some operations,
notably crumbling and $\alpha$-conversion would become less readable.
\begin{lstlisting}
and name = V of var | N of node   (* a name is either a lambda-bound variable or the address of a node *)
and bite =
 | Lam1 of var * crumbp
 | Lam2 of var * name
 | VV of name * name
and crumbp = bite * env           (* (b,e) encodes the crumble [b/*]e *)
and env = (node * node) option    (* to simplify the code we keep two pointers to the first
                                     and the last substitution in the environments used in
                                     the bodies of lambda-abstractions *)
\end{lstlisting}
History entries and machine states are implemented as expected. In particular the pair of active and evaluated nodes form a Zipper
and the $\seal$ and $\sear$ transitions become just shifting the Zipper to the left/right. We also introduce a datatype to
control the direction of execution of the machine.
\begin{lstlisting}
(** Reversible Crumbling Machine states **)
type history_entry =
 | Search
 | Principal of node * node

type machine_state =
 (* in a machine state (E1, Ev, H) the pair of environments
    (E1,Ev) implements a Zipper data structure; in particular
    cells in E1 point to their left and cells in Ev point to
    their right *)
 node_list * node_list * history_entry list

(** Reversible Crumbling Machine computation direction **)
type eval_direction = F | B       (* Forward vs Backward *)
\end{lstlisting}

\paragraph{Functions.}
The code for crumbling is given by a simple recursive function
\lstinline{translate : term -> node} that in 30 lines of code builds from a Plotkin's term
a non-empty environment, represented with (a pointer to) its last substitution.

The most tricky part of the code are the 30 lines of code that implement $\alpha$-conversion of environments. It is based on the standard algorithm to copy a graph in linear time preserving the sharing, described by Accattoli and Barras in \cite{accattoliEnvironmentsComplexityAbstract2017}. The \lstinline{copying : node option } field of a \lstinline{node} is used to temporarily associate during the copy process each node to its copy, if it has already been created. It differs from the standard algorithm because it also take cares of immediately substituting the argument
of the redex (a \lstinline{name}) for the abstracted variable (of type \lstinline{var}). The exact function name and type are \lstinline{copy_crumbp : var -> name -> crumbp -> crumbp} since the environments to be copied are bodies of abstractions and therefore they have the
special representation \lstinline{crumbp}.

The most important function is \lstinline{eval : eval_direction -> machine_state -> unit} that takes the direction of the computation and the machine state and recursively computes the normal form according to that direction, proceeding then at reverting the direction if it was the forward one. Adapting the function to just perform a step at a time and thus control the direction of the next step interactively would be a trivial change: just change all tail recursive calls to returning their argument. The code is the following. Note that checking the side
conditions of the principal rules is implemented very simply using deep patterns on the \lstinline{content} of nodes, i.e. by matching the result of the lookups in the $v$-environment.
\begin{lstlisting}
let rec eval dir ((n,ev,h) as state : machine_state) =
 print_endline (pretty_print_state state) ;
 print_string ("read-backs to " ^ string_of_t (read_back state) ^ "\n") ;
 match dir with
 | F ->
   (match n with
     | None ->
        (* normal form reached *)
        print_string "final state reached\n\n" ;
        print_string "reverting the computation\n       " ;
        eval B (n,ev,h)
     | Some n ->
        match n.content with
         | VV(N ({content=Lam1(y,c)} as v1), (N v2 as t)) ->
            step "m$\color{red}_1$ " ;
            let c2 = copy_crumbp y t c in
            eval dir (Some (n @ c2),ev,Principal(v1,v2)::h)
         | VV(N ({content=Lam2(y,z)} as v1), N({content=t} as v2)) ->
            step "m$\color{red}_2$ " ;
            n.content <- (subst_var y v2 z).content ;
            let n,ev = move_left n ev in
            eval dir (n,ev,Principal(v1,v2)::h)
         | Lam1 _ | Lam2 _ ->
            step "sea" ;
            let n,ev = move_left n ev in
            eval dir (n,ev,Search::h)
         | VV (_, V _) ->
            raise (Failure "term not closed!")
         | VV (N {content=VV _}, _) | VV (V _, _) ->
            (* non reachable state *)
            assert false)
 | B ->
   (match h with
     | [] ->
        (* Initial state reached going backward *)
        print_endline "initial state reached again"
     | Principal({content=Lam1(_,(_,env))} as v1,v2)::h ->
        step "m$\color{red}_1^b$" ;
        let len = length env in
        let m = drop len n in
        m.content <- VV(N v1, N v2) ;
        eval dir (Some m,ev,h)
     | Principal({content=Lam2(_,_)} as v1,v2)::h ->
        step "m$\color{red}_2^b$" ;
        let n,ev = move_right n ev in
        n.content <- VV(N v1, N v2) ;
        eval dir (Some n,ev,h)
     | Search::h ->
        step "sea$\color{red}^b$" ;
        let n,ev = move_right n ev in
        eval dir (Some n,ev,h)
     | Principal({content=VV _},_)::_ ->
        (* the input state is not reachable! *)
        assert false)

\end{lstlisting}

\section{Proofs for \autoref{sec:calculus} }\label{app:calculus}

\subsection{Proofs for \autoref{sec:plotkin-calculus}}
\begin{lemma}[Composition of right v-contexts]
	\label{lem:rvc-comp}
	Let $R, R'$ be two right v-contexts, then $R \plug{R'}$ is a right v-context.
\end{lemma}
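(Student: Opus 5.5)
Since right v-contexts are given by the inductive grammar $R ::= \hole \mid tR \mid Rv$, I will prove the statement by structural induction on the outer context $R$, keeping $R'$ fixed but arbitrary. Plugging a context into a context is defined homomorphically: $\hole\plug{R'} = R'$, $(tR_0)\plug{R'} = t(R_0\plug{R'})$ and $(R_0 v)\plug{R'} = (R_0\plug{R'})v$. These equations are the only facts I need beyond the grammar.

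\emph{Base case} $R = \hole$: here $R\plug{R'} = R'$, which is a right v-context by hypothesis.

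\emph{Case} $R = tR_0$: by the induction hypothesis $R_0\plug{R'}$ is a right v-context. Hence $R\plug{R'} = t(R_0\plug{R'})$ is a right v-context by the production $tR$, with the same term $t$.

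\emph{Case} $R = R_0 v$: by the induction hypothesis $R_0\plug{R'}$ is a right v-context. Hence $R\plug{R'} = (R_0\plug{R'})v$ is a right v-context by the production $Rv$. The value $v$ is untouched by plugging, because the hole lies inside $R_0$ and not inside $v$.

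No step is a real obstacle. The one point to state explicitly is that plugging $R'$ into $R$ replaces only the unique hole of $R$, and that hole always lies inside the $R_0$ subcontext. Consequently the side components $t$ and $v$ are preserved syntactically, and in particular $v$ remains a value. The same decomposition works if one prefers induction on $R'$: one would then use $R\plug{R'\plug{\cdot}}$ and associativity of plugging, but induction on $R$ is the more direct route.
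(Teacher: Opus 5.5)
Your proof is correct, and it is the standard argument: structural induction on the outer context $R$, using the homomorphic definition of plugging. The paper does not spell this out; it calls the lemma trivial and defers to Lemma~B.1 of the Crumbling Abstract Machines paper, and your induction is the routine argument that reference stands for.
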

\begin{proof}
  Trivial, see Lemma B.1 in \cite{accattoliCrumblingAbstractMachines2019}. \qed
\end{proof}

\subsection{Proofs for \autoref{sec:crumbled-calculus}}\label{app:proofs-crumbled}
\begin{lemma}
	\label{lem:lem1}
	Let $E_1\es{z}{xy}E_2$ be a closed crumble. Then $x, y \in \dom(E_2)$.
\end{lemma}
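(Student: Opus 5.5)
The argument is a direct unfolding of the definition of free variables for environments in \autoref{fig:size-and-length}. I would prove an auxiliary claim by induction on the length of $E_2$: for every environment $E$ and every variable $w$, if $w \in \fv(E\,\es{z}{xy})$ then $w \in \fv(E\,E_2')$ for any extension $E_2'$ unless $w \in \dom(E_2')$. In other words, free variables of a prefix either remain free in the whole environment or are captured by an entry on their right. Equivalently,
\[ \fv(E_1\es{z}{xy}) \subseteq \fv(E_1\es{z}{xy}E_2) \cup \dom(E_2). \]

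The induction runs on $E_2$, peeling entries from the right. For $E_2 = \epsilon$ the claim is trivial. For $E_2 = E_2'\es{w'}{b}$, the definition gives in each of the three cases (application bite, generic abstraction, identity/constant abstraction) that $\fv(E\es{w'}{b}) \supseteq \fv(E)\setminus\{w'\}$. Hence a variable free in $E_1\es{z}{xy}E_2'$ is either $w'$, which lies in $\dom(E_2)$, or still free in $E_1\es{z}{xy}E_2$. This is the only step with any content, and it amounts to checking that every clause of $\fv$ removes exactly the domain variable from the left part.

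From the clause for application bites, $\fv(E_1\es{z}{xy}) = \fv(E_1)\setminus\{z\}\cup\{x,y\}$, so $x,y \in \fv(E_1\es{z}{xy})$. Closedness of the crumble gives $\fv(E_1\es{z}{xy}E_2)=\emptyset$, and the displayed inclusion then yields $x,y\in\dom(E_2)$.

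The only subtlety I expect is that $x$ or $y$ might coincide with $z$ or with a variable in $\dom(E_1)$. However, the application clause adds $\{x,y\}$ after removing $z$, so $x$ and $y$ are free regardless, and the statement needs no well-naming hypothesis.
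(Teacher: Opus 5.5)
Your proposal is correct and takes essentially the same approach as the paper. The paper also generalizes to ``a free variable of the bite that is not free in the whole environment lies in the domain of the suffix,'' and proves it by structural induction on the suffix, peeling off the rightmost entry $\es{w'}{b}$ and splitting on whether the variable equals $w'$. The only difference is cosmetic: you keep the prefix $E_1$ inside the inclusion $\fv(E_1\es{z}{xy}) \subseteq \fv(E_1\es{z}{xy}E_2)\cup\dom(E_2)$, which spares you the small unstated step the paper needs when it phrases its generalization for $\es{z}{b}E$ without $E_1$.
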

\begin{proof}
  It is sufficent to prove the most general statement that for all $x,b,E$, if $x \in fv(b)$ and
  $x \not\in fv(\es{z}{b}E)$ then $x \in dom(E)$. The proof is by structural induction on $E$.
  In the base case $E$ is empty and the two hypotheses contradict each other. In the inductive step
  $E = E'\es{y}{b'}$ and $fv(\es{z}{b}E'\es{y}{b'}) = fv(\es{z}{b}E') \setminus \{y\} \cup fv(b')$.
  If $x = y$ then $y \in dom(E)$ and the statement holds. Otherwise from the second hypotheses
  we deduce $x \not\in fv(\es{z}{b}E')$. By inductive hypothesis $x \in dom(E') \subseteq dom(E)$. 
  \qed
\end{proof}

\begin{lemma}
  \label{lem:norm-crumb-implies-v}
	If the closed crumble $E$ is normal, then it is a $v$-crumble.
\end{lemma}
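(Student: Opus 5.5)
I would prove the contrapositive: if the closed crumble $E$ is not a $v$-crumble, then one of $\mone$, $\mtwo$ applies to it. The first move is to split $E$ as $E = E'E_v$, where $E_v$ is the longest $v$-environment suffix of $E$. If $E'$ is empty, then $E = E_v$ is a $v$-environment. Since $E$ is a crumble, its leftmost entry has the form $\ess{b}$, and that entry belongs to $E_v$, so $b$ is a value. Hence $E$ is a $v$-crumble, and this case is settled.

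Otherwise $E'$ is non-empty. By maximality of $E_v$, the rightmost entry of $E'$ is not a value, and bites are either values or applications of two variables. So we can write $E = E_1\es{z}{xy}E_v$ for some $E_1$.

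Next I apply Lemma~\ref{lem:lem1} to the closed crumble $E_1\es{z}{xy}E_v$. It gives $x,y\in\dom(E_v)$, so $E_v(x)$ is defined. Because every entry of $E_v$ is a value, $E_v(x)$ is a crumbled value. By the grammar of crumbled values it therefore has one of two shapes, $\lambda x_1.\ess{y_1}$ or $\lambda x_1.E''$ with $E''$ of the form $\ess{b}E$.

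These two shapes give the two rules.
\begin{itemize}
\item In the second shape, the side condition $(*)$ holds after $\alpha$-renaming, since $\alpha$-renaming and the substitution $\sub{x_1}{y}$ can always be performed and preserve the shape $\ess{b'}E'$. So $E\mone\cdot$ and $E$ is not normal.
\item In the first shape, the side condition $(\#)$ holds. Here $\ess{y_1}\sub{x_1}{y}=\ess{y'_1}$, and the reduct needs $E_v(y'_1)$ to be defined. If $y_1=x_1$, then $y'_1=y\in\dom(E_v)$. If $y_1\neq x_1$, then $y_1$ is free in the value bound to $x$ inside $E_v$, and closedness of $E$ forces $y_1\in\dom(E_v)$; more precisely it must lie to the right of the entry for $x$, which is the general statement from the proof of Lemma~\ref{lem:lem1}. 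So $E\mtwo\cdot$ and $E$ is not normal.
\end{itemize}

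The main obstacle I expect is exactly this definedness of $E_v(y'_1)$ in the $\mtwo$ case. It needs the generalized form of Lemma~\ref{lem:lem1} applied to a free variable of an abstraction bite rather than an application bite. I would first restate that generalization for arbitrary bites $b$ and note that the same structural induction on the suffix environment goes through unchanged. A closed suffix argument then shows that the variable is bound inside $E_v$ and not in $E_1$ or at $z$.
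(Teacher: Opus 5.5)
Your proof is correct and follows the same route as the paper. It argues by contradiction, writes $E = E_1\es{z}{xy}E_v$, and uses Lemma~\ref{lem:lem1} to get $x\in\dom(E_v)$ bound to a value, so that rule $m_1$ or $m_2$ fires. You also add details the paper leaves implicit: you justify the decomposition through the maximal v-environment suffix, and you use the generalized statement from the proof of Lemma~\ref{lem:lem1} to check that $E_v(y'_1)$ is defined in the $m_2$ case. Both your argument and the paper's tacitly assume that abstraction bodies are crumbles of the form $\ess{b}E$.
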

\begin{proof}
	Suppose that $E$ is not a v-crumble, then $E = E_1\es{z}{xy}E_v$. However, since $E$ is closed, by \autoref{lem:lem1} $x \in \dom(E_v)$, and since $E_v$ is a v-environment, there will be an ES in $E_v$ of the form $\es{x}{v}$, but this would trigger a new reduction contradicting the hypothesis that $E$ is normal.
	\qed
\end{proof}

\begin{lemma}[Crumble commutes with renaming]
  The crumbling transformation commutes with the renaming of free variables, i.e. for any term $t$, $\crumb{t}\sub{z}{y} = \crumb{t \sub{z}{y}}$ if $z,y \notin \dom(\crumb{t})$.
	\label{lem:crumble-commute-free}
\end{lemma}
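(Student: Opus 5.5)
The plan is a structural induction on $t$ that follows the six clauses defining $\crumb{\cdot}$ in \autoref{fig:crumbled-forms}. Throughout, $\sub{z}{y}$ is the meta-level variable-for-variable substitution. On crumbles it acts on the bites of every entry and descends into abstraction bodies. The side condition $z,y \notin \dom(\crumb{t})$ means that neither variable is one of the fresh variables introduced by the translation. Hence the substitution never touches a left-hand side $\es{w}{\cdot}$ and never captures $y$ under an explicit substitution.

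Two things must be fixed before starting. First, the freshly chosen variables on the two sides must be the same, so that equality holds syntactically and not merely up to $\alpha$. I would choose the fresh names of $\crumb{t\sub{z}{y}}$ to coincide with those of $\crumb{t}$; this is legitimate because the fresh names are taken to be distinct from $y$. Second, I would also assume Barendregt's convention that $z,y$ differ from the $\lambda$-bound variables of $t$. This keeps the term-level substitution capture-free and lets it commute with $\lambda$. The convention is harmless because it can always be obtained by $\alpha$-renaming.

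Base cases. For $t = x_1x_2$ we have $\crumb{t}\sub{z}{y} = \ess{x_1x_2}\sub{z}{y} = \ess{(x_1x_2)\sub{z}{y}} = \crumb{(x_1x_2)\sub{z}{y}}$, since the substituted term is again an application of two variables. For $t=\lambda x.w$ we get $\es*{\lambda x.\es*{w\sub{z}{y}}}$ on both sides, using $x\neq z,y$. For $t=\lambda x.u$ with $u$ not a variable, the inductive hypothesis on $u$ gives $\crumb{u}\sub{z}{y}=\crumb{u\sub{z}{y}}$; this is applicable because $\dom(\crumb{u})\subseteq\dom(\crumb{t})$ in the sense above. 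Moreover $u\sub{z}{y}$ is again not a variable, so the same clause of the translation is used on both sides.

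Application cases. For $t = uw$, write $\crumb{u}=\ess{b}E$. By the inductive hypothesis, $\crumb{u\sub{z}{y}} = \ess{b\sub{z}{y}}E\sub{z}{y}$. Substitution then distributes over the concatenation $\ess{x\,w}\es{x}{b}E$, since $x$ is fresh and hence $x\neq z,y$; this matches the clause for $\crumb{u\sub{z}{y}\,w\sub{z}{y}}$. The other two application clauses are handled in the same way, applying the inductive hypothesis to both $u$ and $u'$ when they are compound.

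Main obstacle. The delicate point is that the choice among the six clauses depends on whether subterms are variables, and substitution preserves this property. So I must check that $t$ and $t\sub{z}{y}$ always fall into the same clause, which holds because variables map to variables and compound terms to compound terms. The other subtle point is the bookkeeping of fresh names and bound variables discussed above: it is what justifies pushing $\sub{z}{y}$ through each explicit substitution without renaming, and what gives syntactic rather than $\alpha$-equality.
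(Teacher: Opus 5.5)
Your proof is correct and follows essentially the same route as the paper: a structural induction on $t$ along the clauses of the crumbling translation, pushing $\sub{z}{y}$ through the explicit substitutions because $z,y\notin\dom(\crumb{t})$ and then closing each case with the inductive hypothesis. The differences are minor. The paper treats $\lambda z.t'$ as an explicit case in which the substitution has no effect, whereas you dispose of binder issues via Barendregt's convention. You also spell out the $x_1x_2$ base case and the hereditary reading of the domain condition, both of which the paper leaves implicit.
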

\begin{proof}
  By structural induction on $t$.

  If $t := \lambda z. t'$ then $t \sub{z}{y} = t$ and
  $\crumb{\lambda z.t'}\sub{z}{y} = \es{*}{\lambda z.\crumb{t'}}\sub{z}{y} = \crumb{\lambda z.t'}$; thus the property holds.

  If $t := \lambda x. z$ then $\crumb{t}\sub{z}{y} = \ess{\lambda x. \ess{z}}\sub{z}{y} = \ess{\lambda x. \ess{y}} = \crumb{t\sub{z}{y}}$.

  If $t := \lambda x. t'$ with $x \ne z$ then $\crumb{t} \sub{z}{y} = \ess{\lambda x. \crumb{t'}} \sub{z}{y}$, the claim follows by i.h. on $t'$.

  If $t := u u'$ then 
  \begin{alignat*}{3}
    \crumb{t}\sub{z}{y} 
    & = (\ess{x_1 x_2}\es{x_1}{b}E \es{x_2}{b'} E')\sub{z}{y} \\
    & = \ess{x_1 x_2}\sub{z}{y} (\es{x_1}{b}E)\sub{z}{y} (\es{x_2}{b'} E')\sub{z}{y} && \text{\quad by $z,y\notin \dom(\crumb{t})$} \\
    & = \ess{x_1 x_2}(\es{x_1}{b}E)\sub{z}{y} (\es{x_2}{b'} E')\sub{z}{y} \\
    & = \crumb{t \sub{z}{y}} && \text{\quad by i.h. on } u, u'
  \end{alignat*}

  If $t := ux$ then
  \begin{alignat*}{3}
    \crumb{t}\sub{z}{y} 
    & = (\ess{x_1 x} \es{x_1}{b}E) \sub{z}{y} \\
    & = \ess{x_1 x}\sub{z}{y} (\es{x_1}{b}E)\sub{z}{y} &&  \text{\quad by $z,y\notin \dom(\crumb{t})$} \\
    & = \ess{x_1 x} (\es{x_1}{b}E)\sub{z}{y} \\
    & = \crumb{t \sub{z}{y}} && \text{\quad by i.h. on } u, u'
  \end{alignat*}

  If $t := xu'$ the proof is analogous to the previous case.

  If $t := uz$ then
  \begin{alignat*}{3}
    \crumb{t}\sub{z}{y} 
    & = (\ess{x_1 z} \es{x_1}{b}E) \sub{z}{y} \\
    & = \ess{x_1 z}\sub{z}{y} (\es{x_1}{b}E)\sub{z}{y} &&  \text{\quad by $z,y\notin \dom(\crumb{t})$} \\
    & = \ess{x_1 y} (\es{x_1}{b}E)\sub{z}{y} \\
    & = \crumb{t \sub{z}{y}} && \text{\quad by i.h. on } u, u'
  \end{alignat*}

  If $t := zu'$ the proof is analogous to the previous case.
 \qed
\end{proof}

\begin{remark}[Crumble commutes with $\alpha$-renaming]
  \label{lem:crumble-commute-alpha}
  The crumbling transformation commutes with the the $\alpha$-renaming. \qed
\end{remark}

\begin{lemma}[Subterm Property]
	\label{lem:subterm-prop}
	Each body in a reachable crumble $E$ is a sub-term (up to renaming and substitution of variables to variables) of the initial crumble.
\end{lemma}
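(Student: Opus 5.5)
The plan is to prove a slightly stronger statement by induction on the length $n$ of the reduction sequence $\crumb{t} \to_{Cr}^n E$: \emph{every body occurring in $E$ is obtained from a body occurring in $\crumb{t}$ by $\alpha$-renaming and by substituting variables for variables}. Here a body is an environment $\ess{b}E'$ appearing inside some crumbled value $\lambda x.\ess{b}E'$, or the degenerate body $\ess{y}$ of $\lambda x.\ess{y}$. Sizes are insensitive to both renaming and variable-for-variable substitution. This class of transformations is closed under composition, which is what lets the induction go through.

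\textbf{Base case.} For $n=0$ we have $E = \crumb{t}$, so there is nothing to prove: each body is trivially a subterm of itself.

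\textbf{Inductive step.} Assume $E \to_{Cr} E'$ with $E$ satisfying the property. I split on the rule used.

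\emph{Case $\mtwo$.} We have $E = E_1\es{z}{xy}E_v$ and $E' = E_1\es{z}{E_v(y'_1)}E_v$. The only new entry is the value $E_v(y'_1)$, which is a copy of a value already present in $E_v$. Its bodies therefore already occur in $E$, and the property is inherited directly.

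\emph{Case $\mone$.} We have $E = E_1\es{z}{xy}E_v$ and $E' = E_1\es{z}{b'}E''E_v$, where $\ess{b'}E'' = (\ess{b}E_0)\sub{x_1}{y}$ and $\lambda x_1.\ess{b}E_0 = E_v(x)^\alpha$. Every body occurring in $\ess{b'}E''$ arises from a body $B$ occurring in $E_v(x)$ by first applying the $\alpha$-renaming and then the substitution $\sub{x_1}{y}$. By the induction hypothesis, $B$ is itself a renaming-and-substitution instance of a body of $\crumb{t}$. Composing the two transformations gives the claim for all the new bodies. The entries of $E_1$ and $E_v$ are untouched, so their bodies are inherited as well.

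\textbf{Main obstacle.} The delicate point is making precise that $\alpha$-renaming and variable-for-variable substitution commute with taking sub-bodies, that is, that the bodies of $B\sub{x_1}{y}$ are exactly the images under the substitution of the bodies of $B$. I would establish this by a small auxiliary induction on environments, using well-naming (\refitem{lem:crumb-invariants}{lem:invariant-well-naming}) to rule out capture.

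A further subtlety concerns the degenerate shape. A substitution may turn $\lambda x_2.\ess{w}$ into $\lambda x_2.\ess{y}$, but it never changes the shape of a body, so the two syntactic classes of crumbled values are preserved. Finally, \autoref{lem:crumble-commute-free} and \autoref{lem:crumble-commute-alpha} allow one to restate the conclusion in terms of crumblings of subterms of $t$, if that stronger form is wanted for the implementation proof.
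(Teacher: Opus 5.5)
Your proposal is correct and follows the paper's own proof: induction on the number of $\to_{Cr}$ steps, with $\mone$ handled as a copy of an existing body that is $\alpha$-renamed and then substituted variable-for-variable, and $\mtwo$ as a verbatim copy of a value already in $E_v$. You only make explicit some details the paper leaves implicit: that these transformations compose, and that the sub-bodies of a substituted body are the substituted sub-bodies.
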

\begin{proof}
  We proceed by induction on the number of steps to reach $E$. In the base case $E$ is the initial term
  and the proof holds trivially. In the inductive step $E$ is the rhs of a reduction rule whose lhs
  is reachable in one step less. Therefore the inductive hypothesis holds for the lhs of the rule,
  i.e. all bodies in the lhs are sub-terms (up to renaming and substitution of variable to variable) of the initial crumble.
  We proceed by cases on the kind of rule.

  The rule $\mone$ copies and renames the body of an abstraction already present in the environment,
  then it substitutes on it a variable for another. Therefore all the new bodies, that are inside,
  have the property of interest.

  The rule $\mtwo$ copies the body of an abstraction already present in the environment,
  whose bodies have the property of interest. \qed
\end{proof}

\begin{corollary}
  \label{cor:body-is-a-crumble}
  Let $E$ be a reachable crumble, then every abstraction in $E$ is the finest crumbling of a term $t$.
\end{corollary}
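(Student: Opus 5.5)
The statement follows from the Subterm Property (\autoref{lem:subterm-prop}) together with the Bodies clause of \autoref{lem:crumb-properties}. The plan is to show that the property ``is the finest crumbling of a term'' is closed under the operations that the Subterm Property allows on bodies, namely $\alpha$-renaming and substitution of a variable for a variable. Here ``every abstraction'' means every bite $\lambda x.\ess{y}$ or $\lambda x.E_b$ occurring anywhere in $E$, including nested inside bodies. We show that such a bite, taken as $\ess{\lambda x.E_b}$, equals $\crumb{\lambda x.t}$ for some term $t$; equivalently, its body $E_b$, or $\ess{y}$, is $\crumb{t}$.

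\textbf{Base case: the initial crumble.} If $E$ is the initial crumble $\crumb{t_0}$, every body in it is the translation of a term by \autoref{lem:crumb-properties}, item Bodies. The claim follows by inspecting the clauses $\crumb{\lambda x.y}=\ess{\lambda x.\ess{y}}$ and $\crumb{\lambda x.t}=\ess{\lambda x.\crumb{t}}$. If one wants this self-contained, a structural induction on $t_0$ along the crumbling clauses shows that every abstraction bite produced is of one of these two forms applied to a subterm of $t_0$. Note that $\crumb{t}$ for a non-variable $t$ never has the shape $\ess{y}$, so the two bite forms correspond exactly to variable bodies and non-variable bodies.

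\textbf{General case: a reachable $E$.} By \autoref{lem:subterm-prop}, each body in $E$ has the form $(\crumb{t'}\text{ renamed})\sub{x_1}{y}$, where $\crumb{t'}$ is a body of the initial crumble. We then need two facts.
\begin{itemize}
\item[(i)] Crumbling commutes with $\alpha$-renaming, by \autoref{lem:crumble-commute-alpha}.
\item[(ii)] Crumbling commutes with substitution of variables for free variables, $\crumb{t}\sub{z}{y}=\crumb{t\sub{z}{y}}$, by \autoref{lem:crumble-commute-free}. Its side condition $z,y\notin\dom(\crumb{t})$ has to be discharged. Well-naming (\autoref{lem:crumb-invariants}, Freshness) and the $\alpha$-renaming performed in $\to_{m_1}$ give that the domain of the copied body consists of fresh names. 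So neither the abstracted variable $x_1$ nor the argument variable $y$ lies in it: $x_1$ is $\lambda$-bound, hence disjoint from domains, and $y$ is a domain variable of the outer environment.
\end{itemize}
Hence each body equals $\crumb{t'^{\alpha}\sub{x_1}{y}}$, which is the crumbling of a term. In particular an identity or constant body $\ess{y_1}$ stays of that shape after substitution, as in rule $(\#)$.

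\textbf{Main obstacle.} The delicate step is the side condition in (ii) together with ensuring that the copied subterm is \emph{exactly} a body, not an arbitrary suffix. I would first try to strengthen the induction of \autoref{lem:subterm-prop} so that it carries the statement ``every body of $E$ is $\crumb{t}$ for some $t$ with $\dom(\crumb{t})$ fresh''. Then check the rules one by one. For $\to_{m_1}$, the new entries $\es{z}{b'}E'$ come from a body $\ess{b}E$ whose inner abstractions are bodies already covered by the hypothesis, transformed by (i) and (ii). For $\to_{m_2}$, only an existing value is copied, so the hypothesis applies directly.
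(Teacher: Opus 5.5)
Your proposal is correct and follows essentially the same route as the paper. The paper also dispatches the identity/constant case $\lambda x_1.\ess{y}$ directly, and for a general body combines the Subterm Property (\autoref{lem:subterm-prop}) with the commutation of crumbling with $\alpha$-renaming and with variable-to-variable substitution (\autoref{lem:crumble-commute-alpha}, \autoref{lem:crumble-commute-free}). Your explicit discharge of the side condition $z,y\notin\dom(\crumb{t})$, and your strengthened induction covering iterated substitutions and guaranteeing that what is copied is exactly a body, add rigour the paper's short proof leaves implicit.
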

\begin{proof}
  We distinguish between two cases.
  If the abstraction is $\lambda x_1. \ess{y}$, then trivially $E_b$ is the translation of the term $\lambda x_1. y$.
  Otherwise, the abstraction is $\lambda x_1. b$ where $b$ is a body that is not of the form $\ess{y}$ for some $y$.
  Therefore, by~\autoref{lem:subterm-prop}, $b$ is equal to $\crumb{t}$ up to renaming and substitution
  of variable to variable.
  By~\autoref{lem:crumble-commute-free} and~\autoref{lem:crumble-commute-alpha},
  since both renaming and substituting a variable to a variable commute with crumbling, there must be
  a $t'$ such that $b = \crumb{t'}$. Therefore $\lambda x_1.b = \lambda x_1.\crumb{t'} = \crumb{\lambda x_1.t'}$ and the statement holds.  \qed
\end{proof}

\begin{lemma}
  \label{rem:fv-term-crumb}
  For any $\lambda_{Plot}$-term $t$ that is not a variable, $\fv(t) = \fv(\crumb{t})$.
\end{lemma}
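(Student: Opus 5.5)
We prove the statement by structural induction on $t$, following the case split of the crumbling translation in \autoref{fig:crumbled-forms}. Before the induction we need one auxiliary fact about environments: for $\ess{b}E$, every variable in $\dom(\ess{b}E)\setminus\{*\}$ is a fresh name introduced by the translation. By the free-variable equations in \autoref{fig:size-and-length}, if every name in $\dom(E)$ is fresh and every occurrence of a domain variable is bound by its entry, then $\fv(\ess{b}E)$ is exactly the union of the free variables of the bites, minus the fresh domain names. Since the fresh names never occur in $t$, they never belong to $\fv(t)$.

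\textbf{Base cases.} The claim is checked directly in each case.
\begin{itemize}
\item For $t=xy$ we have $\fv(\ess{xy})=\{x,y\}$.
\item For $t=\lambda x.y$ we get $\fv(\ess{\lambda x.\ess{y}})=\{y\}\setminus\{x\}=\fv(t)$.
\item For $t=\lambda x.t'$ with $t'$ not a variable, $\fv(\ess{\lambda x.\crumb{t'}})=\fv(\crumb{t'})\setminus\{x\}$. By the induction hypothesis on $t'$ this equals $\fv(t')\setminus\{x\}=\fv(t)$.
\end{itemize}
Here $\fv(\ess{b})$ is read through the domain clause, with $*$ never free.

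\textbf{Application cases.} Consider $t=uu'$ with $u,u'$ not variables, so $\crumb{u}=\ess{b}E$ and $\crumb{u'}=\ess{b'}E'$. We would first prove a small lemma by induction on the length of $E$: for $z$ fresh,
\[
\fv(\es{z}{b}E)=\fv(\ess{b}E)\setminus\{z\},
\]
and more generally $\fv(E_1E_2)=(\fv(E_1)\setminus\dom(E_2))\cup\fv(E_2)$ whenever $\dom(E_2)$ is disjoint from the variables of $E_1$ that come from outside $E_2$. With this lemma,
\[
\fv(\ess{xy}\es{x}{b}E\es{y}{b'}E')=\fv(\crumb{u})\cup\fv(\crumb{u'}).
\]
The fresh $x,y$ are removed, and the names in $\dom(E)$ and $\dom(E')$ do not clash with each other, by freshness (\autoref{lem:crumb-properties}). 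The induction hypothesis then gives $\fv(u)\cup\fv(u')=\fv(t)$. The cases $uy$ and $xu'$ are the same argument, with the variable contributing $\{y\}$ or $\{x\}$ directly.

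\textbf{Main obstacle.} The hardest step is this concatenation lemma. Its proof needs the freshness of the names introduced by crumbling: $\dom(E)$ must be disjoint from $\fv(\crumb{u'})$, and from the names in $\fv(u)$. Otherwise concatenation could capture a free variable of the other component. I would first state and prove the concatenation lemma by induction on $E_2$, using the clause $\fv(E\es{z}{b})=\fv(E)\setminus\{z\}\cup\fv(b)$. I would then discharge the disjointness side conditions from the freshness property of \autoref{lem:crumb-properties}, whose proof only relies on the fresh names being chosen outside all variables of $t$.
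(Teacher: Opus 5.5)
Your proposal is correct and follows the paper's route, a structural induction on $t$ along the clauses of the crumbling translation; the only difference is that the paper simply asserts the environment-level equalities in the application cases (e.g.\ $\fv(\ess{b}E)\cup\fv(\ess{b'}E')=\fv(\ess{xy}\es{x}{b}E\es{y}{b'}E')$), whereas you justify them through the concatenation identity $\fv(E_1E_2)=(\fv(E_1)\setminus\dom(E_2))\cup\fv(E_2)$, which in fact holds with no side condition at all. Two small fixes are needed. First, an entry binds only to its left, so the disjointness actually used in the $uu'$ case is $(\{y\}\cup\dom(E'))\cap\fv(\crumb{u})=\emptyset$ (and $y\notin\dom(E)$ in the $uy$ case), not a condition between $\dom(E)$ and $\fv(\crumb{u'})$. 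Second, you should take this disjointness directly from the global freshness of the names chosen by the translation, not from \autoref{lem:crumb-properties}: that lemma is stated for closed terms only, and its proof (\autoref{lem:crumb-t-well-named}) itself invokes the present lemma, so citing it would be circular.
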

\begin{proof}
  We proceed by structural induction on $t$.
  \begin{itemize}
   \item Case $x$: impossible since $t$ is not a variable by hypothesis
   \item Case $\lambda x.y$ for $x,y$ not necessarily distinct:
     $\crumb{\lambda x.y} = \ess{\lambda x.\ess{y}}$ and
      $\fv(\lambda x.y) = \{y\} \setminus \{x\} = \fv(\ess{\lambda x.\ess{y}})$
    \item Case $\lambda x.t'$ for $t'$ not a variable:
      $\crumb{\lambda x.t'} = \ess{\lambda x.\crumb{t'}}$
      and by i.h. $\fv(t') = \fv(\crumb{t'})$. We have
      $$\fv(\lambda x.t') = \fv(t') \setminus \{x\} = \fv(\crumb{t'}) \setminus \{x\} =
        \fv(\ess{\lambda x.\crumb{t'}})$$
   \item Case $xy$: $\crumb{xy} = \ess{xy}$. We have
     $$\fv(xy) = \{x,y\} = \fv(\ess{xy})$$
   \item Case $uy$ for $u$ not a variable: $\crumb{uy} = \ess{xy}\es{x}{b}E$ where $\crumb{u}=\ess{b}E$ and thus, by i.h., $\fv(u) = \fv(\ess{b}E)$. We have
     $$\fv(uy) = \fv(u) \cup \{y\} = \fv(\ess{b}E) \cup \{y\} = \fv(\ess{xy}\es{x}{b}E)$$
   \item Case $xu'$ for $u'$ not a variable: $\crumb{xu'} = \ess{xy}\es{y}{b'}E'$ where $\crumb{u'}=\ess{b'}E'$ and thus, by i.h., $\fv(u') = \fv(\ess{b'}E')$. We have
     $$\fv(xu') = \fv(u') \cup \{x\} = \fv(\ess{b'}E') \cup \{x\} = \fv(\ess{xy}\es{y}{b'}E')$$
   \item Case $uu'$ for $u,u'$ not a variable: $\crumb{uu'} = \ess{xy}\es{x}{b}E\es{y}{b'}E'$ where $\crumb{u}=\ess{b}E$ and $\crumb{u'}=\ess{b'}E'$ and thus, by i.h., $\fv(u) = \fv(\ess{b}E)$ and $\fv(u') = \fv(\ess{b'}E')$. We have
     \begin{alignat*}{1}
         & \fv(uu')\\
       = & \fv(u) \cup \fv(u')\\
       = & \fv(\ess{b}E) \cup \fv(\ess{b'}E')\\
       = & \fv(\ess{xy}\es{x}{b}E\es{y}{b'}E') \tag*{\qed}
     \end{alignat*}
  \end{itemize}
\end{proof}

\begin{lemma}
  \label{lem:crumb-closed-prop}
	For any term $t$, if $t$ is closed then $\crumb{t}$ is closed.
\end{lemma}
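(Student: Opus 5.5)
The plan is a case split on whether $t$ is a variable, using \autoref{rem:fv-term-crumb} for everything except the variable case.

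\emph{Case $t$ is not a variable.} By \autoref{rem:fv-term-crumb} we have $\fv(\crumb{t}) = \fv(t)$. Since $t$ is closed, $\fv(t) = \emptyset$, and therefore $\fv(\crumb{t}) = \emptyset$, i.e.\ $\crumb{t}$ is closed.

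\emph{Case $t$ is a variable $x$.} Then $\fv(t) = \{x\} \neq \emptyset$, so $t$ is not closed. The hypothesis is false and the statement holds vacuously. The crumbling of a bare variable is not even given a clause in \autoref{fig:crumbled-forms}, so the restriction in \autoref{rem:fv-term-crumb} costs nothing here.

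The only point needing care is that \autoref{rem:fv-term-crumb} compares two different notions of free variables: $\fv$ on $\lambda_{Plot}$-terms and $\fv$ on environments as defined in \autoref{fig:size-and-length}. The lemma already does this bookkeeping, including the special clause for $\lambda x.\ess{y}$, so no induction has to be redone here.

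If one wanted a self-contained argument, the fallback would be a structural induction mirroring the proof of \autoref{rem:fv-term-crumb}. The key computation in the application cases would be that $\fv(\ess{xy}\es{x}{b}E\es{y}{b'}E')$ loses $x$ and $y$ through the domain clauses. This is where freshness of $x,y$ matters: a fresh variable cannot capture a genuine free variable of $u$ or $u'$. I expect that freshness bookkeeping to be the only real obstacle.
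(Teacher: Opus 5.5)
Your proof is correct and is the paper's argument: a closed $t$ cannot be a variable, so \autoref{rem:fv-term-crumb} applies and gives $\fv(\crumb{t}) = \fv(t) = \emptyset$. The fallback induction you sketch is not needed, because that lemma already handles the free-variable bookkeeping.
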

\begin{proof}
  If $t$ is closed then it is not a variable and thus by~\autoref{rem:fv-term-crumb}
  $\emptyset = \fv(t) = \fv(\crumb{t})$ and thus $\crumb{t}$ is closed as well.  \qed
\end{proof}

\begin{lemma}
  \label{lem:closed-lem-1}
  For any closed term $t$, if $\crumb{t} = E_1 \es{z}{b} E_2$, then $b$ is closed in $E_2$, i.e. $\fv(b) \in \dom(E_2)$.
\end{lemma}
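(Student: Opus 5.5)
The plan is to derive the statement from closure of $\crumb{t}$ (\autoref{lem:crumb-closed-prop}) together with the generalized argument already used in the proof of \autoref{lem:lem1}. That proof actually establishes the following: for all $x,b,E$, if $x\in\fv(b)$ and $x\notin\fv(\es{z}{b}E)$ then $x\in\dom(E)$. The argument there never uses the shape of $b$, so it applies verbatim to an arbitrary bite $b$, including abstractions. It also does not need $\es{z}{b}$ to be the leftmost entry.

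The steps are as follows. First, by \autoref{lem:crumb-closed-prop}, $\crumb{t} = E_1\es{z}{b}E_2$ is closed. Second, we show that $\fv(\es{z}{b}E_2)\subseteq\fv(E_1\es{z}{b}E_2)\cup\dom(E_2)$, or more simply that any $x\in\fv(b)$ with $x\notin\dom(E_2)$ would be free in the whole crumble. We argue this by structural induction on $E_2$, unfolding the definition of $\fv$ from the right. Each entry $\es{w}{b''}$ of $E_2$ only removes $w$ from the free variables of what lies to its left. Prefixing $E_1$ on the left cannot bind anything either, because by the definition of $\fv$ the variables bound by $E_1$'s entries are removed only from the free variables of the entries to their left, never from those to their right. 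We will make this a small auxiliary claim: $\fv(E_1E)\supseteq\fv(E)\setminus\dom(E_1)$, read carefully with respect to the left-to-right binding discipline. Third, we conclude. Take $x\in\fv(b)$. If $x\notin\dom(E_2)$, then $x\in\fv(\es{z}{b}E_2)$, hence $x\in\fv(\crumb{t})=\emptyset$, which is a contradiction. So $\fv(b)\subseteq\dom(E_2)$.

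The main obstacle is the binding direction in the $\fv$ definition. Given $\fv(E\es{z}{b}) = \fv(E)\setminus\{z\}\cup\fv(b)$, an entry binds occurrences to its left, so $E_1$ genuinely cannot capture variables of $b$. I would check this auxiliary claim first by induction on $E_2$, reusing exactly the case split of \autoref{lem:lem1} (whether the last entry's variable equals $x$ or not). An alternative, if that route gets messy, is a direct structural induction on $t$ following the crumbling clauses. That approach uses freshness (\autoref{lem:crumb-properties}) to handle the concatenation $\es{x}{b}E\es{y}{b'}E'$, in which the entries of $b$'s environment are closed by the induction hypothesis on $u$, $u'$ up to variables that are free in $u$ or $u'$.
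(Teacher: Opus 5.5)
Your proof is correct, but it takes a different route from the paper's.

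\textbf{The paper's route.} The paper argues by structural induction on $t$ along the crumbling clauses:
the open shapes ($xu'$, $ux$, and $\lambda x.y$ with $y\neq x$) are discarded;
the abstraction case $\ess{\lambda x.\crumb{t'}}$ is settled by \autoref{lem:crumb-closed-prop};
for $uu'$, the entries of $\crumb{u}$ and $\crumb{u'}$ are covered by the induction hypothesis on the closed subterms $u,u'$, and $\es{*}{xy}$ is fine because $x,y$ are defined to its right.

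\textbf{Your route.} You treat the statement as a purely syntactic consequence of closure. Each step $S\mapsto (S\setminus\{w\})\cup\fv(b'')$ in the definition of $\fv$ is monotone and removes only the entry's own variable. Hence $\fv(b)\setminus\dom(E_2)\subseteq\fv(E_1\es{z}{b}E_2)$ for any environment, and closure of $\crumb{t}$ concludes.

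\textbf{One small correction.} Your auxiliary claim as written, $\fv(E_1E)\supseteq\fv(E)\setminus\dom(E_1)$, is too weak for your third step, because it leaves the case $x\in\dom(E_1)$ open. What your justification actually establishes, and what you need, is the unrestricted inclusion $\fv(E_1E)\supseteq\fv(E)$. It follows by the same monotonicity, since $\fv(E_1)\supseteq\fv(\epsilon)=\emptyset$.

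\textbf{What each approach buys.} With that fix, your argument is shorter and needs no case analysis on $t$. It subsumes \autoref{lem:lem1} (the case $b=xy$), and it holds for every closed environment, not only for translations $\crumb{t}$. That generality is what the paper actually needs. In the proof of \autoref{lem:crumb-reachable-closed}, the lemma is applied to a reachable crumble $E_p$, which a statement restricted to $\crumb{t}$ and proved by induction on $t$ does not literally cover. The paper's induction does show concretely how crumbling places every definition to the right of its uses, but that insight is not needed for the statement itself.
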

\begin{proof}
  By structural induction on $t$.

  The cases $t = \lambda x.y$ with $y \neq x$, $t = x u'$ and $t = u x$ are not possible because $t$ is closed.

  If $t = \lambda x. x$ then $\crumb{t} = \ess{\lambda x. \ess{x}}$ and $\fv(\lambda x. \ess{x}) = \emptyset$.

  If $t = \lambda x. t'$ then $\crumb{t} = \ess{\lambda x. \crumb{t'}}$ that is closed by~\autoref{lem:crumb-closed-prop}.

  If $t = u u'$ where $u,u'$ are not variables, then $\crumb{t} = \ess{xy} \es{x}{b'}E' \es{y}{b''}E''$; by i.h. on $u, u'$ the claim is true for $z \in \dom(\es{x}{b'}E')$ and $z \in \dom(\es{y}{b''}E'')$.
  We just need to show that $x, y$ are closed in $\es{x}{b'} E' \es{y}{b''} E''$ but this is trivial as the environment contains $\es{x}{b'}$ and $\es{y}{b''}$. \qed
\end{proof}

\begin{lemma}
  \label{lem:crumb-reachable-closed}
  Given a closed term $t$, any reachable crumble $E$ starting from $\crumb{t}$ is closed.
\end{lemma}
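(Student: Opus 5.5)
We argue by induction on the number $n$ of $\to_{Cr}$-steps from $\crumb{t}$ to $E$. To have an inductive hypothesis that is strong enough, we prove simultaneously the stronger \emph{local closure} property: whenever $E = E_1 \es{z}{b} E_2$, we have $\fv(b) \subseteq \dom(E_2)$. This is exactly the shape of \autoref{lem:closed-lem-1}. Local closure for every entry implies $\fv(E) = \emptyset$; this follows by a routine induction on the length of $E$, unfolding the definition of $\fv$ from the right. The base case $n = 0$ is \autoref{lem:closed-lem-1} together with \autoref{lem:crumb-closed-prop}.

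For the inductive step, let $E = E_0\es{z}{xy}E_v \to_{Cr} E'$, where $E_0\es{z}{xy}E_v$ is locally closed. Entries of $E_v$ are untouched and keep the same right-hand environment, so they stay locally closed. Entries in $E_0$ have a right-hand environment whose domain can only grow: in both rules $z$ remains in the domain, and $m_1$ adds $\dom(E')$. Their free variables are unchanged, so they stay locally closed as well. The real work concerns the rewritten entry for $z$ and, in case $m_1$, the newly inserted entries $E'$.

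\emph{Case $m_2$.} The new bite is $E_v(y_1')$, where $\ess{y_1'} = \ess{y_1}\sub{x_1}{y}$. If $y_1 = x_1$ then $y_1' = y$. If $y_1 \neq x_1$, then $y_1$ is free in the abstraction $\lambda x_1.\ess{y_1}$ bound at $x$ in $E_v$, so by local closure $y_1$ lies in the domain of the suffix to the right of that entry, which is a subset of $\dom(E_v)$. In either case $y_1' \in \dom(E_v)$, because $y \in \dom(E_v)$ by local closure of $\es{z}{xy}$. So the looked-up value $E_v(y_1') = v$ comes from some entry $\es{y_1'}{v}$ of $E_v$, and $\fv(v)$ lies in the domain of the suffix after that entry, which is contained in $\dom(E_v)$.

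\emph{Case $m_1$.} This is the main obstacle: we must show that the copied body $(\ess{b}E)\sub{x_1}{y}$ is locally closed relative to $E_v$. We would prove an auxiliary claim by induction on the length of a body $\ess{b}E$ occurring under $\lambda x_1$ inside an entry of $E_v$. The claim is that for each entry $\es{w}{b''}$ of the body with right part $E''$, we have $\fv(b'') \subseteq \dom(E'') \cup \{x_1\} \cup \dom(E_v)$.

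To obtain this claim we need the bodies themselves to be locally closed up to the bound variable and outer free variables. We would therefore strengthen the invariant once more: every body appearing in a reachable crumble is locally closed modulo its free variables. This holds initially by the argument of \autoref{lem:closed-lem-1} applied to subterms, using \autoref{cor:body-is-a-crumble} and the fact that translations $\crumb{t'}$ of open terms are still locally closed modulo $\fv(t')$. It is also preserved, since renaming and variable-to-variable substitution commute with crumbling, by \autoref{lem:crumble-commute-free} and \autoref{lem:crumble-commute-alpha}.

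Given the auxiliary claim, substituting $y$ for $x_1$ with $y \in \dom(E_v)$ yields the required inclusion for the copied environment. The $\alpha$-renaming in $E_v(x)^\alpha$ preserves the structure and, by freshness, does not capture variables. Finally, $\fv(\lambda x_1.\ess{b}E) \subseteq \dom(E_v)$ because that abstraction's entry sits inside $E_v$, so all remaining free variables are indeed bound in $E_v$.
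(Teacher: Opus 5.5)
Your proof is correct and is essentially the paper's argument. Both proceed by induction on the number of steps. In the $m_1$ case, the abstraction bound at $x$ has its free variables in $\dom(E_v)$ and $y \in \dom(E_v)$, so the substituted body $\ess{b'}E'$ is closed in $E_v$; the $m_2$ case is handled the same way, and you simply spell it out.

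Your two strengthenings are harmless but not needed. By the definition of $\fv$, an environment is closed exactly when each entry's bite has its free variables in the domain of the suffix to its right; this is also what justifies the paper's use of \autoref{lem:closed-lem-1} on the non-initial $E_p$. Likewise, for any body $B$ each entry's free variables lie in the domain of its suffix together with $\fv(B)$, so the appeal to \autoref{cor:body-is-a-crumble} and the commutation lemmas can be dropped.
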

\begin{proof}
  We proceed by induction on the number of steps to reach $E$.

  In the base case $E$ is the initial crumble $\crumb{t}$ that by \autoref{lem:crumb-closed-prop} $E$ is closed.

  In the inductive step, $E$ is the rhs of the reduction rule i.e. $E_p \to_a E, a \in \{m_1, m_2\}$ , so the hypothesis holds for $E_p$. Now we proceed by cases on $a$.

  If $a = m_1$ then $E_p = E_1 \es{z}{xy} E_v \mone E_1 \es{z}{b'}E' E_v$ it suffices to show that $\es{z}{b'}E'$ is closed in $E_v$. 
  By i.h. $E_p$ is closed and by \autoref{lem:closed-lem-1}, if $E_v = E'_v \es{x}{\lambda x_1. \ess{b}E} E''_v$ then $\lambda x_1. \ess{b} E$ is closed in $E''_v$ and so it is closed in $E_v$; moreover, $y$ is closed in $E_v$.
  Then $\ess{b}E$ is open in $E_v$ because $x_1$ appears as a free variables; however, $(\ess{b}E) \sub{x_1}{y} = \ess{b'}E'$ is closed in $E_v$ as $y$ is closed in $E_v$.

  If $a = m_2$ then the proof is analogous to the previous case.
  \qed
\end{proof}

\crumblingharmony*
\begin{proof}~\\
  \noindent{($\Rightarrow$)} Combining \autoref{lem:norm-crumb-implies-v} with \autoref{lem:crumb-reachable-closed}.\\
	\noindent{($\Leftarrow$)} Let $E$ be a v-crumble: no reduction rule is applicable, so $E$ is normal.
	\qed
\end{proof}

\begin{lemma}
  \label{lem:crumb-t-well-named}
  For every term $t$, $\crumb{t}$ is well-named.
\end{lemma}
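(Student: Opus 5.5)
The plan is to prove a slightly stronger statement by structural induction on $t$, following the clauses of the crumbling translation in \autoref{fig:crumbled-forms}. Under the convention that the bound variables of $t$ are pairwise distinct and distinct from its free variables (Barendregt's convention), and that every variable introduced as fresh by the translation is globally fresh, the statement is:
\begin{itemize}
\item[(a)] $\crumb{t}$ is well-named;
\item[(b)] $\dom(\crumb{t})$ consists only of $*$ and variables freshly generated by the translation;
\item[(c)] $\bv(\crumb{t}) = \bv(t)$.
\end{itemize}
Invariants (b) and (c) make the disjointness conditions in the inductive step immediate. For (b), the fresh names never coincide with bound variables of $t$ or with names generated elsewhere, and $* \notin \bv$.

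The base cases are direct. For $xy$ we have $\crumb{xy} = \ess{xy}$, whose domain is $\{*\}$ and whose bound variables are $\emptyset$. For $\lambda x.y$ we get $\ess{\lambda x.\ess{y}}$, with domain $\{*\}$ and bound variables $\{x\}$. In both cases the condition that $z \notin \fv(b)\cup\fv(E_2)$ for $z = *$ holds, because $*$ never occurs free in a term. For $\lambda x.t'$ we have $\crumb{\lambda x.t'} = \ess{\lambda x.\crumb{t'}}$. Here $\dom = \{*\}$ and $\bv = \{x\}\cup\bv(\crumb{t'})$, which equals $\bv(t)$ by the inductive hypothesis. Disjointness holds since $x \neq *$ and $*\notin\bv(\crumb{t'})$. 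Note that the well-naming conditions are stated on the outer environment, while the inner body is well-named by the inductive hypothesis.

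The application cases are the real work, and I would carry out the full case $uu'$; the cases $uy$ and $xu'$ are special cases of it. Write $\crumb{u}=\ess{b}E$ and $\crumb{u'}=\ess{b'}E'$. Then
\[\crumb{uu'} = \ess{x y}\es{x}{b}E\es{y}{b'}E'.\]
Its domain is $\{*,x,y\}\cup(\dom(\ess{b}E)\setminus\{*\})\cup(\dom(\ess{b'}E')\setminus\{*\})$. These sets are pairwise disjoint for three reasons: $x,y$ are fresh; the two inductive domains consist of distinct fresh names by (b); and each inductive domain has no internal repetitions by (a). Disjointness from $\bv$ follows from (c) together with freshness.

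For the last condition, take any split $E_1\es{z}{b_0}E_2$ and check $z\notin\fv(b_0)\cup\fv(E_2)$. If $z=*$, this holds because $*$ is never free. If $z=x$, then $E_2$ is a suffix consisting of the rest of $E$ followed by $\es{y}{b'}E'$. The variable $x$ is fresh, so it occurs neither in $b$, nor in $E$, nor in $\crumb{u'}$. If $z$ lies inside $E$ or $E'$, the inductive hypothesis gives the condition on its own segment, and the additional suffix $\es{y}{b'}E'$ does not mention $z$ because $z$ is a fresh name private to $\crumb{u}$.

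The main obstacle I expect is exactly this last condition. The $\fv$ of an environment is defined right-to-left with removals, so to argue that "$z$ does not occur in the suffix" one needs a small auxiliary fact: if $z$ does not syntactically occur in $E_2$, then $z\notin\fv(E_2)$. This is proved by a trivial induction on $E_2$ using the clauses of \autoref{fig:size-and-length}. With that fact in hand, every case reduces to bookkeeping about freshness of the generated names, which is what invariant (b) records.
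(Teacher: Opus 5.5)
Your proof is correct and follows essentially the same route as the paper: structural induction on $t$, with the three well-naming conditions in the application case discharged by global freshness of the names introduced by the translation, and a case analysis on where the split variable $z$ sits ($*$, the fresh $x$ or $y$, or inside $E$ or $E'$). The differences are only bookkeeping. You state the freshness invariants explicitly (domains contain only $*$ and generated names, $\bv(\crumb{t})=\bv(t)$), and you argue by syntactic non-occurrence in the suffix where the paper instead invokes $\fv(u')=\fv(\crumb{u'})$ for non-variable $u'$.
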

\begin{proof}
  By induction on $t$.

  If $t := \lambda x. y$ then $\crumb{t} = \ess{\lambda x.\ess{y}}$ that is well-named.

  If $t := \lambda x. t'$ then $\crumb{t} = \ess{\lambda x. \crumb{t'}}$ that is also well-named.

  If $t := xy$ then $\crumb{t} = \ess{xy}$ that is also well-named.

  If $t := u u'$ for $u,u'$ not variables, then $\crumb{t} = \ess{x y} \es{x}{b}E \es{y}{b'}E'$ with $\crumb{u} = \ess{b}E, \crumb{u'} = \ess{b'}E'$. 
  By i.h. on $u, u'$, $\ess{b}E$ and $\ess{b'}E'$ are well-named.
  We show that $E'' := \ess{x y} \es{x}{b}E \es{y}{b'}E'$ is well-named, i.e.
  \begin{enumerate}
   \item all the variables in $\dom(E'')$ are pairwise distinct: this holds because all variables put in
     the domain by crumbling are globally fresh
   \item $\dom(E) \cap \bv(E) = \emptyset$: this also holds for the same reason
   \item if $E'' = E_1\es{z}{b''}E_2$ then $z \not\in \fv(b'') \cup \fv(E_2)$: we proceed by cases
     on where $z$ can occur in $E''$
      \begin{itemize}
       \item Case where $z = *$ and $E_1 = \epsilon$: trivial since $*$ can only occur in the domain
       \item Case where $z \in \dom(\es{y}{b'}E')$: it holds because $\es{y}{b'}E'$ is well-named by i.h.
       \item Case where $z \in \dom(\es{x}{b}E)$: it holds because $\es{x}{b}E$ is well-named by i.h.
         and because when computing $\crumb{u}$ the variables put in the domain by crumbling are globally
          fresh also with respect to $y$ and to $u'$ and by~\autoref{rem:fv-term-crumb},
          $\fv(u') = \fv(\crumb{u'}) = \fv(\ess{b'}E')$. 
      \end{itemize}
  \end{enumerate}
  \qed
\end{proof}

\begin{lemma}
  \label{lem:term-size}
  For any term $t$, $\size{\crumb{t}} \le 2 \cdot \size{t}$
\end{lemma}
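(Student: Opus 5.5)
The plan is a structural induction on $t$, following the clauses of the crumbling translation in \autoref{fig:crumbled-forms}. The crumbling is only defined on terms that are not variables, so the statement is read for non-variable $t$ and the variable case is vacuous. This matches how \autoref{rem:fv-term-crumb} handles the same issue.

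The key preliminary observation concerns the size of an environment. By \autoref{fig:size-and-length}, $\size{E}$ is just the sum of the sizes of its bites, and the variables in the domain (including $*$) are not counted. Hence if $\crumb{u} = \ess{b}E$, then $\size{\es{x}{b}E} = \size{b} + \size{E} = \size{\crumb{u}}$. In words, renaming the head entry from $*$ to a fresh $x$ when nesting a sub-crumble does not change its size. So the inductive hypothesis on a subterm transfers directly to the corresponding block of entries in the composite crumble.

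I will then check each clause. The constant overhead at each step is at most $2$ (the size of the new bite $xy$ or of $\lambda x.\ess{y}$) or $1$ (the abstraction node). The term-side size gains at least $1$ per constructor, and $2$ when a variable argument is absorbed into the bite.
\begin{itemize}
\item Base cases: $\size{\crumb{xy}} = 2 \le 6 = 2\size{xy}$, and $\size{\crumb{\lambda x.y}} = 2 \le 4 = 2\size{\lambda x.y}$.
\item For $\lambda x.t'$ with $t'$ not a variable: $\size{\crumb{t'}}+1 \le 2\size{t'}+1 \le 2(\size{t'}+1)$.
\item For $uy$ (and symmetrically $xu'$): $2 + \size{\crumb{u}} \le 2 + 2\size{u} \le 2(\size{u}+2)$.
\item For $uu'$: $2 + 2\size{u} + 2\size{u'} = 2(\size{u}+\size{u'}+1)$.
\end{itemize}

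There is no real obstacle. The only points needing care are the head-renaming remark above, so that the inductive hypothesis applies to $\es{x}{b}E$ rather than $\ess{b}E$, and making explicit that variables are excluded from the statement. The application case $uu'$ is the tight one, where the bound $2\size{t}$ is attained up to equality in the induction. This shows that the constant $2$ cannot be lowered by this argument.
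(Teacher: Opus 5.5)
Your proof is correct and follows the paper's own argument: a structural induction over the clauses of the crumbling translation, with the same per-case arithmetic ($1 + 2|t'| \le 2(1+|t'|)$ for abstractions, $2 + 2|u| \le 2|uy|$, and the tight application case $2 + 2|u| + 2|u'| = 2|uu'|$). Beyond the paper, you make explicit the $xy$ base case, the exclusion of bare variables, and the fact that renaming the head $*$ to a fresh $x$ leaves the size of $\crumb{u}$ unchanged; the paper uses all three silently.
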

\begin{proof}
  By structural induction on $t$.

  If $t := \lambda x. y$ then $\size{\crumb{t}} = 2 = \size{t} \leq 2 \cdot \size{t}$\\~

  If $t := \lambda x. t'$ then 
  \begin{alignat*}{3}
    \size{\crumb{t}} 
    & = \size{\ess{\lambda x. \ess{\crumb{t'}}}} \\
    & = 1 + \size{\crumb{t'}} \\ 
    & \le 1 + 2 \cdot \size{t'} && \text{ by i.h. on } t' \\
    & \le 2 \cdot (1 + \size{t'}) \\
    & = 2 \cdot \size{t}
  \end{alignat*}

  If $t := u u'$ then
  \begin{alignat*}{3}
    \size{\crumb{t}}
    & = \size{\ess{xy} \es{x}{b}E \es{y}{b'}E'} \\ 
    & = 2 + \size{\crumb{u}} + \size{\crumb{u'}} \\
    & \le 2 + 2 \cdot \size{u} + 2 \cdot \size{u'} && \text{ by i.h. on } u, u' \\
    & = 2 \cdot \size{t}
  \end{alignat*}

  If $t := uy$ then 
  \begin{alignat*}{3}
    \size{\crumb{t}}
    & = \size{\ess{xy} \es{x}{b}E} \\ 
    & = 2 + \size{\crumb{u}} \\
    & \le 2 + 2 \cdot \size{u} && \text{ by i.h. on } u \\
    & \le 2 \cdot \size{t}
  \end{alignat*}

  If $t := x u'$ then 
  \begin{alignat*}{3}
    \size{\crumb{t}}
    & = \size{\ess{xy} \es{y}{b'}E'} \\ 
    & = 2 + \size{\crumb{u'}} \\
    & \le 2 + 2 \cdot \size{u'} && \text{ by i.h. on } u \\
    & \le 2 \cdot \size{t} \tag*{\qed}
  \end{alignat*}
\end{proof}

\begin{lemma}
	\label{lem:crumb-t-rvc}
  For every term $t$, if $\crumb{t} = E_1 \es{z}{xy} E_2$ then $\rback{(E_1 \es{z}{\hole})}$ is a right v-context.
\end{lemma}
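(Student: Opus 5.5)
We argue by structural induction on $t$, following the clauses of the crumbling translation in \autoref{fig:crumbled-forms}. The statement is only about entries of the shape $\es{z}{xy}$ in the \emph{top-level} environment $\crumb{t}$; entries hidden inside bodies of abstractions are not concerned. The base cases are therefore immediate.
\begin{itemize}
\item If $t$ is an abstraction, $\crumb{t} = \ess{\lambda x.\ldots}$ has no top-level application entry, so the statement holds vacuously.
\item If $t = xy$, the only choice is $E_1 = \epsilon$ and $z = *$, and $\rback{\ess{\hole}} = \hole$ is a right v-context.
\end{itemize}

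In the inductive cases $t$ is an application, and we show the general one, $t = uu'$ with $u,u'$ not variables. Then $\crumb{t} = \ess{x_1x_2}\es{x_1}{b}E\es{x_2}{b'}E'$, where $\crumb{u} = \ess{b}E$ and $\crumb{u'} = \ess{b'}E'$. The entry $\es{z}{xy}$ is either the head $*$-entry, lies in $\es{x_1}{b}E$, or lies in $\es{x_2}{b'}E'$.
\begin{itemize}
\item If it is the head entry, the context reads back to $\hole$.
\item If it lies in $\es{x_1}{b}E$, write $\es{x_1}{b}E = F_1\es{z}{xy}F_2$. Then $E_1\es{z}{\hole} = \ess{x_1x_2}F_1\es{z}{\hole}$. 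Its read-back is computed by substituting entries from right to left, so it is $(\rback{(\ess{b}F_1\es{z}{\hole})})\,x_2$ after the outer substitution for $x_2$ is applied. That outer substitution is absent, because $x_2$'s entry lies to the right of the hole. Hence we must use the fact that the read-back of the whole prefix treats $x_2$ as a free variable, and correct for this.
\item If it lies in $\es{x_2}{b'}E'$, write $\es{x_2}{b'}E' = G_1\es{z}{xy}G_2$. The prefix $\ess{x_1x_2}\es{x_1}{b}E\,G_1\es{z}{\hole}$ reads back to $\rback{(\crumb{u})}\,R'$, where $R' = \rback{(\ess{b'}G_1\es{z}{\hole})}$ is a right v-context by the induction hypothesis on $u'$. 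We use $\crback{u} = u$ (initialization, proved by induction) together with well-naming from \autoref{lem:crumb-t-well-named}, which guarantees that the variables substituted in the two blocks do not interfere. The result is $uR'$, a right v-context by the clause $tR$ of the grammar (equivalently, by \autoref{lem:rvc-comp} with $R = u\hole$).
\end{itemize}
The cases $t = uy$ and $t = xu'$ are the degenerate versions of these, with one block missing.

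The main obstacle is the middle case, where the hole lies in the left block $\es{x_1}{b}E$. The naive read-back yields $R\,x_2$ with $x_2$ not replaced by $\rback{u'}$. This is a right v-context, since $x_2$ is a value (clause $Rv$), but it is not literally the context of $t$. We resolve this by interpreting the statement exactly as written: we only need \emph{some} right v-context, so $R\,x_2$ with $R = \rback{(\ess{b}F_1\es{z}{\hole})}$ suffices, and $R$ is a right v-context by the induction hypothesis on $u$. Similarly, in the case $uy$ the context is $Ry$, which is fine.

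The care needed throughout is a small commutation lemma: reading back $\ess{x_1x_2}F$ with the hole inside $F$ equals the read-back of the corresponding context for the subcrumble, placed in the left or right argument. This follows from well-naming, which ensures the domains of the two blocks are disjoint and that their variables do not occur in each other. We prove it by induction on the length of the environment, using the read-back clause $\rback{(E\es{x}{b})} = \rback{E}\sub{x}{\rback b}$.
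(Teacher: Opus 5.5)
Your proof is correct and follows essentially the same route as the paper: structural induction on $t$, splitting on whether the application entry is the head $*$-entry (giving $\hole$), lies in the block of $u$ (giving $R\,x_2$, a right v-context because the free variable $x_2$ is a value), or lies in the block of $u'$ (giving $\crback{u}\,R'$). The only blemishes are cosmetic. In the left-block case the context should be the prefix of $\crumb{u}=\ess{b}E$, i.e.\ $F_1$ with its leading $\es{x_1}{b}$ turned into $\es{*}{b}$, rather than $\ess{b}F_1$. The appeal to initialization to rewrite $\crback{u}$ as $u$ is unnecessary, since $\crback{u}\,R'$ is already a right v-context by the clause $tR$.
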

\begin{proof}
	By structural induction on the shape of $t$. \\
        If $t = uu'$ where $u,u'$ are not variables, then by definition $\crumb{uu'} = \es{*}{x'y'} \es{x'}{b} E \es{y'}{b'} E'$ for $\crumb{u} = \ess{b}E$ and $\crumb{u'} = \ess{b'}E'$.
        Now, the ES of the form $\es{z}{xy}$ can appear in different segments of the crumbled term:
	\begin{itemize}
		\item As $\es{*}{x'y'}$: trivially $\rback{\es{*}{\hole}} = \hole$.
		\item As $\es{x'}{x y}$, i.e. $\es{*}{x'y'} \es{x'}{x y}$: trivially, as $\rback{(\es{*}{x'y'} \es{x'}{\hole})} = \hole y$ that is a right v-context.
		\item Anywhere in $E$, i.e. $E = E_1 \es{z}{x y} E_2$, then
      \begin{alignat*}{2}
          & \rback{(\es{*} {x'y'} \es{x'}{b}E_1 \es{z}{\hole})} \\
        = & \rback{\es{*}{x'y'}} \sub{x'}{\rback{( \es{*}{b} E_1 \es{z}{\hole} )}} \\
        = & Ry && \text{ by i.h. on } u
      \end{alignat*}
		\item As $\es{y'}{x y}$ i.e. $\es{*}{x'y'} \es{x'}{b} E \es{y'}{x y} $: in this case 
      \begin{align*}
          & \rback{( \es{*}{x'y'} \es{x'}{b} E \es{y'}{\hole} )} \\
        = & \rback{( \es{*}{x'y'} \es{x'}{b} E )} \sub{y'}{\hole} \\
        = & \rback{\es{*}{x'y'}} \sub{x'}{\rback{( \es{*}{b} E )}} \sub{y'}{\hole} \\
        = & \crback{u} \hole
      \end{align*}
		\item Anywhere in $E'$, i.e. $E' = E_2 \es{z}{xy} E_2'$, then 
      \begin{alignat*}{3}
          & \rback{(\es{*}{x'y'} \es{x'}{b} E \es{y'}{b'} E_2 \es{z}{\hole})} \\
        = & \rback{( \es{*}{x'y'} \es{x'}{b} E)} \sub{y'}{ \es{*}{b'} E_2 \es{z}{\hole} } \\
        = & x'y' \sub{x'}{\rback{( \es{*}{b} E)}} \sub{y'}{ \es{*}{b'} E_2 \es{z}{\hole} } \\
        = & \crback{u} R && \text{ by i.h. on }  u'
      \end{alignat*}
	\end{itemize}
	The cases where $t = uy$ or $t = xu'$ are simplified versions of the previous one. \\
        The cases where $t$ is a $\lambda$-abstraction are absurd because $\crumb{u}$ cannot
        be split into $E_1 \es{z}{xy} E_2$.
	\qed
\end{proof}

\crumblingprop*
\begin{proof}~
  \begin{enumerate}
    \item See \autoref{lem:crumb-t-well-named}
    \item See \autoref{lem:crumb-closed-prop}.
    \item See \autoref{lem:subterm-prop}.
    \item See \autoref{lem:term-size}.
    \item See \autoref{lem:crumb-t-rvc}. \qed
  \end{enumerate}
\end{proof}

\begin{definition}[$\sigma$-operator]
  The function $\sigma_{(\cdot)}$ turns a v-environment into the parallel substitution obtained by firing all explicit substitutions at once. It is defined by structural
  recursion as follows:
        \[ \begin{array}{ll}
          \sigma_\epsilon & = x \mapsto x \\
           \sigma_{[x\leftarrow v]E'_v} & =
		\begin{cases}
                        x \mapsto \rback{v}\;\sigma_{E'_v} \\
			y \mapsto y\;\sigma_{E'_v}
		\end{cases}
          \end{array}
	\]
\end{definition}

\begin{lemma}[Read-back to value]
	\label{lem:rback-to-value}
	For any $E_v$ that is both a crumble and a v-environment, $\rback{E_v}$ is a value.
\end{lemma}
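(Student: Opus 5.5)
The plan is to use the $\sigma$-operator just introduced. Since $E_v$ is a crumble and a v-environment, it has the shape $E_v = \ess{v}E'_v$ for some crumbled value $v$ and some v-environment $E'_v$. I will first prove an auxiliary claim: for every environment of the form $\ess{b}E'_v$ with $E'_v$ a v-environment,
\[ \rback{(\ess{b}E'_v)} = \rback{b}\,\sigma_{E'_v}. \]
This is by induction on the length of $E'_v$, looking at its rightmost entry. If $E'_v$ is empty, both sides are $\rback{b}$. If $E'_v = E''_v\es{x}{v'}$, the read-back definition gives $\rback{(\ess{b}E''_v)}\sub{x}{\rback{v'}}$. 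The induction hypothesis then rewrites this as $\rback{b}\,\sigma_{E''_v}\sub{x}{\rback{v'}}$.

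It then remains to check that composing $\sigma_{E''_v}$ with $\sub{x}{\rback{v'}}$ yields $\sigma_{E''_v\es{x}{v'}}$. This needs care, because $\sigma$ is defined by peeling entries on the left, whereas read-back peels them on the right. I would therefore state and prove a small lemma: $\sigma$ of a concatenation is the composition of the two $\sigma$'s. This follows by a routine induction on the left environment, using the parallel-substitution reading of $\sigma$.

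This step is the main obstacle, because capture and interference are a concern. I would discharge it with well-naming. Either I add well-naming as a hypothesis, or, more cleanly, I only need that variables in the domain are distinct and are not bound inside the values. In the reachable setting where the lemma will be applied, this is guaranteed by the Freshness invariant in \autoref{lem:crumb-invariants}. Under that assumption, sequential and parallel substitution agree, and no $\lambda$-binder captures a substituted value.

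With the claim established, $\rback{E_v} = \rback{v}\,\sigma_{E'_v}$. By the definition of read-back on bites, $\rback{v}$ is an abstraction in both cases: $\lambda x.y$ when $v = \lambda x.\ess{y}$, and $\lambda x.\rback{E}$ when $v = \lambda x.E$. Applying a substitution to an abstraction always yields an abstraction. Hence $\rback{E_v}$ is a $\lambda$-abstraction, and therefore a value.

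An alternative that avoids $\sigma$ altogether is a direct induction on the length of $E'_v$. One observes that $\rback{(\ess{v}E''_v)}$ is an abstraction by the induction hypothesis, and that meta-level substitution $\sub{x}{\cdot}$ preserves being an abstraction. I would actually present this shorter argument as the main proof. I would then record the $\sigma$ identity as a strengthening, since later proofs likely need it.
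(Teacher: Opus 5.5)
Your main argument is exactly the paper's proof: write $E_v = \ess{v}E'_v$, induct on the entries, peel off the rightmost one via $\rback{(E\es{x}{b})} = \rback{E}\sub{x}{\rback{b}}$, and use that substitution maps abstractions to abstractions. Tracking ``abstraction'' rather than ``value'' is even slightly tidier than the paper's version. The $\sigma$-based detour is not needed, since that identity is the paper's separate \autoref{lem:envrback-with-sub}. Its well-naming worry is also unfounded, because $\sigma_{\es{x}{v}E'_v}$ is by definition $\sub{x}{\rback{v}}$ followed by $\sigma_{E'_v}$, so the concatenation law is just associativity of composing capture-avoiding substitutions and needs no freshness hypothesis.
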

\begin{proof}
	By structural induction on $E_v$.
	
  If $E_v = \ess{v}$ then $\rback{\es * v} = \rback v$ which is a value.
	
  If \( E_v = E'_v \es x {v} \) then 
  $\rback{E_v} = \rback{E'_v}\sub{x}{v}$; by i.h. $\rback{E'_v}$ is a value, so also $\rback{E_v} = v \sub{x}{v}$ is a value.
	\qed
\end{proof}

\begin{definition}[Disjointedness]
  We say that two environments $E, E'$ are disjoint -- using the notation $E\;\#\;E'$ -- if $\fv(E) \cap \dom(E') = \emptyset$.
\end{definition}

\begin{lemma}
	\label{lem:disjoint-env-sub}
	For any crumbled term in the form $E \es{x}{b} E'$ where $E\;\#\;E'$, then $\rback{( E \es{x}{b} E' )} = \rback{E} \sub{x}{\rback{( \es{*}{b}E' )}}$.
\end{lemma}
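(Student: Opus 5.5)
We argue by structural induction on $E'$, keeping $E$, $x$ and $b$ fixed. Throughout we use the read-back clauses of \autoref{fig:crumbled-forms}: $\rback{(E\es{y}{b''})} = \rback{E}\sub{y}{\rback{b''}}$ and $\rback{\ess{b}} = \rback{b}$.

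\textbf{Base case.} If $E' = \epsilon$, unfolding the read-back gives
\[
\rback{(E\es{x}{b})} = \rback{E}\sub{x}{\rback{b}} = \rback{E}\sub{x}{\rback{\ess{b}}},
\]
which is exactly the statement.

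\textbf{Inductive case.} Let $E' = E''\es{y}{b''}$. Disjointness is inherited, since $\dom(E'') \subseteq \dom(E')$ gives $E\;\#\;E''$. We compute
\[
\rback{(E\es{x}{b}E''\es{y}{b''})} = \rback{(E\es{x}{b}E'')}\sub{y}{\rback{b''}} = \rback{E}\sub{x}{\rback{(\ess{b}E'')}}\sub{y}{\rback{b''}},
\]
where the second equality is the induction hypothesis. On the other side,
\[
\rback{E}\sub{x}{\rback{(\ess{b}E'')\sub{y}{\rback{b''}}}} = \rback{E}\sub{x}{\rback{(\ess{b}E''\es{y}{b''})}}.
\]
It therefore suffices to prove the commutation
\[
t\sub{x}{u}\sub{y}{w} = t\sub{x}{u\sub{y}{w}} \qquad \text{whenever } y \notin \fv(t).
\]
This is the standard substitution lemma: substituting $y$ into $t\sub{x}{u}$ acts on $t$ trivially and on each copy of $u$ as $\sub{y}{w}$. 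Some capture-avoidance is implicitly assumed, which is harmless under the Barendregt convention.

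\textbf{Main obstacle.} The remaining step, which I expect to be the real difficulty, is showing $y \notin \fv(\rback{E})$. From $E\;\#\;E'$ and $y \in \dom(E')$ we get $y \notin \fv(E)$. What is needed is a read-back lemma of the form $\fv(\rback{E}) \subseteq \fv(E)$, which plays the same role for read-back that \autoref{rem:fv-term-crumb} plays for crumbling.

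I would prove this auxiliary lemma by a separate structural induction on environments and bites, using two facts:
\begin{itemize}
\item $\fv(t\sub{z}{u}) \subseteq (\fv(t)\setminus\{z\}) \cup \fv(u)$;
\item the free-variable clauses of \autoref{fig:size-and-length}, which remove $z$ exactly when the read-back substitutes for it.
\end{itemize}

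One subtlety is that $E$ may be a crumble prefix rather than a full crumble, so the auxiliary lemma must be stated for arbitrary environments whose leftmost entry is $\es{*}{\cdot}$. With that lemma available, the side condition holds and the induction closes.
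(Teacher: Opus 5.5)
Your proof is correct and follows the paper's argument: induction on $E'$, the same base case, and in the inductive step the induction hypothesis followed by pushing $\sub{y}{\rback{b''}}$ inside $\sub{x}{\cdot}$ using disjointness. The one difference is that you make explicit two facts the paper uses tacitly when it justifies the commutation by ``$x' \notin \fv(E)$'': the auxiliary fact $\fv(\rback{E}) \subseteq \fv(E)$, and the inheritance of $E\;\#\;E''$ needed for the induction hypothesis. Your sketch of the auxiliary fact, by mutual induction on environments and bites, is sound.
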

\begin{proof}
	Inductively on the structure of $E'$.

	If $E' = \epsilon$ then
	$ \rback{( E \es{x}{b} )} = \rback{E} \sub{x}{\rback{b}} = \rback{E} \sub{x}{\rback{\es{*}{b}}}$.
	
  If $E' = E'' \es{x'}{b'}$, then
  \begin{alignat*}{3} 
    \rback{(E \es{x}{b} E'' \es{x'}{b'} )} 
      & = \rback{(E \es{x}{b} E'' )} \sub{x'}{b'} \\
      & = \rback{E} \sub{x}{\rback{(\es{*}{b} E'')}} \sub{x'}{b'} \quad && \text{ by i.h. on } E'' \\
      & = \rback{E} \sub{x}{\rback{(\es{*}{b} E'')} \sub{x'}{b'}} \quad && \text{ since $x' \notin \fv(E)$} \\
      & = \rback{E} \sub{x}{\rback{(\es{*}{b} E'' \es{x'}{b'})}}. \tag*{\qed}
  \end{alignat*}
\end{proof}

\begin{lemma}
	\label{lem:envrback-with-sub}
	Let \( E_1 \) be any crumble and \( E_v \) be a v-environment. Then \( \rback{(E_1E_v)} = \rback{E_1} \sigma_{E_v} \).
\end{lemma}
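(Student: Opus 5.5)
We argue by structural induction on the v-environment $E_v$, keeping $E_1$ arbitrary. The statement is essentially that firing the trailing explicit substitutions of a v-environment one at a time from the right, as the read-back does, agrees with firing them all at once via $\sigma_{E_v}$.

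\emph{Base case} $E_v = \epsilon$. Then $\rback{(E_1\epsilon)} = \rback{E_1}$. Since $\sigma_\epsilon$ is the identity substitution, this equals $\rback{E_1}\sigma_\epsilon$.

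\emph{Inductive case.} It is convenient to decompose $E_v$ on its leftmost entry, $E_v = \es{x}{v}E'_v$, because $\sigma$ is defined by recursion on that entry. The read-back, however, peels entries from the right. We first establish the auxiliary equation
\[
\rback{(E\es{x}{v}E'_v)} = \rback{(E\es{x}{v})}\,\sigma_{E'_v} = \rback{E}\sub{x}{\rback{v}}\,\sigma_{E'_v}.
\]
This is itself proved by an inner induction on $E'_v$, written as $E''_v\es{y}{w}$, using the clause $\rback{(E\es{y}{b})} = \rback{E}\sub{y}{\rback{b}}$. The remaining step is to show that applying $\sub{y}{\rback{w}}$ last agrees with the corresponding component of $\sigma$. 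This is a standard substitution-composition identity: $\sub{x}{\rback v}\sigma_{E'_v}$ should coincide with $\sigma_{\es{x}{v}E'_v}$, which maps $x$ to $\rback{v}\,\sigma_{E'_v}$ and every other $y$ to $y\,\sigma_{E'_v}$. This holds provided $x$ is not in $\dom(E'_v)$ and the domain variables of $E'_v$ are not captured, both of which follow from well-naming. Given the auxiliary equation, the outer claim is immediate from the inductive hypothesis applied to $E_1\es{x}{v}$ in place of $E_1$.

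\emph{Alternative route.} One can avoid the inner induction by inducting on $E_v$ from the right, $E_v = E'_v\es{y}{w}$. By the read-back definition, $\rback{(E_1E'_v\es{y}{w})} = \rback{(E_1E'_v)}\sub{y}{\rback w}$, and the inductive hypothesis turns this into $\rback{E_1}\sigma_{E'_v}\sub{y}{\rback w}$. It then suffices to prove the lemma $\sigma_{E'_v\es{y}{w}} = \sigma_{E'_v}\circ\sub{y}{\rback w}$, as substitutions applied to terms, by induction on $E'_v$. Unfolding the definition of $\sigma$ at each entry $\es{x}{v}$, this reduces to commuting $\sub{y}{\rback w}$ inside $\rback{v}$. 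That commutation is valid because $y$ is fresh with respect to all binders and domain variables to its left.

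\textbf{Main obstacle.} The delicate point is these side conditions on variables. The lemma as stated says nothing about well-naming or closedness, yet the substitution-composition identity relies on $\sub{y}{\cdot}$ not capturing anything and on the entry for $x$ not being re-bound later. I would first try to discharge them by assuming the environment is well-named, which is the situation in every use, namely reachable crumbles by \autoref{lem:crumb-invariants}. Alternatively, I would treat $\sigma$ and the read-back uniformly as capture-avoiding substitution up to $\alpha$-equivalence, in which case the identities hold unconditionally by the usual substitution lemma.
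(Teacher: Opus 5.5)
Your main route is the paper's proof. The paper inducts on $E_v=\es{x}{v}E'_v$ with $E_1$ generalized, applies the i.h. to $E'_v$ with $E_1\es{x}{v}$ in place of $E_1$, unfolds $\rback{(E_1\es{x}{v})}=\rback{E_1}\sub{x}{\rback{v}}$, and concludes because $\sigma_{\es{x}{v}E'_v}$ is by definition the composite of $\sub{x}{\rback{v}}$ followed by $\sigma_{E'_v}$. Your announced inner induction is redundant, since the first equality of your auxiliary equation is exactly that instance of the i.h.; and with capture-avoiding substitution the composition step needs no well-naming side condition, not even $x\notin\dom(E'_v)$.
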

\begin{proof}
	Inductively on the structure of $E_v$.

        If $E_v = \epsilon$ then $\sigma_{E_v} = x \mapsto x$ and trivially $\rback{E_1} = \rback{E_1}\sigma_{E_v}$.
	
  If $E_v = \es{x}{v}E'_v$ then
  \begin{alignat*}{3}
    \rback{( E_1E_v )} 
      & = \rback{( E_1\es{x}{v} )} \sigma_{E'_v} \quad && \text{ by i.h. on } E'_v \\
            & = \rback{E_1} \sub{x}{\rback{v}} \sigma_{E'_v} \\
      & = \rback{E_1} \sigma_{E_v} \quad && \text{ by definition of } \sigma \tag*{\qed}
  \end{alignat*}
\end{proof}

\begin{lemma}[Contextual Decoding]
  \label{lem:context-decoding-reachable}
	If $E_1 \es{z}{xy} E_2$ is reachable, then $\rback{(E_1 \es{z}{\hole})}$ is a right v-context.
\end{lemma}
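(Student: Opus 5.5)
We argue by induction on the number of $\to_{Cr}$-steps needed to reach $E_1\es{z}{xy}E_2$ from the initial crumble $\crumb{t}$. The base case is \autoref{lem:crumb-t-rvc}. For the inductive step we assume $E_p \to_a E$ with $a\in\{m_1,m_2\}$, with $E_p = E_p^1\es{w}{x'y'}E_v$ reachable. By the inductive hypothesis, $R_p := \rback{(E_p^1\es{w}{\hole})}$ is a right v-context. We then locate the application entry $\es{z}{xy}$ of $E$ relative to the redex. All rewriting happens at $w$, and $E_v$ is a v-environment, so $\es{z}{xy}$ cannot lie in $E_v$. Hence either it lies in $E_p^1$, or, only for $m_1$, it lies in the freshly created segment $\es{w}{b'}E'$.

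\textbf{Case 1: $\es{z}{xy}$ lies in $E_p^1$}, so $E_p^1 = E_1\es{z}{xy}E''$ and $E_1$ is literally unchanged by the step. The context $\rback{(E_1\es{z}{\hole})}$ depends only on $E_1$. Since $E_1\es{z}{xy}E''\es{w}{x'y'}E_v$ is exactly $E_p$, the inductive hypothesis applied to this decomposition of $E_p$ gives the claim immediately. This holds for both $m_1$ and $m_2$.

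\textbf{Case 2 ($m_1$ only): $\es{z}{xy}$ lies in $\es{w}{b'}E'$.} By \autoref{cor:body-is-a-crumble}, combined with \autoref{lem:crumble-commute-free} and \autoref{lem:crumble-commute-alpha}, the body $\ess{b'}E'$ equals $\crumb{s}$ for some term $s$. There are two subcases.
\begin{itemize}
\item If $z=w$, the context is $\rback{(E_p^1\es{w}{\hole})} = R_p$, which is a right v-context by the inductive hypothesis.
\item Otherwise write $\ess{b'}E' = \ess{b'}E'_1\es{z}{xy}E'_2$. By \autoref{lem:crumb-t-rvc} applied to $\crumb{s}$, $R' := \rback{(\ess{b'}E'_1\es{z}{\hole})}$ is a right v-context.
\end{itemize}
In the second subcase we then need
\[\rback{(E_p^1\es{w}{b'}E'_1\es{z}{\hole})} = R_p\plug{R'},\]
after which \autoref{lem:rvc-comp} concludes.

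The main obstacle is this read-back decomposition, i.e.\ a context version of \autoref{lem:disjoint-env-sub}. It needs the disjointness $E_p^1\;\#\;E'_1\es{z}{\hole}$: the variables in $\dom(E')$ are fresh by the $\alpha$-renaming $E_v(x)^\alpha$, so they do not occur free in $E_p^1$. This is the well-naming invariant, which we would prove simultaneously or beforehand. We also need the hole to occur exactly once in $R_p$. For that we would run the induction of \autoref{lem:disjoint-env-sub} again with a hole in place of the last bite, using that $w$ occurs free exactly once in $E_p^1$, which follows from how crumbling introduces each variable once and is preserved by reduction. The read-back contexts, defined by the same substitution clauses, then plug as stated.
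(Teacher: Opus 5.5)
Your proposal is correct and follows the paper's proof essentially step for step. It uses induction on reachability with \autoref{lem:crumb-t-rvc} as base case, the inductive hypothesis when the application entry lies in the untouched prefix (including the rewritten entry itself), and, for entries created by $\to_{m_1}$, \autoref{cor:body-is-a-crumble} with \autoref{lem:crumb-t-rvc} on the copied body followed by \autoref{lem:rvc-comp}. You are more careful than the paper, which asserts $\rback{(E_p^1\es{w}{b'}E'_1\es{z}{\hole})} = R_p\plug{R'}$ without justification; however, the single occurrence of the hole in $R_p$ needs no separate invariant, because the inductive hypothesis already says $R_p$ is a right v-context.
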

\begin{proof}
	By induction on the reachability of the crumble. The base case is any crumbled term $\crumb{t}$ and the statement holds by~\autoref{lem:crumb-t-rvc}. \\
  For the inductive step, a crumble is reachable either by $\mone$ or by $\mtwo$.
	 
  If the crumble is reached by $\mone$, i.e. $E'_1 \es{z'}{x'y'} E'_v \mone E'_1 \es{z'}{b''} E'' E'_v$ where $E'_v(x')^\alpha = \lambda x_1. \es{*}{b}E$ then, by~\autoref{cor:body-is-a-crumble},
  $E'_v(x')$ is the crumbling of an abstraction and thus by~\autoref{lem:crumble-commute-alpha}
  $E'_v(x')^\alpha = \crumb{\lambda x_1. t} = \lambda x_1. \crumb{t}$ for some $t$.
  Therefore 
    \begin{alignat*}{3}
      & \crumb{t} \sub{x_1}{y'} \\
      = \; & \crumb{t \sub{x_1}{y'}} && \text{ by \autoref{lem:crumble-commute-free} } \\
      = \; & \ess{b''} E''
    \end{alignat*}
    If $E_1\es{z}{xy}$ is a prefix of $E'_1\es{z'}{b'}$ then the claim trivially follows by i.h. on the lhs of the multiplicative rule. Otherwise $E_1\es{z}{xy} = E'_1\es{z'}{b''}E'''\es{z}{xy}$ for $E'''\es{z}{xy}$ a prefix of $E''$. Then $\rback{E_1\es{z}{\hole}} = \rback{E'_1\es{z'}{b''}E'''\es{z}{\hole}}
    = \rback{E'_1\es{z'}{\hole}}\plug{\rback{\ess{b''}E'''\es{z}{\hole}}}$.
    By i.h. on the lhs of the multiplicative rule, $\rback{E'_1\es{z'}{\hole}}$ is a right v-context;
    by~\autoref{lem:crumb-t-rvc} on $\crumb{t \sub{x_1}{y'}}$ also $\rback{\ess{b''}E'''\es{z}{\hole}}$ is a
    right v-context. The conclusion follows by composition of right v-contexts (\autoref{lem:rvc-comp}).
		
  If the crumble is reached by $\mtwo$, i.e. $E'_1 \es{z'}{x'y'} E'_v \mtwo E'_1 \es{z'}{E'_v(y'')} E'_v$, then the claim trivially follows by i.h. since $E_1\es{z}{xy}$ must be a prefix of $E'_1$. \qed
\end{proof}

\begin{lemma}
  Given a crumble $\es{*}{b}E$ and an immediate substitution $\sub{z}{y}$ such that $z \not\in \dom(E)$, then $\rback{( \es{*}{b}E )} \sub{z}{y} = \rback{( (\es{*}{b}E) \sub{z}{y} )}$.
\end{lemma}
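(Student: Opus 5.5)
We prove the statement by structural induction on the environment $E$, in the same style as \autoref{lem:disjoint-env-sub} and \autoref{lem:envrback-with-sub}. Since the substitution $\sub{z}{y}$ acts on an environment entrywise, and by hypothesis $z \notin \dom(E)$, we have $(\es{*}{b}E)\sub{z}{y} = \es{*}{b\sub{z}{y}}\,E\sub{z}{y}$. In $E\sub{z}{y}$ every entry $\es{x}{b''}$ becomes $\es{x}{b''\sub{z}{y}}$, with the domain unchanged. We work up to $\alpha$-renaming of $\lambda$-bound variables inside bites, so that bound variables avoid $z$ and $y$.

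The first step is an auxiliary claim on bites: $\rback{(b\sub{z}{y})} = \rback{b}\sub{z}{y}$. We prove it by cases on $b$.
\begin{itemize}
\item For $b = x_1x_2$ the claim is immediate.
\item For $b = \lambda x.\ess{w}$ it follows by inspecting whether $w = z$; here $x \neq z, y$ by the $\alpha$-convention.
\item For $b = \lambda x.E_b$ the body $E_b$ is a crumble $\ess{b_0}E_0$, so the claim follows from the outer statement applied to $E_b$.
\end{itemize}
For the last case to be legitimate, we run a mutual induction on the size of bites and environments, not a plain structural induction on $E$. We also need $z \notin \dom(E_b)$, which requires either well-naming or renaming the body's domain away from $z$; since read-back is invariant under such renaming, we may assume it.

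\textbf{Base case} $E = \epsilon$: $\rback{\ess{b}}\sub{z}{y} = \rback{b}\sub{z}{y} = \rback{(b\sub{z}{y})} = \rback{(\ess{b}\sub{z}{y})}$, using the bite claim.

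\textbf{Inductive case} $E = E'\es{x}{b'}$ with $x \neq z$. We compute
\[ \rback{(\ess{b}E'\es{x}{b'})}\sub{z}{y} = \rback{(\ess{b}E')}\sub{x}{\rback{b'}}\sub{z}{y}. \]
Next we commute the two meta-level substitutions:
\[ t\sub{x}{u}\sub{z}{y} = t\sub{z}{y}\sub{x}{u\sub{z}{y}}, \]
which holds since $x \neq z$ and $x \neq y$. Here $x \neq y$ because $y$ is the variable supplied by the enclosing redex; in the uses of the lemma $y$ is bound outside the environment, so well-naming gives $y \notin \dom(E)$. We add $y \notin \dom(E)$ as a tacit hypothesis, or recover it from well-naming. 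After the commutation we apply the i.h. to $\ess{b}E'$ and the bite claim to $b'$, and refold the definition of read-back.

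I expect the main obstacle to be not the computation but the side conditions: the variable-capture conditions for commuting the substitutions ($x \neq y$), and the treatment of bound variables inside abstraction bites. The statement only assumes $z \notin \dom(E)$, so I would first try to derive $y \notin \dom(E)$ from the well-naming invariant (\autoref{lem:crumb-invariants}) in the contexts where the lemma is used. Failing that, I would handle the case $y \in \dom(E)$ by $\alpha$-renaming the domain of $E$, which changes neither side up to $\alpha$-equivalence.
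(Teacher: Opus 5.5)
Your proof is correct and follows essentially the same route as the paper: induction on $E$, handling the shapes of the bite $b$ (including the abstraction case via the statement on its body) in the base case, and commuting the two meta-level substitutions in the step $E = E'\es{x}{b'}$. The separate bite claim $\rback{(b\sub{z}{y})} = \rback{b}\sub{z}{y}$, the mutual induction, and the side condition $x \neq y$ (i.e.\ $y \notin \dom(E)$, or capture-avoiding renaming of the domain) make explicit what the paper's proof uses silently: the paper only records $z \neq z_1$, and its ``inductive hypothesis on $E_1$'' is not structural in $E$.
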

\begin{proof}
	By induction on $E$:
	\begin{itemize}
		\item If $E := \epsilon$ we distinguish the following cases:
		      \begin{itemize}
			      \item If $b := z_1 z_2, z_1 \ne z, z_2 \ne z$, then it is trivial as the substitution has no effect. The same principle applies if $b := \lambda z. E_1$ as the immediate substitution would have no effect.
			      \item If $b := z z_1$ then
              \begin{align*}
                \rback{\es{*}{z z_1}} \sub{z}{y}
              = & z z_1 \sub{z}{y} \\
              = & y z_1 \\
              = & \rback{\es{*}{y z_1}} \\
              = & \rback{\es{*}{z z_1} \sub{z}{y}}
              \end{align*}

			            The case where $b := z_1 z$ is analogous.
			      \item If $b := \lambda x_1. E_1$ then $\rback{\es{*}{\lambda x_1. E_1}} \sub{z}{y}
				            = (\lambda x_1. \rback{E_1}) \sub{z}{y}
				            = \lambda x_1. \rback{E_1} \sub{z}{y}$
			            that by inductive hypothesis on $E_1$ is equal to $\lambda x_1. \rback{E_1 \sub{z}{y}}
				            = \rback{\es{*}{\lambda x_1. E_1 \sub{z}{y}}}
				            = \rback{(\es{*}{\lambda x_1. E_1}) \sub{z}{y}}$.
		      \end{itemize}
		\item If $E := E' \es{z_1}{b'}$, then $z \neq z_1$ and thus
      \begin{alignat*}{2}
        & \rback{( \es{*}{b} E' \es{z_1}{b'} )} \sub{z}{y} \\
          = & \rback{(\es{*}{b} E')} \sub{z_1}{b'} \sub{z}{y} \\
          = & \rback{(\es{*}{b} E')} \sub{z}{y} \sub{z_1}{b' \sub{z}{y}} \quad && \text{since $z \neq z_1$}\\
          = & \rback{( \es{*}{b} E' \sub{z}{y} )} \sub{z_1}{b' \sub{z}{y}} \quad && \text{by i.h. } \\
          = & \rback{( \es{*}{b} E' \sub{z}{y} \es{z_1}{b' \sub{z}{y}} )} \\
          = & \rback{( (\es{*}{b} E' \es{z_1}{b'}) \sub{z}{y} )} \quad && \text{since $z \neq z_1$} \tag*{\qed}
      \end{alignat*}
	\end{itemize}
\end{proof}

\begin{lemma}
  \label{lem:multiplug}
  For every variable $z$ and $\lambda_{Plot}$-terms $t$ and $s$ such that
  $t\sub{z}\hole$ is a right v-context, $(t\sub{z}\hole)\plug{s} = t\sub{z}{s}$.
\end{lemma}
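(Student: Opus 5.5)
Since $t\sub{z}{\hole}$ is a right v-context, it contains exactly one occurrence of the hole. Consequently $z$ occurs in $t$ exactly once, and that occurrence is in a position generated by the grammar $R ::= \hole \mid tR \mid Rv$. I would make this explicit and then argue by structural induction on the right v-context $R := t\sub{z}{\hole}$, or equivalently on $t$, tracking where the unique occurrence of $z$ lies. Throughout, I take $\sub{z}{\cdot}$ to be the standard capture-avoiding meta-level substitution. Because the hole of a right v-context never lies under a binder, plugging and substitution should coincide: neither operation crosses a $\lambda$, so no capture can distinguish them.

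The cases follow the shape of $t$.
\begin{itemize}
\item $t = z$. Then $R = \hole$ and $R\plug{s} = s = z\sub{z}{s}$.
\item $t = x$ with $x \neq z$. Then $t\sub{z}{\hole} = x$ has no hole and is not a right v-context, so this case contradicts the hypothesis.
\item $t = \lambda x.t'$. The resulting context would either have no hole, or have its hole under a binder, which the grammar of right v-contexts excludes. This case is likewise impossible.
\item $t = uu'$. Here $t\sub{z}{\hole} = u\sub{z}{\hole}\,u'\sub{z}{\hole}$, and this must match either $t_0R'$ or $R'v$. In the first shape, $u\sub{z}{\hole}$ is hole-free, so $z \notin \fv(u)$ and $u\sub{z}{s} = u$; moreover $u'\sub{z}{\hole}$ is a right v-context, so the induction hypothesis gives $(u'\sub{z}{\hole})\plug{s} = u'\sub{z}{s}$. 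Hence $(t\sub{z}{\hole})\plug{s} = u\,(u'\sub{z}{s}) = t\sub{z}{s}$. The second shape is symmetric, using that $u'\sub{z}{\hole}$ is a hole-free value with $z\notin\fv(u')$.
\end{itemize}

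The main obstacle is the formal status of ``$t\sub{z}{\hole}$'' and of plugging. Substitution treats $\hole$ like a term, while plugging is non-capturing, so I must justify that the two agree. I would handle this by first proving an auxiliary claim: if $t\sub{z}{\hole}$ is a right v-context, then $z$ does not occur free under any $\lambda$ in $t$. This follows from the same case analysis, since a hole under a $\lambda$ is never produced by the grammar. With that claim, substituting $s$ for $z$ never enters a binder, so capture-avoidance is vacuous, and the application case reduces to the induction hypothesis on one component together with $w\sub{z}{s} = w$ for the other component $w$.
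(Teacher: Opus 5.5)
Your proof is correct and follows essentially the same structural induction on $t$ as the paper: the same impossible cases (a variable $x\neq z$, abstractions), and the same two subcases for applications, according to which component carries the hole. Your auxiliary claim about capture is harmless but not needed, since the induction never descends under a $\lambda$; note also that $z$ occurs \emph{free} exactly once in $t$, not necessarily exactly once overall (e.g.\ $t=(\lambda z.z)\,z$).
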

\begin{proof}
 By structural induction over $t$:
 \begin{itemize}
  \item Case $z$: $(z\sub{z}{\hole})\plug{s} = \hole \plug{s} = s = z\sub{z}{s}$
  \item Case $x$ where $x \neq z$: impossible since $x\sub{z}{\hole} = x$ that is not a right v-context
  \item Case $\lambda z.t$: impossible since $(\lambda z.t)\sub{z}{\hole} = \lambda z.t$ that is not
         a right v-context
  \item Case $\lambda x.t$ where $x \neq z$: impossible since $(\lambda x.t)\sub{z}{\hole} = \lambda x.t\sub{z}{\hole}$ that is not a right v-context
  \item Case $t_1t_2$:
    we know by hypothesis that $(t_1t_2)\sub{z}{\hole} = (t_1\sub{z}{\hole})(t_2\sub{z}{\hole})$ is
     a right v-context. We distinguish two cases:
     \begin{itemize}
       \item Case where $t_2\sub{z}{\hole}$ is a term (and thus $t_2\sub{z}{\hole}=t_2$ and
         so $z \not\in \fv(t_2)$) and $t_1\sub{z}{\hole}$ is a right v-context and thus by i.h. 
         $(t_1\sub{z}{\hole})\plug{s} = t_1\sub{z}{s}$:
         \begin{alignat*}{2}
             & ((t_1t_2)\sub{z}{\hole})\plug{s}\\
           = & ((t_1\sub{z}{\hole})(t_2\sub{z}{\hole}))\plug{s}\\
           = & ((t_1\sub{z}{\hole})\plug{s})((t_2\sub{z}{\hole})\plug{s})\\
           = & (t_1\sub{z}{s})t_2 & \text{by i.h. and $z \not\in \fv(t_2)$}\\
           = & (t_1t_2)\sub{z}{s}
         \end{alignat*}
       \item Case where $t_1\sub{z}{\hole}$ is a term (and thus $t_1\sub{z}{\hole}=t_1$ and
         so $z \not\in \fv(t_1)$) and $t_2\sub{z}{\hole}$ is a right v-context and thus by i.h. 
         $(t_2\sub{z}{\hole})\plug{s} = t_2\sub{z}{s}$:
         \begin{alignat*}{2}
             & ((t_1t_2)\sub{z}{\hole})\plug{s}\\
           = & ((t_1\sub{z}{\hole})(t_2\sub{z}{\hole}))\plug{s}\\
           = & ((t_1\sub{z}{\hole})\plug{s})((t_2\sub{z}{\hole})\plug{s})\\
           = & t_1(t_2\sub{z}{s}) & \text{by i.h. and $z \not\in \fv(t_1)$}\\
           = & (t_1t_2)\sub{z}{s} \tag*{\qed}
         \end{alignat*}
     \end{itemize}
 \end{itemize}
\end{proof}

\begin{lemma}
  \label{lem:lem4}
  For all environment context $E$ such that its read-back $\rback{E}$ is a right v-context and $E$
  begins with a crumble, then $\rback{(E \plug{b})} = \rback{E} \plug{\rback{b}}$
\end{lemma}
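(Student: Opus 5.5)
The plan is to unfold the definition of plugging and of read-back, and then close with \autoref{lem:multiplug}. Write the environment context as $E = E'\es{z}{\hole}$, where $E'$ begins with a crumble, i.e.\ $E' = \ess{b_0}E''$ (possibly $E'$ is just the $\ess{\cdot}$ entry and the context is $\ess{\hole}$).

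First dispose of the degenerate case $E = \ess{\hole}$. Here $E\plug{b} = \ess{b}$, so $\rback{(E\plug{b})} = \rback{b} = \hole\plug{\rback{b}} = \rback{E}\plug{\rback{b}}$, directly from the read-back clauses $\rback{\ess{b}} = \rback{b}$ and $\rback{\ess{\hole}} = \hole$.

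In the general case $E = E'\es{z}{\hole}$ with $z \neq *$, compute both sides.
\begin{itemize}
\item By definition of plugging, $E\plug{b} = E'\es{z}{b}$. By the last-entry clause of the read-back, $\rback{(E'\es{z}{b})} = \rback{E'}\sub{z}{\rback{b}}$.
\item On the other side, $\rback{E} = \rback{E'}\sub{z}{\hole}$.
\item By hypothesis $\rback{E}$ is a right v-context, so $t := \rback{E'}$ is a $\lambda_{Plot}$-term with $t\sub{z}{\hole}$ a right v-context.
\item \autoref{lem:multiplug}, applied with $s := \rback{b}$, then gives $(\rback{E'}\sub{z}{\hole})\plug{\rback{b}} = \rback{E'}\sub{z}{\rback{b}}$.
\end{itemize}
Chaining these equalities yields the claim. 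Note that $\rback{E'}$ is a genuine term, not a context, because $E'$ is a crumble that contains no hole.

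The only delicate point is the side condition of \autoref{lem:multiplug}. The read-back $\rback{E'}\sub{z}{\hole}$ must be obtained by the same meta-level substitution that is later instantiated by $\rback{b}$. It also requires that no capture occurs when substituting $\rback{b}$ for $z$ under binders of $\rback{E'}$. The first requirement holds by the way read-back of contexts is defined. For the second, the lemma's hypothesis already forces every occurrence of $z$ in $\rback{E'}$ to lie outside abstractions (otherwise the result would not be a right v-context), so no capture can happen. I expect this observation, rather than any calculation, to be the main thing to state carefully. If needed, it can be justified by appealing to the case analysis inside \autoref{lem:multiplug}, which rules out $z$ under a $\lambda$.
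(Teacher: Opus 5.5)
Your proof is correct and matches the paper's: the same case split into $\ess{\hole}$ and $E'\es{z}{\hole}$, unfolding plugging and read-back, then closing with \autoref{lem:multiplug}. You cite that lemma at the right step, where the paper's annotation is shifted by one line, and your remark that the right v-context hypothesis forces $z$ to occur exactly once and outside abstractions (so no capture arises) is sound and makes explicit what the paper leaves implicit.
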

\begin{proof}
 We proceed by cases on the shape of $E$:
 \begin{itemize}
   \item Case $\ess{\hole}$: $$\rback{(\ess{\hole}\plug{b})} = \rback{\ess{b}} = \rback{b}  = \hole\plug{\rback{b}} = \rback{\ess{\hole}} \plug{\rback{b}}$$
   \item Case $E' \es{z}\hole$:
     \begin{alignat*}{2}
         & \rback{(E' \es{z}\hole \plug{b})}\\
       = & \rback{(E'\es{z}{b})}\\
       = & \rback{E'}\sub{z}{\rback{b}} & \text{by~\autoref{lem:multiplug}}\\
       = & (\rback{E'}\sub{z}{\hole})\plug{\rback{b}}\\
       = & \rback{(E'\es{z}{\hole})}\plug{\rback{b}} \tag*{\qed}
    \end{alignat*}
 \end{itemize}
\end{proof}

\begin{lemma}
	\label{lem:context-plug-sigma}
	Let $R$ be a right v-context and $\sigma_{E_v}$ a parallel substitution. Then $R \plug{t} \sigma_{E_v} = R \sigma_{E_v} \plug{t \sigma_{E_v}}$.
\end{lemma}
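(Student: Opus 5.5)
The plan is to proceed by structural induction on the right v-context $R$, following the grammar $R ::= \hole \mid tR \mid Rv$. Throughout, the parallel substitution $\sigma_{E_v}$ is treated as an ordinary capture-avoiding substitution that distributes over applications and acts on a context by acting on all its non-hole subterms, with $\hole\sigma_{E_v} = \hole$. For this to be well defined and capture-free, we will work under the implicit well-naming conventions in force in the paper: the variables in $\dom(E_v)$ and the free variables of the values substituted do not clash with the variables bound in $R$ or $t$. This is exactly the situation in which the lemma is used, namely with reachable well-named crumbles.

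\emph{Base case} $R = \hole$. Here $R\plug{t}\sigma_{E_v} = t\sigma_{E_v}$, and $R\sigma_{E_v}\plug{t\sigma_{E_v}} = \hole\plug{t\sigma_{E_v}} = t\sigma_{E_v}$, so both sides coincide.

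\emph{Case} $R = uR'$. We compute
\[ (uR')\plug{t}\sigma_{E_v} = (u\,R'\plug{t})\sigma_{E_v} = (u\sigma_{E_v})(R'\plug{t}\sigma_{E_v}). \]
By the inductive hypothesis on $R'$ this equals $(u\sigma_{E_v})(R'\sigma_{E_v}\plug{t\sigma_{E_v}})$. By the definition of plugging into $(u\sigma_{E_v})(R'\sigma_{E_v})$, which is $(uR')\sigma_{E_v}$, this is $(uR')\sigma_{E_v}\plug{t\sigma_{E_v}}$.

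\emph{Case} $R = R'v$. The computation is symmetric to the previous one, using the inductive hypothesis on $R'$ and the fact that substitution distributes over application.

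Incidentally, $R\sigma_{E_v}$ is again a right v-context. Every value $v$ is mapped by $\sigma_{E_v}$ to a value: a variable becomes either itself or the read-back of a crumbled value, and the latter is a value (cf. \autoref{lem:rback-to-value} and the definition of read-back). The statement itself does not require this property, but it is worth noting because later uses will compose contexts.

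The only delicate point is variable capture. Because right v-contexts do not place the hole under a binder, no binder of $R$ can capture free variables of $t$ through the hole. Hence pushing $\sigma_{E_v}$ inside the hole is always legitimate; binders only occur inside the subterms $u$ and $v$, where the usual capture-avoiding definition of substitution applies. I expect this to be the only step requiring a remark, and I would dispatch it by appeal to the convention that bound names are distinct from $\dom(E_v)$ and from the free variables of its values.
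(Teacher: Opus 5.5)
Your proof is correct and follows the paper's argument. Both proceed by structural induction on $R$: a trivial base case $\hole$, and two inductive cases $uR'$ and $R'v$, each closed by distributing $\sigma_{E_v}$ over the application and applying the induction hypothesis to $R'$. Your extra remarks on capture (the hole is never under a binder) and on $R\sigma_{E_v}$ remaining a right v-context do not appear in the paper's proof, but they are sound and do no harm.
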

\begin{proof}
  By structural induction on $R$. 
  
  In the base case $R := \hole$ are trivial, as $R \plug{t} = t$. 

  For the inductive case, if $R := u R'$ then
  \begin{alignat*}{3}
      & R \plug{t} \sigma_{E_v} \\
    = \; & (u R' \plug{t}) \sigma_{E_v} \\
    = \; & u \sigma_{E_v} (R' \plug{t} \sigma_{E_v}) \\
    = \; & u \sigma_{E_v} (R' \sigma_{E_v} \plug{t \sigma_{E_v}}) && \text{ by i.h. on } R' \\ 
    = \; & R \sigma_{E_v} \plug{t \sigma_{E_v}}
  \end{alignat*}

  If $R := R' v$ then
  \begin{alignat*}{3}
      & R \plug{t} \sigma_{E_v} \\
    = \; & ( R' \plug{t} v) \sigma_{E_v} \\
    = \; & (R' \plug{t} \sigma_{E_v}) v \sigma_{E_v} \\
    = \; & (R' \sigma_{E_v} \plug{t \sigma_{E_v}}) v \sigma_{E_v} && \text{ by i.h. on } R' \\ 
    = \; & R \sigma_{E_v} \plug{t \sigma_{E_v}} \tag*{\qed}
  \end{alignat*}
\end{proof}

\begin{lemma}
  \label{lem:reachable-crumble-well-named}
  All reachable crumbles $E$ are well-named.
\end{lemma}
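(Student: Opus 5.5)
We proceed by induction on the number of $\to_{Cr}$-steps used to reach $E$ from the initial crumble $\crumb{t}$, following the same pattern as \autoref{lem:crumb-reachable-closed} and \autoref{lem:context-decoding-reachable}. In the base case $E = \crumb{t}$ and the claim is exactly \autoref{lem:crumb-t-well-named}. In the inductive step $E$ is the right-hand side of an $\mone$ or $\mtwo$ step whose left-hand side $E_p$ is reachable in fewer steps, so it is well-named by the inductive hypothesis. We then check the three clauses of well-naming separately: distinctness of the domain, $\dom \cap \bv = \emptyset$, and the ordering condition that if $E = E_1\es{z}{b}E_2$ then $z \notin \fv(b)\cup\fv(E_2)$.

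\textbf{Case $\mtwo$.} Here $E_p = E_1\es{z}{xy}E_v$ and $E = E_1\es{z}{y'_1}E_v$, where $y'_1$ denotes the value $E_v(y'_1)$, which is a closed abstraction already present in $E_v$.
\begin{itemize}
\item The domain of $E$ is unchanged, so distinctness is inherited.
\item The bound variables of $E$ are those of $E_p$ together with those of a copy of a value already occurring in $E_v$. These copied bound variables already lie in $\bv(E_p)$, hence are disjoint from the domain.
\item For the ordering condition at entry $z$: by \autoref{lem:closed-lem-1} (via \autoref{lem:crumb-reachable-closed}), the free variables of $E_v(y'_1)$ lie in the domain of the suffix of $E_v$ following the entry for $y'_1$. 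Those variables are distinct from $z$ by distinctness of the domain in $E_p$.
\item Entries in $E_1$ only see $b$ change from $xy$ to a term whose free variables lie in $\dom(E_v)$. The condition for those entries is inherited from $E_p$, since any entry $w$ of $E_1$ is distinct from all variables of $\dom(E_v)$.
\end{itemize}
Strictly, copying a value duplicates its bound names. We will either rely on the convention, stated in the paper, that copies are taken up to $\alpha$-renaming, or note that $\bv$-disjointness only concerns the domain, which is what the definition requires.

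\textbf{Case $\mone$.} Here $E = E_1\es{z}{b'}E'E_v$, where $\lambda x_1.\ess{b}E'' = E_v(x)^\alpha$ and $(\ess{b}E'')\sub{x_1}{y} = \ess{b'}E'$. The key fact is the definition of $\cdot^\alpha$: all variables in the domain and the bound variables of the copied body are chosen globally fresh, so the new domain entries $\dom(E')$ are disjoint from $\dom(E_p)$ and $\bv(E_p)$. Moreover, substituting $y$ for $x_1$ does not touch domains.
\begin{itemize}
\item \emph{Distinctness.} We need that $\ess{b}E''$ itself has pairwise distinct domain entries. We get this because, by \autoref{cor:body-is-a-crumble}, the body is the crumbling of a term, hence well-named by \autoref{lem:crumb-t-well-named}, and renaming preserves this.
\item \emph{Domain versus bound variables.} The new bound variables are fresh, and the old ones are inherited.
\item \emph{Ordering condition.} For entries inside $E'$, we use the well-naming of the body: the variable $y$ substituted in is in $\dom(E_v)$, hence distinct from the fresh names. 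For the entry $z$, it is distinct from everything in $E'$ by freshness, and from $\fv(E_v)$ by the inductive hypothesis. Entries of $E_1$ keep their property because the new names are fresh.
\end{itemize}

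The main obstacle is the $\mone$ case, specifically justifying that the copied body is internally well-named and that freshness of $\cdot^\alpha$ covers both domain and bound variables. I would isolate a small auxiliary claim: if $\ess{b}E''$ is well-named and $y\notin\dom(E'')\cup\bv(E'')$, then $(\ess{b}E'')\sub{x_1}{y}$ is well-named with the same domain and bound variables. This is proved by a routine induction on $E''$, and then the $\mone$ case reduces to bookkeeping of disjoint name sets.
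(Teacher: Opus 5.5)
Your proposal is correct and follows the same approach as the paper: induction on reachability with a case split on $\mone$ and $\mtwo$, using the inductive hypothesis for the rewritten entry $z$ and the freshness built into $E_v(x)^\alpha$ for the copied body; where the paper checks only the entry $z$ and appeals to the definition of $\cdot^\alpha$, you verify all three clauses of well-naming explicitly. One small slip: in the $\mone$ case, freshness alone does not give $z \notin \fv(b') \cup \fv(E')$, because $b'$ and $E'$ also contain $y$ and the free variables of $E_v(x)$; these lie in $\dom(E_v)$ and are excluded exactly as in your $\mtwo$ case, by domain distinctness (or by $z \notin \fv(xy)$).
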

\begin{proof}
  By cases on reachability of $E$.

  If $E$ was reached by $m_1$, i.e. $E_0 = E_1 \es{z}{x y} E_v \mone E_1 \es{z}{b'}E' E_v = E$ with $E_v(x)^\alpha = \lambda x_1. \ess{b}E$ then by i.h. on $E_0, z \notin \fv(E_v)$ and so $z \notin \fv(b') \cup \fv(E') $. 
  Moreover, $E_v(x)^\alpha = \lambda x_1. \ess{b}E$ is well-named by definition of $E_v(x)^\alpha$.
  
  If $E$ was reached by $m_2$, i.e. $E_0 = E_1 \es{z}{v} E_v \mtwo E_1 \es{z}{E_v(y_1')} E_v = E$ by i.h. on $E_0, z \notin \fv(E_v)$ and so $z \notin \fv(E_v(y_1'))$.
  \qed
\end{proof}

\begin{lemma}
  \label{lem:lem3}
  For every well-named v-environment $E_v$ and any any variable $x$, $\sigma_{E_v}(x) = \rback{E_v(x)} \sigma_{E_v}$
\end{lemma}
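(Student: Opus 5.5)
We argue by structural induction on the v-environment $E_v$, following its definition as a sequence of explicit substitutions and the recursive clauses of $\sigma_{(\cdot)}$. It is convenient to peel entries from the left, since $\sigma_{[x\leftarrow v]E'_v}$ is defined by cases on the leftmost entry. We read $E_v(x)$ as the value bound to $x$ in $E_v$, and we understand the statement for $x \in \dom(E_v)$. Otherwise $E_v(x)$ is undefined, and the natural reading is $\sigma_{E_v}(x) = x$, which is immediate by induction since every clause sends a variable outside the domain to itself. Well-naming is used in one place only: to guarantee that the value bound to a variable does not mention that variable, or anything bound to its left.

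\textbf{Base case.} The case $E_v = \epsilon$ is vacuous, since $\dom(\epsilon) = \emptyset$.

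\textbf{Inductive step.} Let $E_v = [w\leftarrow v]E'_v$, where $E'_v$ is again a well-named v-environment.

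\emph{Case $x = w$.} By definition, $\sigma_{E_v}(x) = \rback{v}\,\sigma_{E'_v}$ and $E_v(x) = v$. So we must show $\rback{v}\sigma_{E'_v} = \rback{v}\sigma_{E_v}$. The two substitutions differ only on $w$. By well-naming, $w \notin \fv(E'_v)$, and more importantly $w \notin \fv(v)$, since the definition forbids $z \in \fv(b)$ for an entry $\es{z}{b}$. By the read-back clauses, $\fv(\rback{v}) \subseteq \fv(v)$, so $w \notin \fv(\rback{v})$. Hence the two substitutions agree on $\rback{v}$.

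\emph{Case $x \neq w$.} Now $x \in \dom(E'_v)$, so $E_v(x) = E'_v(x)$ and $\sigma_{E_v}(x) = x\,\sigma_{E'_v} = \sigma_{E'_v}(x)$. By the induction hypothesis this equals $\rback{E'_v(x)}\,\sigma_{E'_v}$. It remains to replace $\sigma_{E'_v}$ with $\sigma_{E_v}$, which again requires $w \notin \fv(\rback{E'_v(x)})$. This follows because $w$ occurs in the domain to the left of $E'_v(x)$, and well-naming forbids $z \in \fv(E_2)$ for any split $E_1\es{z}{b}E_2$. A small auxiliary fact is needed here: for $y \notin \fv(t)$, $t\,\sigma_{[y\leftarrow v]E'} = t\,\sigma_{E'}$. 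This is proved by a routine induction on $t$, unfolding the parallel substitution.

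\textbf{Main obstacle.} The delicate point is that $\sigma$ is defined as a sequential composition, $x \mapsto \rback{v}$ followed by $\sigma_{E'_v}$, rather than as a genuine simultaneous substitution. We therefore have to check that these nested substitutions behave correctly on the free variables of read-backs. The fact $\fv(\rback{b}) \subseteq \fv(b)$ follows by an easy induction mirroring the proof of \autoref{rem:fv-term-crumb}, and we invoke it together with the well-naming conditions in both cases.
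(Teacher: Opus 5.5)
Your proposal is correct and is essentially the paper's proof. Both argue by structural induction on $E_v$, peeling off the leftmost entry, splitting on whether $x$ is that entry's variable, and using well-naming to discard the leftmost substitution. The only differences are cosmetic: you make the base case vacuous by restricting to $x \in \dom(E_v)$, where the paper uses the convention $E_v(x)=x$ outside the domain, and you state explicitly the facts $\fv(\rback{b}) \subseteq \fv(b)$ and that a substitution on a variable not free in a term has no effect, which the paper uses without comment.
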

\begin{proof}
  By structural induction on $E_v$.

  If $E_v = \epsilon$ then it is trivial as $\sigma_{E_v}$ is the identity function and $\rback{x}=x$.

  If $E_v = \es{y}{v'}E'_v$ then $\sigma_{E_v}(x) = \sigma_{E'_v}(x) = \rback{E'_v(x)} \sigma_{E'_v(x)}$ by i.h. on $E'_v$; moreover, by well-naming, $y \notin \fv(v) \cup \fv(E'_v)$, thus in particular $y \notin \fv(E'_v(x))$
  and so $\rback{E'_v(x)} \sigma_{E'_v(x)} = \rback{E'_v(x)} \sigma_{E_v(x)} = \rback{E_v(x)} \sigma_{E_v(x)}$.

  If $E_v = \es{x}{v}E'_v$ then $E_v(x) = v$; by well-naming $x \notin \fv(v)$ hence
  $\sigma_{E_v}(x) = \rback{v} \sigma_{E'_v}= \rback{v} \sigma_{E_v} = \rback{E_v(x)}\sigma_{E_v}$.
  \qed
\end{proof}

\begin{lemma}
  \label{lem:lem2}
	Given a reachable crumble $E = E_1\es{z}{xy} E_v$, its read-back is
	\[\rback{( E_1 \es{z}{xy} E_v )} = \rback{( E_1 \es{z}{\hole} )} \sigma_{E_v} \plug{ \sigma_{E_v}(x)\; \sigma_{E_v}(y)} \]
\end{lemma}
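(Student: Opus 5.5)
We first split off the v-environment with \autoref{lem:envrback-with-sub}, obtaining
\[\rback{(E_1\es{z}{xy}E_v)} = \rback{(E_1\es{z}{xy})}\,\sigma_{E_v}.\]
Here $E_1\es{z}{xy}$ is a crumble, since $E$ is a crumble and so its leftmost entry is the one for $*$. Next we rewrite $\rback{(E_1\es{z}{xy})}$ as a plugging. $E$ is reachable, so by the Contextual Decoding lemma (\autoref{lem:context-decoding-reachable}) $\rback{(E_1\es{z}{\hole})}$ is a right v-context. The environment context $E_1\es{z}{\hole}$ begins with a crumble, so \autoref{lem:lem4} applies:
\[\rback{(E_1\es{z}{xy})} = \rback{(E_1\es{z}{\hole})}\plug{\rback{(xy)}} = \rback{(E_1\es{z}{\hole})}\plug{xy}.\]
In the degenerate case $E_1=\epsilon$, $z=*$, the context is $\ess{\hole}$ and the same equation holds trivially.

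We then push $\sigma_{E_v}$ inside the plugging with \autoref{lem:context-plug-sigma}, instantiated with $R:=\rback{(E_1\es{z}{\hole})}$ and $t:=xy$. This gives
\[\rback{(E_1\es{z}{\hole})}\plug{xy}\,\sigma_{E_v} = \rback{(E_1\es{z}{\hole})}\sigma_{E_v}\plug{(xy)\sigma_{E_v}}.\]
Since $\sigma_{E_v}$ is a substitution acting homomorphically on applications, $(xy)\sigma_{E_v} = \sigma_{E_v}(x)\,\sigma_{E_v}(y)$, which is exactly the claimed form.

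The main point needing care is the side condition of \autoref{lem:context-plug-sigma}: applying a substitution to a context must not capture variables or touch the hole. Right v-contexts contain no binders around the hole, and by well-naming (\autoref{lem:reachable-crumble-well-named}) the domain of $E_v$ is disjoint from the bound variables of $E$. Hence $\sigma_{E_v}$ commutes with plugging without capture. I expect a short remark citing well-naming to suffice here; the rest of the proof is a chain of equalities given by the earlier lemmas.
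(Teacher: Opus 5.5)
Your proof is correct and follows the paper's own chain of equalities: split off $E_v$ with \autoref{lem:envrback-with-sub}, rewrite $\rback{(E_1\es{z}{xy})}$ as a plugging, push $\sigma_{E_v}$ inside with \autoref{lem:context-plug-sigma}, and distribute the substitution over the application. The one difference is that you explicitly justify $\rback{(E_1\es{z}{xy})} = \rback{(E_1\es{z}{\hole})}\plug{xy}$ via \autoref{lem:lem4} together with Contextual Decoding (\autoref{lem:context-decoding-reachable}), which the paper uses silently; your capture-avoidance remark is harmless but not required, since the hole of a right v-context never occurs under a binder.
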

\begin{proof}
  $$
  \begin{array}{lll}
    & \rback{E} \\
  = & \rback{E_1\es{z}{xy} E_v} \\
  = & \rback{E_1\es{z}{xy}}\sigma_{E_v} & \mbox{by~\autoref{lem:envrback-with-sub}}\\
  = & \rback{E_1\es{z}{\hole}\plug{xy}}\sigma_{E_v} \\
  = & \rback{E_1\es{z}{\hole}}\sigma_{E_v}\plug{\rback{xy}\sigma_{E_v}} & \mbox{by~\autoref{lem:context-plug-sigma}}\\
    = & \rback{E_1\es{z}{\hole}}\sigma_{E_v}\plug{\sigma_{E_v}(x)\sigma_{E_v}(y)}
  \end{array}
  $$ \qed
\end{proof}

\begin{lemma}
  For any closed $\lambda_{Plot}$-term $t$, $\len{\crumb{t}} \le \size{t}$.
  \label{lem:lenlen}
\end{lemma}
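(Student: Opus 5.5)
Since the lemma concerns only the top-level length of $\crumb{t}$ (the number of entries in the outermost environment, not those nested inside abstraction bodies), I would prove a slightly more general claim by structural induction on $t$: for every $\lambda_{Plot}$-term $t$ that is not a variable, $\len{\crumb{t}} \le \size{t}$. Closedness only serves to exclude the variable case, where $\crumb{\cdot}$ is undefined; and no proper subterm to which the induction hypothesis is applied in the crumbling clauses of \autoref{fig:crumbled-forms} is a variable, since variable arguments are absorbed by the cases $uy$, $xu'$ and $xy$.

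The cases follow the definition of the crumbling. For abstractions, $\crumb{\lambda x.y} = \ess{\lambda x.\ess{y}}$ and $\crumb{\lambda x.t'} = \ess{\lambda x.\crumb{t'}}$ both have length $1$, while $\size{\lambda x.s} \ge 2$. For $xy$ the length is $1 \le 3 = \size{xy}$. For $uy$ with $u$ not a variable, $\crumb{uy} = \ess{xy}\es{x}{b}E$ with $\crumb{u} = \ess{b}E$, so its length is $1 + \len{\crumb{u}} \le 1 + \size{u} < \size{u}+2 = \size{uy}$ by the induction hypothesis. The case $xu'$ is symmetric. For $uu'$ with neither a variable, $\crumb{uu'} = \ess{xy}\es{x}{b}E\es{y}{b'}E'$, so its length is $1 + \len{\crumb{u}} + \len{\crumb{u'}} \le 1 + \size{u} + \size{u'} = \size{uu'}$.

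The one point requiring care is the bookkeeping of lengths across the clauses: the entry $\ess{b}$ heading $\crumb{u}$ is not dropped but re-indexed as $\es{x}{b}$. Hence $\len{\es{x}{b}E} = \len{\crumb{u}}$, and the only new entry is $\ess{xy}$, which is paid for by the application node, contributing $+1$ to $\size{\cdot}$. Having verified this, the inequality follows from additivity of $\len{\cdot}$ over concatenation, and the claim for closed $t$ follows immediately, as closed terms are never variables.
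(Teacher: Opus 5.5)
Your proposal is correct and follows the paper's proof: structural induction on $t$, with the same case split along the crumbling clauses and the same arithmetic in each case. Generalizing the statement to all non-variable terms is a small gain in rigor, because the paper applies the induction hypothesis to subterms $u, u'$ that need not be closed, and your version makes that step legitimate.
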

\begin{proof}
  We proceed by structural induction on $t$.
  \begin{itemize}
   \item Case $x$: impossible since $t$ is closed
   \item Case $\lambda x.t'$: $\crumb{\lambda x.t'}$ is of the form $\ess{E}$ and
     $$\len{\ess{E}} = 1 \leq 1 + \len{t'} = \len{\lambda x.t}$$
   \item Case $xy$: $\crumb{xy} = \ess{xy}$ and $\len{\ess{xy}} = 1 \leq 3 = \size{xy}$
   \item Case $uy$: $\crumb{uy} = \ess{xy}\es{x}{b}E$ where $\crumb{u}=\ess{b}E$ and thus, by i.h.,
     $\len{\ess{b}E} \leq \size{u}$. We have
      $$\len{\ess{xy}\es{x}{b}E} = 1 + \len{\es{x}{b}E} \leq 1 + \size{u} \leq 2 + \size{u} = \size{uy}$$
   \item Case $xu'$: $\crumb{xu'} = \ess{xy}\es{y}{b'}E'$ where $\crumb{u'}=\ess{b'}E'$ and thus, by i.h.,
     $\len{\ess{b'}E'} \leq \size{u'}$. We have
      $$\len{\ess{xy}\es{y}{b'}E'} = 1 + \len{\es{y}{b'}E'} \leq 1 + \size{u'} \leq 2 + \size{u'} = \size{xu'}$$
   \item Case $uu'$: $\crumb{uu'} = \ess{xy}\es{x}{b}E\es{y}{b'}E'$ where $\crumb{u}=\ess{b}E$ and
         $\crumb{u'}=\ess{b'}E'$ and thus, by i.h., $\len{\ess{b}E} \leq \size{u}$ and
         $\len{\ess{b'}E'} \leq \size{u'}$. We have
         \begin{alignat*}{1}
                & \len{\ess{xy}\es{x}{b}E\es{y}{b'}E'}\\
           =    & 1 + \len{\es{x}{b}E} + \len{\es{y}{b'}E'}\\
           \leq & 1 + \size{u} + \size{u'}\\
           =    & \size{uu'} \tag*{\qed}
         \end{alignat*}
  \end{itemize}
\end{proof}

\crumblinginvariants*
\begin{proof}~
  \begin{enumerate}
    \item See \autoref{lem:reachable-crumble-well-named}.
    \item See \autoref{lem:crumb-reachable-closed}.
    \item See \autoref{lem:subterm-prop}.
    \item See \autoref{lem:context-decoding-reachable} \qed
  \end{enumerate}
  
\end{proof}

\subsection{Proofs for \autoref{sec:crumbled-calculus}}\label{app:proofs-crumbled-implementation}
\begin{lemma}[Initialization]
	\label{lem:init-crumble}
	For every closed term $t$, $\crback{t} = t$.
\end{lemma}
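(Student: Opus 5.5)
I will prove the stronger statement that $\crback{t} = t$ for every term $t$ that is not a variable, by structural induction on $t$ following the clauses of the crumbling translation. Closedness is only needed to exclude the variable case, since a closed term is never a variable and the translation is undefined on bare variables. Subterms of a closed term need not be closed, so the generalization is required for the induction to go through. The main tool is \autoref{lem:disjoint-env-sub}, which reads back an environment $E\es{x}{b}E'$ as $\rback{E}\sub{x}{\rback{(\es{*}{b}E')}}$ whenever $E\;\#\;E'$. The disjointness hypotheses will come from freshness of the variables introduced by crumbling (\autoref{lem:crumb-t-well-named}) together with $\fv(u)=\fv(\crumb{u})$ (\autoref{rem:fv-term-crumb}).

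The abstraction cases are immediate.
\begin{itemize}
\item For $\lambda x.y$, the read-back clause $\rback{(\lambda x.\ess{y})}=\lambda x.y$ gives the result directly.
\item For $\lambda x.t'$ with $t'$ not a variable, we get $\rback{\ess{\lambda x.\crumb{t'}}} = \lambda x.\crback{t'} = \lambda x.t'$ by the induction hypothesis.
\item For $xy$, the read-back is $xy$.
\end{itemize}

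The application case $uu'$ with $u,u'$ not variables is the main one; $uy$ and $xu'$ are simplified instances of it. We have $\crumb{uu'}=\ess{xy}\es{x}{b}E\es{y}{b'}E'$ with $\crumb{u}=\ess{b}E$ and $\crumb{u'}=\ess{b'}E'$, and we split at $\es{y}{b'}$.
\begin{enumerate}
\item The prefix $\ess{xy}\es{x}{b}E$ has free variables $\{y\}\cup\fv(u)$. Its intersection with $\dom(\es{y}{b'}E')$ is empty, except for $y$ itself, which is the variable being substituted and so does not belong to $\dom(E')$. This holds because $\dom(E')$ consists of globally fresh names and $\fv(u)=\fv(\crumb{u})$.
\item \autoref{lem:disjoint-env-sub} then gives $\rback{(\ess{xy}\es{x}{b}E)}\sub{y}{\crback{u'}}$.
\item Applying \autoref{lem:disjoint-env-sub} again to $\ess{xy}\es{x}{b}E$, with $\ess{xy}\;\#\;E$ since $x,y$ are fresh with respect to $\dom(E)$, yields $(xy)\sub{x}{\crback{u}}$.
\item The composite is therefore $(x y)\sub{x}{\crback{u}}\sub{y}{\crback{u'}}$.
\item Freshness of $y$ gives $y\notin\fv(\crback{u})=\fv(u)$, so this equals $\crback{u}\,\crback{u'}$, which is $uu'$ by the induction hypothesis.
\end{enumerate}

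The main obstacle is the bookkeeping of side conditions: ensuring that each application of \autoref{lem:disjoint-env-sub} has its disjointness hypothesis satisfied, and that consecutive substitutions do not capture. I would handle this uniformly by invoking the global freshness convention of the translation, as already used in the proof of \autoref{lem:crumb-t-well-named}. One technical point needs care: \autoref{lem:disjoint-env-sub} is stated for crumbled terms beginning with $\ess{\cdot}$, while it is applied here to the segment $\es{x}{b}E$, which does not begin with $\ess{\cdot}$. I would observe that its proof does not depend on this and applies to such segments verbatim.
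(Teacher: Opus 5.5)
Your proposal is correct and follows the paper's own route: structural induction on $t$, with \autoref{lem:disjoint-env-sub} reading back the application cases (the paper collapses your two uses of it into a single displayed step). Your generalization to all non-variable terms is genuinely needed, since the paper tacitly applies the induction hypothesis to possibly open subterms such as the body $u$ of $\lambda x.u$ (which is also why its cases $ux$ and $xu'$ arise at all), whereas your closing worry about the segment $\es{x}{b}E$ is moot, because both of your applications of the lemma are to environments beginning with $\ess{xy}$.
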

\begin{proof}
	Inductively on the shape of t:
	\begin{itemize}
		\item \( t := x \) is not possible since $t$ is closed by hypothesis
                \item $t := \lambda x. y$, then $\crback{(\lambda x. y)} = \rback{\ess{\lambda x. \es * y}} = \lambda x. y $
                \item $t := \lambda x. u$, then $\crback{(\lambda x. u)} = \rback{\ess{\lambda x.\crumb{u}}} = \lambda x. \crback{u}$; by i.h. $\crback u = u$, so $\lambda x. \crback{u} = \lambda x. u$
                \item \( t := u u' \) where $u,u'$ are not variables and
                  $\crumb{u}=\ess{b}E$ and $\crumb{u'}=\ess{b'}E'$
                  $$
                  \begin{array}{lll}
                      & \crback{u u'}\\
                    = & \rback{\es{*}{xy}\es{x}{b}E\es{y}{b'}{E'}}\\
                    = & \rback{\es{*}{xy}}\sub{x}{\rback{\ess{b}{E}}}\sub{y}{\rback{\ess{b'}{E'}}} & \mbox{by~\autoref{lem:disjoint-env-sub}}\\
                    = & \rback{\es{*}{xy}}\sub{x}{\crback{u}}\sub{y}{\crback{u'}}\\
                    = & xy\sub{x}{u}\sub{y}{u'} & \mbox{by i.h. on $u$ and $u'$}\\
                    = & uu'
                  \end{array}$$
                \item \( t := u x \) and \( t := x u' \) are similar but simpler than the previous case\qed
	\end{itemize}
\end{proof}

\begin{lemma}[Principal Projection]
	Let $c$ be a reachable crumble: if $c \to_a d$ then $\rback{c} \to_{\beta_v} \rback{d}$ with $a \in \{m_1, m_2\}$.
	\label{lem:principal-proj}
\end{lemma}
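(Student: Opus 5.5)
The plan is to fix a reachable crumble $c = E_1\es{z}{xy}E_v$ and argue by cases on $a$; both rules have this left-hand side. The starting point is \autoref{lem:lem2}:
\[ \rback{c} = R\sigma_{E_v}\plug{\sigma_{E_v}(x)\,\sigma_{E_v}(y)}, \qquad R := \rback{(E_1\es{z}{\hole})}. \]
By \autoref{lem:context-decoding-reachable}, $R$ is a right v-context. Substituting closed values into it keeps it a right v-context, so $R\sigma_{E_v}$ is one as well; the values $\rback{v}$ are closed by closure and \autoref{lem:closed-lem-1}.

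Next I identify the redex. Closure and \autoref{lem:lem1} give $x,y\in\dom(E_v)$, and well-naming makes the entries for them unique. By \autoref{lem:lem3}, $\sigma_{E_v}(x) = \rback{E_v(x)}\sigma_{E_v}$. In case $m_1$ this equals $\lambda x_1.\rback{(\ess{b}E)}\sigma_{E_v}$, and in case $m_2$ it equals $\lambda x_1.\, y_1\sigma_{E_v}$. Similarly, $\sigma_{E_v}(y)$ is a value $V$, by \autoref{lem:rback-to-value} applied to $\ess{E_v(y)}$. So $\rback{c}$ is a $\beta_v$-redex in a right v-context, and it reduces to $R\sigma_{E_v}\plug{(\rback{(\ess{b}E)}\sigma_{E_v})\sub{x_1}{V}}$. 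Here I use the $\alpha$-renamed copy $E_v(x)^\alpha$, whose read-back is $\alpha$-equivalent; since $x_1$ is fresh, the substitutions commute.

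The remaining task is to show that $\rback{d}$ has exactly this form. For $m_1$, $d = E_1\es{z}{b'}E'E_v$. By \autoref{lem:envrback-with-sub}, $\rback{d} = \rback{(E_1\es{z}{b'}E')}\sigma_{E_v}$. Because $E'$ is disjoint from $E_1$ by freshness, \autoref{lem:disjoint-env-sub} rewrites this as $\rback{E_1}\sub{z}{\rback{(\ess{b'}E')}}\sigma_{E_v}$. This is $R\plug{\rback{(\ess{b'}E')}}\sigma_{E_v}$ by \autoref{lem:lem4}, or equivalently \autoref{lem:multiplug}. Applying \autoref{lem:context-plug-sigma}, it becomes $R\sigma_{E_v}\plug{\rback{(\ess{b'}E')}\sigma_{E_v}}$.

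The (unnamed) lemma commuting read-back with variable-to-variable substitution gives $\rback{(\ess{b'}E')} = \rback{(\ess{b}E)}\sub{x_1}{y}$, using $x_1\notin\dom(E)$ from well-naming. Applying $\sigma_{E_v}$ then yields $(\rback{(\ess{b}E)}\sigma_{E_v})\sub{x_1}{\sigma_{E_v}(y)}$. This is the standard substitution-composition step: $x_1$ is not in $\dom(E_v)$, and the values in $\sigma_{E_v}$ are closed. Case $m_2$ follows the same pattern but is simpler. Here $d = E_1\es{z}{E_v(y_1')}E_v$, and $\rback{d} = R\sigma_{E_v}\plug{\rback{E_v(y'_1)}\sigma_{E_v}} = R\sigma_{E_v}\plug{\sigma_{E_v}(y'_1)}$ by \autoref{lem:lem3}. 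I then split on whether $y_1 = x_1$, in which case $y'_1 = y$, or not, in which case $y'_1 = y_1$ and $y_1$ is bound in $E_v$ by closure.

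I expect the main obstacle to be the bookkeeping in the commutation step $(t\sub{x_1}{y})\sigma_{E_v} = (t\sigma_{E_v})\sub{x_1}{\sigma_{E_v}(y)}$. It depends on freshness of $x_1$ with respect to $\dom(E_v)$ and on the closedness of the values in the range of $\sigma_{E_v}$. I would isolate it as a small auxiliary lemma proved by induction on $E_v$, invoking the Freshness and Closure invariants of \autoref{lem:crumb-invariants}.
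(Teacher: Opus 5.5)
Your proof is correct, but it takes a different route from the paper's.

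\textbf{The paper's route.} It invokes \autoref{cor:body-is-a-crumble}, and hence the Subterm Property, to write $E_v(x)^\alpha = \lambda x_1.\crumb{t}$. Initialization (\autoref{lem:init-crumble}) then gives $\rback{E_v(x)} = \lambda x_1.t$, and \autoref{lem:crumble-commute-free} gives $\rback{(\ess{b'}E')} = t\sub{x_1}{y}$. The paper fires the redex underneath $\sigma_{E_v}$, obtaining $(t\sub{x_1}{\rback{E_v(y)}})\sigma_{E_v}$. It rewrites this as $(t\sub{x_1}{y}\sub{y}{\rback{E_v(y)}})\sigma_{E_v}$ and absorbs the extra substitution into $\sigma_{E_v}$.

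\textbf{Your route.} You read back the abstraction directly by definition and push $\sigma_{E_v}$ inside before firing. The redex is then $(\lambda x_1.\rback{(\ess{b}E)}\sigma_{E_v})\,\sigma_{E_v}(y)$ in the context $R\sigma_{E_v}$. You close the gap with two tools:
\begin{itemize}
\item the unnamed lemma that read-back commutes with variable-to-variable substitution;
\item the composition law $(s\sub{x_1}{y})\sigma_{E_v} = (s\sigma_{E_v})\sub{x_1}{\sigma_{E_v}(y)}$.
\end{itemize}

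\textbf{What each buys.} Both proofs ultimately rest on the same substitution-commutation fact; the paper hides it inside its $\to_{\beta_v}$ step, while you state it. Your argument never needs bodies to be crumblings of terms, so it bypasses the subterm invariant. It also makes explicit two facts the paper leaves implicit: that $R\sigma_{E_v}$ is still a right v-context, and that the argument $\sigma_{E_v}(y)$ is a value. The paper's version reuses the translation machinery it already has and stays a single chain of equalities, at the cost of the detour through $\crumb{t}$.

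\textbf{Two small corrections.}
\begin{itemize}
\item The values stored in a reachable $E_v$ are not closed individually. After an $m_1$ step an entry such as $\lambda w.\ess{z_2 w}$ can refer to $z_2$ further to the right. Also, \autoref{lem:closed-lem-1} only concerns initial crumbles and only gives closedness relative to the suffix. What holds, and what your composition step actually needs, is that each $\sigma_{E_v}(w)$ with $w \in \dom(E_v)$ is closed. This follows from closure of the whole crumble plus well-naming, by induction on $E_v$. Preservation of right v-contexts under $\sigma_{E_v}$ needs no closedness at all, only that values are mapped to values.
\item The unnamed lemma tacitly requires $y \notin \dom(E)$ as well as $x_1 \notin \dom(E)$. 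This holds in your application by freshness of the $\alpha$-renaming.
\end{itemize}
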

\begin{proof}
	By cases on $a$.
	\begin{itemize}
    \item If $a = m_1$, then $E_1 \es{z}{xy} E_v \mone E_1 \es{z}{b'}E'E_v $ with $E_v(x)^\alpha = \lambda x_1. \es{*}{b}$, then by \autoref{cor:body-is-a-crumble} $E_v(x)$ is the crumbling of an abstraction, and thus by \autoref{lem:crumble-commute-alpha} $E_v(x)^\alpha = \crumb{\lambda x_1. t} = \lambda x_1. \crumb{t}$ for some $t$. Then
        \begin{alignat*}{3}
		        & \rback{(E_1 \es{z}{xy} E_v)} \\
          = & \rback{( E_1 \es{z}{\hole} )} \sigma_{E_v} \plug{\sigma_{E_v}(x)\sigma_{E_v}(y)} && \text{ by \autoref{lem:lem2} } \\
          = & \rback{( E_1 \es{z}{\hole} )} \sigma_{E_v} \plug{(\rback{E_v(x)} \rback{E_v(y)}) \sigma_{E_v}} && \text{ by \autoref{lem:lem3} } \\
          = & \rback{( E_1 \es{z}{\hole} )} \sigma_{E_v} \plug{(\rback{(\lambda x_1.\crumb{t})} \rback{E_v(y)}) \sigma_{E_v}} \\
          = & \rback{( E_1 \es{z}{\hole} )} \sigma_{E_v} \plug{((\lambda x_1.t) \rback{E_v(y)}) \sigma_{E_v}} && \text{ by \autoref{lem:init-crumble} } \\
          \to_{\beta_v} &  \rback{( E_1 \es{z}{\hole})} \sigma_{E_v} \plug{(t \sub{x_1}{\rback{E_v(y)}}) \sigma_{E_v}} && \text{ by~\autoref{lem:context-decoding-reachable}} \\
          = & \rback{( E_1 \es{z}{\hole} )} \sigma_{E_v} \plug{(t \sub{x_1}{y} \sub{y}{\rback{E_v(y)}}) \sigma_{E_v}} && \text{ since  } y \notin \fv(E_v(y)) \cup \fv(t) \\
          = & \rback{( E_1 \es{z}{\hole} )} \sigma_{E_v} \plug{(t \sub{x_1}{y}) \sigma_{E_v}} \\
          = & \rback{( E_1 \es{z}{\hole} )} \sigma_{E_v} \plug{\crback{t\sub{x_1}{y}} \sigma_{E_v}} && \text{ by \autoref{lem:crumble-commute-free}} \\
          = & \rback{( E_1 \es{z}{\hole} )} \sigma_{E_v} \plug{\rback{(\ess{b'}E')} \sigma_{E_v}} \\
          = & \rback{( E_1 \es{z}{\hole} )} \plug{\rback{(\ess{b'}E')}} \sigma_{E_v} && \text{ by \autoref{lem:context-plug-sigma} and \autoref{lem:context-decoding-reachable}} \\
          = & \rback{( E_1 \es{z}{\hole}  \plug{\ess{b'}E')})} \sigma_{E_v} && \text{ by \autoref{lem:lem4} } \\
          = & \rback{( E_1 \es{z}{\rback{(\ess{b'}E')}})} \sigma_{E_v} \\
          = & \rback{E_1} \sub{z}{\rback{(\ess{b'}E')}} \sigma_{E_v} \\
          = & \rback{( E_1 \es{z}{b'} E' )} \sigma_{E_v} && \text{ by \autoref{lem:disjoint-env-sub} } \\
          = & \rback{( E_1 \es{z}{b'} E' E_v )} && \text{ by \autoref{lem:envrback-with-sub} }
        \end{alignat*}

      \item If $a = m_2$ then $E_v(x) = \lambda x_1. \ess{y_1}$ and , thus
        \begin{alignat*}{3}
		        & \rback{(E_1 \es{z}{xy} E_v)} \\
          = & \rback{( E_1 \es{z}{\hole} )} \sigma_{E_v} \plug{\sigma_{E_v}(x)\sigma_{E_v}(y)} && \text{ by \autoref{lem:lem2} } \\
          = & \rback{( E_1 \es{z}{\hole} )} \sigma_{E_v} \plug{(\rback{E_v(x)} \rback{E_v(y)}) \sigma_{E_v}} && \text{ by \autoref{lem:lem3} } \\
          = & \rback{( E_1 \es{z}{\hole} )} \sigma_{E_v} \plug{(\rback{(\lambda x_1. \ess{y_1})} \rback{E_v(y)}) \sigma_{E_v}} \\
          = & \rback{( E_1 \es{z}{\hole} )} \sigma_{E_v} \plug{((\lambda x_1. y_1) \rback{E_v(y)}) \sigma_{E_v}} \\
          \to_{\beta_v} &  \rback{( E_1 \es{z}{\hole})} \sigma_{E_v} \plug{(y_1 \sub{x_1}{\rback{E_v(y)}}) \sigma_{E_v}} && \text{ by~\autoref{lem:context-decoding-reachable}} \\
          = & \rback{( E_1 \es{z}{\hole} )} \sigma_{E_v} \plug{(y_1 \sub{x_1}{y} \sub{y}{\rback{E_v(y)}}) \sigma_{E_v}} && \text{ since  } y \notin \fv(E_v(y)) \cup \fv(y_1) \\
          = & \rback{( E_1 \es{z}{\hole} )} \sigma_{E_v} \plug{(y_1 \sub{x_1}{y}) \sigma_{E_v}} \\
          = & \rback{( E_1 \es{z}{\hole} )} \sigma_{E_v} \plug{y_1' \sigma_{E_v}}   \\
          = & \rback{( E_1 \es{z}{\hole} )} \sigma_{E_v} \plug{\rback{E_v(y_1')} \sigma_{E_v}} && \text{ by \autoref{lem:lem3}} \\
          = & \rback{( E_1 \es{z}{\hole} )} \plug{\rback{E_v(y_1')} } \sigma_{E_v} && \text{ by \autoref{lem:context-plug-sigma} and \autoref{lem:context-decoding-reachable}} \\
          = & \rback{( E_1 \es{z}{\hole} \plug{E_v(y_1') }} \sigma_{E_v} && \text{ by \autoref{lem:lem4}} \\
          = & \rback{( E_1 \es{z}{E_v(y_1')})} \sigma_{E_v} \\
          = & \rback{( E_1 \es{z}{E_v(y_1')} E_v)}  && \text{ by \autoref{lem:envrback-with-sub} }
          \tag*{\qed}
        \end{alignat*}
	\end{itemize}
\end{proof}

\begin{lemma}[Determinism]
	The transition $\to_{Cr}$ is deterministic
	\label{lem:cr-deterministic}
\end{lemma}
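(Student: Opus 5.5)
To show that $\to_{Cr}$ is deterministic, I would prove that every crumble $E$ admits at most one decomposition matching the left-hand side of a rule, and that once the decomposition is fixed, at most one rule applies and it yields a unique reduct.

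\emph{Uniqueness of the decomposition.} Both rules require $E = E_1\es{z}{xy}E_v$, where $E_v$ is a v-environment. Suppose $E = E_1\es{z}{xy}E_v = E'_1\es{z'}{x'y'}E'_v$ with $E_v,E'_v$ v-environments. If $\len{E_v} < \len{E'_v}$, then the entry $\es{z}{xy}$ lies inside $E'_v$. This is impossible, because every entry of $E'_v$ has the form $\es{w}{v}$ with $v$ a value, and $xy$ is not a value. The symmetric case is ruled out the same way, so $\len{E_v} = \len{E'_v}$. Since both decompositions are of the same environment, seen as a sequence of entries, it follows that $E_1 = E'_1$, $z = z'$, $x = x'$, $y = y'$ and $E_v = E'_v$. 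In other words, $\es{z}{xy}$ must be the rightmost entry of $E$ whose bite is an application.

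\emph{Mutual exclusion of the rules.} With the decomposition fixed, $E_v(x)$ is a uniquely determined bite. Well-naming, by \refitem{lem:crumb-invariants}{lem:invariant-well-naming} (or simply taking the rightmost binding of $x$), ensures that the lookup is unambiguous. Rule $\mone$ requires $E_v(x)$ to be of the form $\lambda x_1.\ess{b}E$ with $\ess{b}E$ a generic body. Rule $\mtwo$ requires $E_v(x) = \lambda x_1.\ess{y_1}$. In the grammar of crumbled values these are the two syntactically distinct clauses, since a variable is not a bite. So at most one side condition holds.

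\emph{Uniqueness of the reduct.} For $\mtwo$, the reduct $E_1\es{z}{E_v(y'_1)}E_v$ is a function of the data already fixed. For $\mone$, the reduct depends on the choice of fresh names in $E_v(x)^\alpha$. It is therefore unique only up to $\alpha$-equivalence, which is the standard convention and the one I would state explicitly. The immediate substitution $\sub{x_1}{y}$ is a function, so the result is otherwise determined.

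The main obstacle is the first step: the argument must use only the syntactic fact that application bites are not values. It must not rely on reachability, so that determinism holds for all crumbles.
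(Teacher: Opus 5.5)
Your proof is correct, and its core is the paper's own argument: the side conditions $E_v(x)=\lambda x_1.\ess{y_1}$ and $E_v(x)^\alpha=\lambda x_1.\ess{b}E$ are syntactically exclusive. The paper simply asserts that this mutual exclusion suffices, whereas you also prove the step it leaves implicit, namely that the decomposition $E_1\es{z}{xy}E_v$ is unique (the redex is the rightmost application entry), and you note that the $m_1$ reduct is fixed only up to the choice of fresh names. One minor slip, harmless for determinism: without well-naming, the binding in scope for $x$ is the leftmost (nearest) entry of $E_v$, as the read-back and $\sigma_{E_v}$ show, not the rightmost one.
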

\begin{proof}
	It suffices to show that $\mone$ and $\mtwo$ are mutually exclusive, but this is trivial as the trigger for $\mone$ is that $E_v(x)^\alpha = \lambda x. \es{*}{b} E$, while the trigger of $\mtwo$ is that $E_v(x) = \lambda x. \es{*}{y}$.
	\qed
\end{proof}

\begin{lemma}[Halt]
	\label{lem:cr-halt}
	Let E be a closed crumble. If $E$ is normal, then $\rback{E}$ is normal.
\end{lemma}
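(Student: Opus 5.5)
The plan is to chain together the results already established for closed normal crumbles and for the read-back of v-crumbles. Let $E$ be a closed crumble that is $\to_{Cr}$-normal.

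First, \autoref{lem:norm-crumb-implies-v} applies directly: it needs only that $E$ is closed and normal, so $E$ is a v-crumble, i.e. $E = \ess{v}E_v$ with $E_v$ a v-environment. Such an $E$ is both a crumble and a v-environment, so \autoref{lem:rback-to-value} gives that $\rback{E}$ is a $\lambda_{Plot}$-value.

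It then remains to show that a value is $\beta_v$-normal. This follows from the shape of right v-contexts. A redex $(\lambda x.t)v'$ is an application, and the only way to reach it inside a term is through a right v-context $R$. By definition, such an $R$ is either $\hole$ or of the form $tR'$ or $R'v$, so it never enters the body of an abstraction. Hence $R\plug{(\lambda x.t)v'}$ is always an application or the redex itself, and can never be a variable or an abstraction. So no value has the form $R\plug{(\lambda x.t)v'}$, and therefore no value $\to_{\beta_v}$-reduces. In particular $\rback{E}$ is $\beta_v$-normal. This is also one direction of Plotkin's harmony lemma, which we could invoke instead; the direct argument avoids depending on whether $\rback{E}$ is closed.

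The main point to check is that \autoref{lem:rback-to-value} really covers every v-crumble. Its inductive step handles $E'_v\es{x}{v}$ with $\rback{E'_v}$ a value, and substituting a value for a variable in a value yields a value. The base case $\ess{v}$ reads back to $\rback{v}$, which is an abstraction, including the special case $\lambda x.\ess{y}$ that reads back to $\lambda x.y$. Apart from this, the proof is a direct composition of the cited lemmas.
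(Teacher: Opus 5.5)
Your proof is correct and follows the paper's own chain: a normal closed crumble is a v-crumble, its read-back is a value by \autoref{lem:rback-to-value}, and values are $\beta_v$-normal. Two details differ slightly. You cite \autoref{lem:norm-crumb-implies-v} directly instead of \autoref{lem:crumble-harmony}; the latter is stated for reachable crumbles, so your citation actually fits the hypothesis that $E$ is merely closed more precisely. You also prove the value-implies-normal direction inline rather than invoking Plotkin's harmony, which avoids having to know that $\rback{E}$ is closed.
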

\begin{proof}
  By \autoref{lem:crumble-harmony} $E$ is a v-crumble. By lemma \autoref{lem:rback-to-value} $\rback{E}$ is a value. By harmony $\rback{E}$ is normal.
	\qed
\end{proof}

\crumblingimplementation*
\begin{proof}
  Since we don't have any overhead transitions we only need to prove the following properties:
	\begin{enumerate}
    \item Initialization: See \autoref{lem:init-crumble}.
		\item Principal Projection: See \autoref{lem:principal-proj}.
    \item Determinism: See \autoref{lem:cr-deterministic}.
		\item Halt: See \autoref{lem:cr-halt}.  \qed
	\end{enumerate}
\end{proof}

\section{Proofs for \autoref{sec:rcam}}

\subsection{Proofs for \autoref{sec:rcam-machine}}
\invariantsrcam*
\begin{proof}
  Freshness and closure are inherited by the environments that compose the state.

  For the rightmost invariant, we only move the pointer to the next entry of the environment with the $\sear$ transition, i.e. we only move the pointer when the entry already contains a value.
  \qed
\end{proof}

\harmonyrcam*
\begin{proof}
  \noindent{($\Rightarrow$)} Suppose $S = (E\es{z}{b}, E_v, \hist)$: then if $b = xy$ a rule between $\mmoner, \mmtwor$ would be triggered since $S$ is reachable and thus, by~\autoref{lem:harmony-rcam}, $x,y \in \dom(E_v)$ by the invariant Closure and they are bound to values by the invariant Rightmost; otherwise, if $b = v$ the $\sear$ transition would be triggered. In both cases we see a contradiciton. Therefore the active environment must be empty.

	\noindent{($\Leftarrow$)} Trivially, no $\rcr$ transition cannot be triggered as the portion of the environment to evaluate is empty.
  \qed
\end{proof}

\rcamimplementation*
\begin{proof} 
  It is easy to notice that all the properties excluding the ones related to the overhead transitions are in a one-to-one relation with the ones presented in \autoref{lem:crumb-invariants}.
  
  For the overhead transparency, $(E_1\es{z}{v}, E_v, \hist) \sear (E_1, \es{z}{v} E_v, \hist)$ and
  $\rbackm{(E_1\es{z}{v}, E_v, \hist)} = E_1\es{z}{v}E_v = \rbackm{(E_1, \es{z}{v} E_v, \hist)}$ \\

  For the overhead termination, we will show in~\autoref{lem:count-sea} that the number of $\sear$ transitions is bi-linear in the number of principal transitions and the maximal length of an environment in the initial crumble, so the $\sear$ transition terminates.
  \qed
\end{proof}

\subsection{Proofs for \autoref{sec:rcam-reversibility}} \label{app:proofs-rcam-rev}
\backwarddet*
\begin{proof}
  By cases on the top of the history stack.

  If $\hist = \langle \rangle : \hist'$ then only $\seal$ can be triggered.

  If $\hist = \langle x, y \rangle : \hist'$ then we trigger either $\mmonel$ or $\mmtwol$ depending on the content of $E_v(x)$.
  \qed
\end{proof}

\welfoundedness*
\begin{proof}
  Trivially follows from the fact that each $\rcr$ transition adds an element on the top of the stack
  and thus the stack of $S_0$ would have an infinite size, which is not possible. \qed
\end{proof}

\fwdonly*
\begin{proof}
  By induction on $\size{\rho}_b$.

  If $\size{\rho}_b = 0$ then trivially $\rho'$ is $\rho$ as it doesn't contain any backward transition.

  If $\size{\rho}_b = n + 1$ then $\rho : S_0 \rcr^m S_1 \rcrinv S_2 \rcrfull^* S'$.
  The case $m=0$ is not possible since $S_1$ would be an initial state, which has an empty stack,
  and therefore $S_1 \rcrinv S_2$ would not be possible.
  Therefore
  $\rho : S_0 \rcr^{m-1} S'_2 \rcr S_1 \rcrinv S_2 \rcrfull^* S'$ and, by~\autoref{lem:id-rev},
  $S'_2 = S_2$ and thus $\rho'' : S_0 \rcr^{m-1} S_2 \rcrfull^* S'$ is obtained by $\rho$ by
  cancelling out the transition $S'_2 \rcr S_1 \rcrinv S_2$.
  Therefore $\size{\rho''}_b < \size{\rho}_b$ and $\rho'' : S_0 \rcrfull^* S'$.
  The proof follows immediately by i.h. on $\rho''$.
  \qed
\end{proof}

\harmonyrev*
\begin{proof} 
	\noindent{($\Leftarrow$)} Trivial, since the $\hist$ is empty. \\
  \noindent{($\Rightarrow$)}
   Since $S$ is reachable, there exists a $\rho : S_0 \rcrfull^* S$ where $S_0$ is initial.
   By~\autoref{lem:rev-fwd-only} there also exists a $\rho' : S_0 \rcr^m S$.
   If $m = 0$ then $S = S_0$ and $S$ is initial.
   Otherwise $\rho' : S_0 \rcr^{m-1} S' \rcr S$ and $S \rcrinv S'$ by~\autoref{lem:invertstate},
   contradicting the hypothesis that $S$ is $\rcrinv$-normal. \qed
\end{proof}

\subsection{Proofs for \autoref{sec:rcam-complexity}}\label{app:proofs-rcam-complexity}
\lenrcam*
\begin{proof}
  By induction on the length of $\rho$. 

  For the base case, $S = \initm{E} = (E, \epsilon, \epsilon)$, $\size{\rho}_p = \size{\rho}_{sea}=0$ and so trivially $\len{E} \le \len{E} + 0 \cdot L(E) - 0$.

  For the inductive case, the search transitions reduce by one the length of the unevaluated environment, while the multiplicative transitions increases the measure of the unevaluated part by a number bound by $L(E)$.
  \qed
\end{proof}

\numsea*
\begin{proof}
  By \autoref{lem:lenrcam} for $E = \crumb{t}$ and $E_1 = \epsilon$ since $\rho$ is normalizing,
  $0 \leq \len{\crumb{t}} + \size{\rho}_p \cdot L(\crumb{t}) - \size{\rho}_{sea}$.
  By \autoref{lem:lenlen}, $\len{\crumb{t}} \leq \size{t}$ and also $L(\crumb{t}) \leq \size{t}$.
  Therefore
   \begin{alignat*}{1}
          & \size{\rho}_{sea}\\
     \leq & \len{\crumb{t}} + \size{\rho}_p \cdot L(\crumb{t})\\
     \leq & \size{t} + \size{\rho}_p \cdot \size{t}\\
        = & (\size{\rho}_p + 1)\cdot\size{t} \tag*{\qed}
   \end{alignat*}
\end{proof}

\end{document}